\documentclass[11pt,twoside]{article}
\usepackage{fancyhdr}
\pdfoutput=1
\usepackage[colorlinks,citecolor=blue,urlcolor=blue,linkcolor=blue,bookmarks=false]{hyperref}
\usepackage{amsfonts,epsfig,graphicx}
\usepackage{afterpage}
\usepackage{pgfplots}

\usepackage{tikz}
\usetikzlibrary{arrows.meta,calc,positioning}
\usepackage{nicefrac}
\usepackage{comment}
\usepackage{bbm}
\usepackage{amsmath,amssymb,amsthm} 
\usepackage{fullpage}
\usepackage[T1]{fontenc} 
\usepackage{epsf} 
\usepackage{cancel}

\usepackage{graphics} 
\usepackage{amsfonts,amsmath}
\usepackage[round]{natbib} 
\usepackage{psfrag,xspace}
\usepackage{color,etoolbox}
\usepackage{subcaption} 
\usepackage{listings}
\usepgfplotslibrary{groupplots}
\usepackage[noabbrev,capitalize, nameinlink]{cleveref}

\usepgfplotslibrary{fillbetween}

\newcommand{\E}{\mathbb{E}}
\newcommand{\R}{\mathbb{R}}

\usepackage{pgfplots}
\pgfmathdeclarefunction{poiss}{1}{%
  \pgfmathparse{(#1^x)*exp(-#1)/(x!)}%
}

\DeclareSymbolFont{bbold}{U}{bbold}{m}{n}
\DeclareSymbolFontAlphabet{\mathbbold}{bbold}

\newtheorem{theorem}{Theorem}
\newtheorem{lemma}{Lemma}
\newtheorem{corollary}{Corollary}
\newtheorem{proposition}{Proposition}

\theoremstyle{definition}

\theoremstyle{remark}
\newtheorem{assumption}{Assumption}

\newtheorem{remark}{Remark}

\definecolor{dkgreen}{rgb}{0,0.6,0}
\definecolor{gray}{rgb}{0.5,0.5,0.5}
\definecolor{mauve}{rgb}{0.58,0,0.82}
\definecolor{brightblue}{HTML}{00BFC4}

\newcommand\blfootnote[1]{%
  \begingroup
  \renewcommand\thefootnote{}%
  \footnote{#1}%
  \addtocounter{footnote}{-1}%
  \endgroup
}

\pgfplotsset{compat=1.18}
\begin{document}

\def\spacingset#1{\renewcommand{\baselinestretch}%
{#1}\small\normalsize} \spacingset{1}

\raggedbottom
\allowdisplaybreaks[1]


  \title{\vspace*{-.4in} {Distributional Discontinuity Design}}
   \author{\\ $\text{Kyle Schindl}^{\dagger}$, $\text{Larry Wasserman}^{\ddag}$ \\ \\
    $^{\dag}$Department of Statistics \\
    Iowa State University \\
    \texttt{kschindl@iastate.edu} \\ \\ 
    $^\ddag$Department of Statistics \& Data Science \\
    Machine Learning Department \\
    Carnegie Mellon University \\
    \texttt{larry@stat.cmu.edu}
\date{}
    }

  \maketitle
  \blfootnote{Accompanying \texttt{R} code is available via \href{https://github.com/kyleschindl/discontinuity-designs}{github.com/kyleschindl/discontinuity-designs}}
  \thispagestyle{empty}

\begin{abstract}
{\em Regression discontinuity and kink designs are typically analyzed through mean effects, even when treatment changes the shape of the entire outcome distribution. To address this, we introduce distributional discontinuity designs, a framework for estimating causal effects for a scalar outcome at the boundary of a discontinuity in treatment assignment. Our estimand is the Wasserstein distance between limiting conditional outcome distributions; a single scale-interpretable measure of distribution shift. We show that this weakly bounds the average treatment effect, where equality holds if and only if the treatment effect is purely additive; thus, departure from equality measures effect heterogeneity. To further encode effect heterogeneity we show that the Wasserstein distance admits an orthogonal decomposition into squared differences in $L$-moments, thereby quantifying the contribution from location, scale, skewness, and higher-order shape components to the overall distributional distance. Next, we extend this framework to distributional kink designs by evaluating the Wasserstein derivative at a policy kink; this describes the flow of probability mass through the kink. In the case of fuzzy kink designs, we derive new identification results. Finally, we apply our methods on real data by re-analyzing two natural experiments to compare our distributional effects to traditional causal estimands.}
\end{abstract}

\noindent
{\it Keywords: Regression Discontinuity Design, Regression Kink Design, Optimal Transport, Wasserstein Distance, Quantile Treatment Effects} 

\bigskip 

\section{Introduction}

First introduced by \cite{thistlethwaite1960regression} and formalized by \cite{hahn2001}, regression discontinuity design is a quasi-experimental design method that exploits discontinuities in treatment assignment to identify causal effects. The key idea is that observational units arbitrarily close to either side of the treatment discontinuity can be thought of as similar in all respects except for treatment status. Thus, in this neighborhood of the discontinuity, treatment assignment can be considered ``as good as random,'' and it is therefore reasonable to assume that exchangeability holds. Over the years, a very rich and deep literature for regression discontinuity design methods has been developed, with contributions too broad to enumerate. Modern regression discontinuity design tends to 
focus on local polynomial estimation and bandwidth selection \citep{imbens_kalyanaraman_2012}, robust bias-corrected inference \citep{calonico_2014, calonico_2019}, and a suite of diagnostic tools such as density-manipulation tests \citep{MCCRARY2008698, Cattaneo02072020}. Readers interested in the history of regression discontinuity design should refer to \cite{COOK2008636}, and to \cite{leelemieux2010}, \cite{IMBENS2008615}, and \cite{Cattaneo2022} for in-depth reviews. Notably, most traditional methods primarily focus on differences in mean effects above and below the cutoff; there is a much smaller literature considering \textit{distributional} causal effects.

Focusing exclusively on mean effects often limits both the usefulness and generalizability of an analysis, because averages can mask treatment heterogeneity across the outcome distribution. It is easy to imagine a treatment that leaves the average unchanged, but has asymmetric effects on the lower and upper tails of the outcome distribution. Consequently, over the past several years there has been an increased focus on developing causal effects that consider how the entire outcome distribution changes with respect to a treatment. In the regression discontinuity design setting \cite{FRANDSEN2012382} introduced pointwise quantile treatment effects for both sharp and fuzzy designs. Since then, there have been several extensions and generalizations of their work, to include uniform confidence bands \citep{QU20151}, bias-corrected estimators \citep{Qu02102019, CHIANG2019589}, and quantile effects in regression kink designs \citep{CHIANG2019405, Chen_Chiang_Sasaki_2020}. More recently, \cite{Jin_Zhang_Zhang_Zhou_2025} considered quantile effects under a local rank similarity condition and \cite{vandijcke2025regressiondiscontinuitydesigndistributionvalued} considered local average quantile treatment effects under distribution-valued outcomes; their framework is conceptually similar, but non-overlapping with our own as we consider scalar valued outcomes. While each of these methods is interesting and useful, in practice they can be difficult to implement and interpret. Practitioners are often not interested in specific quantile effects and considering many quantiles can be hard to summarize and communicate. Furthermore, because quantile treatment effects describe the marginal outcome distributions, in order to interpret effects at the individual level a strong rank-invariance assumption must be made. We address these problems by defining causal effects in terms of the distributional distance between conditional counterfactual distributions, which yields a single transparent measure of the overall distribution shift.

In this paper we introduce \textit{distributional discontinuity designs}, a framework for studying distributional causal effects above and below some treatment discontinuity. Specifically, we define our causal effect to be the Wasserstein distance between the limiting 
conditional distribution
of the counterfactual $Y(a) \mid  X = x$ above and below the treatment discontinuity. This provides a clean, one number summary of the entire distance between treatment groups, thereby encoding the total magnitude of the treatment effect and establishing a relative scale for all treatment effects. Using the Wasserstein distance as our causal effect yields a number of nice properties. First, we show that it is weakly greater than the average treatment effect at the cutoff, where equality holds if and only if the treatment effect is purely additive; this immediately provides a useful reference point to establish the amount of treatment heterogeneity. Second, we show that the Wasserstein distance can be decomposed into the effect on individual $L$-moments \citep{hosking1990}. This allows us to define a ``distributional $R^2$,'' i.e. the amount of the distributional distance explained by each $L$-moment, thereby providing a novel way of summarizing treatment heterogeneity by its effect on location, scale, skewness, etc. Third, since the Wasserstein distance describes the total magnitude of the treatment effect, we can use it to define the degree to which one quantile function stochastically dominates the other. In our analysis, we consider both sharp and fuzzy treatment assignments. Additionally, we extend the distributional discontinuity design framework to regression kink designs by defining our causal effect to be the Wasserstein derivative at the policy kink; this describes the \textit{flow} of probability mass at the cutoff, and neatly generalizes traditional kink designs. Notably, we also extend the work of \cite{wang2025unifiedframeworkidentificationinference} to establish identification of fuzzy local treatment effects at a policy kink.

Broadly speaking, our analysis fits into a growing literature of papers that apply optimal transport methods to causal inference problems in order to compare entire outcome distributions, rather than just averages. For example, \cite{gunsilius_synthetic_controls2023} develops distributional synthetic controls that reconstruct a treated unit’s distribution from controls. In difference-in-differences, optimal transport methods align pre and post treatment outcome distributions across groups instead of relying on mean-level parallel trends; see \cite{TorousGunsiliusRigollet+2024} for a nonlinear difference-in-differences and \cite{zhou2025geodesicdifferenceindifferences} for a geodesic variant. More generally, \cite{kurisu2025geodesiccausalinference} and \cite{schindl2025causalgeodesycounterfactualestimation} consider causal change as a movement along paths in the space of probability distributions. Interested readers can see \cite{gunsilius2025primeroptimaltransportcausal} for an extended discussion and review of the literature.

The remainder of the paper is organized as follows: In \cref{setup_section} we define all relevant notation and definitions. In \cref{ddd_section} we formally define the distributional discontinuity framework, first in the sharp treatment assignment setting. Within this section, we consider identification of the Wasserstein effect (\cref{identification_section}), interpretation of these effects and their relationship to traditional mean and quantile based effects (\cref{interpretation_section}), estimation and the limiting distribution of the Wasserstein effect (\cref{estimation_inference_section}), inference for the Wasserstein effect (\cref{inference_section}), and finally we extend our results to the fuzzy treatment discontinuity setting (\cref{fuzzy_rdd_section}). In \cref{kink_designs_section} we extend our framework to kinked distributional designs, by defining a novel causal effect in terms of the Wasserstein derivative at the policy kink and in \cref{fuzzy_kink_design_section} we extend the work of \cite{wang2025unifiedframeworkidentificationinference} to establish causal identification in fuzzy kink designs. In \cref{application_section} we apply our method to real data sets by re-analyzing several natural experiments and directly comparing the Wasserstein effect to the average treatment effect at the cutoff. Finally, in \cref{discussion_section} we provide a discussion and conclusion of distributional discontinuity designs, including limitations and directions for future work.

\section{Setup \& Notation} \label{setup_section}

Suppose we observe $Z_1, \ldots, Z_n \overset{iid}{\sim} P$ where $Z_i = (X_i, A_i, Y_i)$ where $X_i \in \mathbb{R}$ is the ``running variable,'' $A_i \in \{0, 1\}$ is the treatment assignment, and $Y_i \in \mathbb{R}$ is the observed outcome. Note that $Y$ is a scalar, unlike in \cite{vandijcke2025regressiondiscontinuitydesigndistributionvalued}, which considers a distribution-valued outcome. To begin, we assume that treatment is assigned such that
\begin{align*}
    A_i = \begin{cases}
        1 & \text{if $X_i \geq x_0$} \\
        0 & \text{if $X_i < x_0$}
    \end{cases}
\end{align*}
at cutoff $X = x_0$. In \cref{fuzzy_rdd_section} and beyond we relax this assignment rule to the fuzzy setting. We assume that
$Y$ 
has a continuous distribution.
We are interested in the conditional distribution of $Y(a)  \mid  X = x$ where $Y(a)$ are the potential outcomes under treatment assignment $A = a$. Furthermore, note that our framework does allow for the inclusion of some vector of covariates; however, since this is not required for identification and is notationally cumbersome to include, we omit such terms from our analysis. A brief discussion of conditioning on additional covariates can be found in \cref{estimation_inference_section}.

Throughout the paper, for some function $f(\cdot)$ we use the notation $\text{lim}_{x \uparrow x_0} f(x) = f(x^-_0)$ to denote the left-hand limit (i.e. $x < x_0$ and $x \to x_0$) and similarly $\text{lim}_{x \downarrow x_0} f(x) = f(x^+_0)$ to denote the right-hand limit. We say $\mathcal{P}_2(\cdot)$ is the set of all probability measures with finite second moments. We use $F_{Y \mid X}(y \! \mid \! x)$ to denote the cumulative distribution function of $Y \mid  X = x$ with the associated quantile function $Q_x(u) = F^{-1}_x(u) = \text{inf} \, \{ y :  F_{Y \mid X} (y \! \mid \! x) \geq u\}$. When we consider the limiting quantiles, we drop the notation on $x$ and simply say that $Q_{1}(u) = \text{inf} \, \{ y : \text{lim}_{x \downarrow x_0} F_{Y \mid X} (y \! \mid \! x) \geq u\}$ and $Q_{0}(u) = \text{inf} \, \{ y : \text{lim}_{x \uparrow x_0} F_{Y \mid X} (y  \!\mid \! x) \geq u\}$, where the zero and one notation is used to denote taking the limit from above or below the cutoff.

\section{Distributional Discontinuity Design} \label{ddd_section}

In this section, we introduce distributional causal effects that compare the entire outcome distribution below and above some treatment discontinuity. 
Let $P_{a \mid x}$ denote the conditional counterfactual distribution of $Y(a) \mid X = x$ and suppose that treatment is assigned in a discontinuous way, where $A = \mathbb{I}(X \geq x_0)$ for some running variable $X$ and cutoff $x_0$. Then, we define our causal estimand to be the 2-Wasserstein distance between the counterfactual treatment distributions at the treatment discontinuity $X = x_0$, i.e. 
\begin{align*}
    \Psi = W_2(P_{1 \mid x_0} , P_{0 \mid x_0}). 
\end{align*}
The 2-Wasserstein distance between any two probability distributions $P$ and $Q$ is defined as
\begin{align*}
    W^2_2(P, Q) = \underset{\gamma \in \Gamma(P, Q)}{\text{inf}} \int ||x - y||^2_2 d\gamma(x, y)
\end{align*}
where $\Gamma(P, Q)$ is the set of all couplings of $P$ and $Q$, i.e. the set of all joint distributions $\gamma$ that preserve the marginals of $P$ and $Q$ \citep{villani2009optimal}. Roughly speaking, $\gamma$ describes a way of pairing points from $P$ and $Q$ such that the total quadratic transport cost between distributions is minimized under the best possible pairing, as visualized in \cref{fig:ot-map-flow}. More intuitively, this describes the minimal transportation cost of transforming or ``morphing'' $P$ into $Q$. 
If $P$ has a density then
$W^2_2(P,Q) = \inf \E[||T(X)-X||^2_2]$
where the infimum is over all maps $T$ such that $T(X)\sim Q$. The map $T$ is called the optimal transport map.

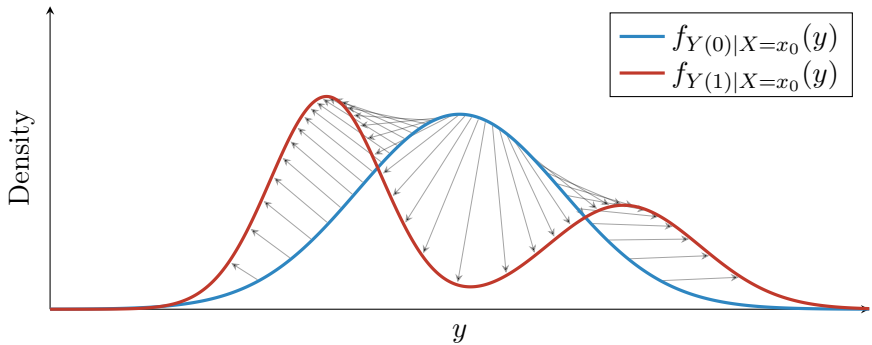
\begin{figure}[h]
\centering
\begin{tikzpicture}

\definecolor{plotlyBlue}{HTML}{2E86C1}
\definecolor{plotlyRed}{HTML}{C0392B}
\tikzset{
  dens0/.style={very thick, draw=plotlyBlue},
  dens1/.style={very thick, draw=plotlyRed},
  Ray/.style={
    line width=0.35pt,
    draw=black!90,
    draw opacity=0.40, 
    -{Stealth[length=3.0pt,width=2.8pt]},
    line cap=round,
    line join=round,
    shorten <=0.8pt,
    shorten >=1.0pt
  }
}

\begin{axis}[
  width=0.8\linewidth,
  height=0.36\linewidth,
  xmin=-4, xmax=4,
  ymin=0, ymax=0.62,
  axis lines=left,
  xlabel={$y$},
  ylabel={Density},
  ytick=\empty,
  xtick=\empty,
  legend style={draw=black, fill=none, anchor=north east, legend columns=1},
  clip=false
]

\draw[Ray] (axis cs:-1.9600,0.0584) -- (axis cs:-2.2524,0.0972);
\draw[Ray] (axis cs:-1.6449,0.1031) -- (axis cs:-2.0606,0.1673);
\draw[Ray] (axis cs:-1.4395,0.1416) -- (axis cs:-1.9327,0.2246);
\draw[Ray] (axis cs:-1.2816,0.1755) -- (axis cs:-1.8321,0.2726);
\draw[Ray] (axis cs:-1.1503,0.2059) -- (axis cs:-1.7467,0.3129);
\draw[Ray] (axis cs:-1.0364,0.2332) -- (axis cs:-1.6710,0.3467);
\draw[Ray] (axis cs:-0.9346,0.2578) -- (axis cs:-1.6017,0.3744);
\draw[Ray] (axis cs:-0.8416,0.2800) -- (axis cs:-1.5369,0.3967);
\draw[Ray] (axis cs:-0.7554,0.2999) -- (axis cs:-1.4753,0.4137);
\draw[Ray] (axis cs:-0.6745,0.3178) -- (axis cs:-1.4158,0.4257);
\draw[Ray] (axis cs:-0.5978,0.3337) -- (axis cs:-1.3576,0.4329);
\draw[Ray] (axis cs:-0.5244,0.3477) -- (axis cs:-1.3001,0.4353);
\draw[Ray] (axis cs:-0.4538,0.3599) -- (axis cs:-1.2425,0.4330);
\draw[Ray] (axis cs:-0.3853,0.3704) -- (axis cs:-1.1844,0.4259);
\draw[Ray] (axis cs:-0.3186,0.3792) -- (axis cs:-1.1249,0.4140);
\draw[Ray] (axis cs:-0.2533,0.3863) -- (axis cs:-1.0633,0.3971);
\draw[Ray] (axis cs:-0.1891,0.3919) -- (axis cs:-0.9986,0.3751);
\draw[Ray] (axis cs:-0.1257,0.3958) -- (axis cs:-0.9295,0.3476);
\draw[Ray] (axis cs:-0.0627,0.3982) -- (axis cs:-0.8540,0.3143);
\draw[Ray] (axis cs:0.0000,0.3989) -- (axis cs:-0.7691,0.2746);
\draw[Ray] (axis cs:0.0627,0.3982) -- (axis cs:-0.6695,0.2278);
\draw[Ray] (axis cs:0.1257,0.3958) -- (axis cs:-0.5444,0.1729);
\draw[Ray] (axis cs:0.1891,0.3919) -- (axis cs:-0.3652,0.1095);
\draw[Ray] (axis cs:0.2533,0.3863) -- (axis cs:-0.0234,0.0499);
\draw[Ray] (axis cs:0.3186,0.3792) -- (axis cs:0.4558,0.0691);
\draw[Ray] (axis cs:0.3853,0.3704) -- (axis cs:0.7378,0.1103);
\draw[Ray] (axis cs:0.4538,0.3599) -- (axis cs:0.9347,0.1437);
\draw[Ray] (axis cs:0.5244,0.3477) -- (axis cs:1.0942,0.1695);
\draw[Ray] (axis cs:0.5978,0.3337) -- (axis cs:1.2334,0.1888);
\draw[Ray] (axis cs:0.6745,0.3178) -- (axis cs:1.3610,0.2022);
\draw[Ray] (axis cs:0.7554,0.2999) -- (axis cs:1.4820,0.2102);
\draw[Ray] (axis cs:0.8416,0.2800) -- (axis cs:1.6000,0.2128);
\draw[Ray] (axis cs:0.9346,0.2578) -- (axis cs:1.7180,0.2102);
\draw[Ray] (axis cs:1.0364,0.2332) -- (axis cs:1.8390,0.2022);
\draw[Ray] (axis cs:1.1503,0.2059) -- (axis cs:1.9666,0.1888);
\draw[Ray] (axis cs:1.2816,0.1755) -- (axis cs:2.1059,0.1695);
\draw[Ray] (axis cs:1.4395,0.1416) -- (axis cs:2.2654,0.1436);
\draw[Ray] (axis cs:1.6449,0.1031) -- (axis cs:2.4628,0.1098);
\draw[Ray] (axis cs:1.9600,0.0584) -- (axis cs:2.7506,0.0656);


\addplot[dens0, domain=-4:4, samples=450]
  {1/sqrt(2*pi)*exp(-(x^2)/2)};
\addlegendentry{$f_{Y(0)\mid X=x_0}(y)$}

\addplot[dens1, domain=-4:4, samples=900]
  {0.6*(1/(0.55*sqrt(2*pi))*exp(-((x+1.3)^2)/(2*(0.55^2))))
  +0.4*(1/(0.75*sqrt(2*pi))*exp(-((x-1.6)^2)/(2*(0.75^2))))};
\addlegendentry{$f_{Y(1)\mid X=x_0}(y)$}

\end{axis}
\end{tikzpicture}
\caption{Optimal transport maps between counterfactual distributions.}
\label{fig:ot-map-flow}
\end{figure}

In the case of a treatment discontinuity, $\Psi$ measures how far probability mass must be moved in order to transform the untreated distribution at the cutoff into the treated distribution. Thus, it measures differences not only in the means, but also higher moment effects such as the variance or skewness. For this reason, the Wasserstein effect can detect and quantify complex, higher order treatment effects that the traditional regression discontinuity design estimand would miss. For example, in \cref{distributional_diff}, we can see that above the treatment discontinuity at $x = 0$ not only does the mean change, but the variance and overall distributional shape does as well. Focusing solely on the difference in means would not adequately describe the full effect of treatment here.

As a simple motivating example, suppose that there is a treatment discontinuity at $x_0=0$
and that
\begin{align*}
    Y(0) \mid X = x &\sim N(0, 1) \hspace{-1.15in} && \text{for all $x$, and}\\ 
    Y(1) \mid X = x &\sim N(0, 2^2 )  \hspace{-1.15in} &&\text{for all $x$}. 
\end{align*}
Then, it is clear that the average treatment effect at the cutoff 
$\E[Y(1) \mid X=x_0] - \E[Y(0) \mid X=x_0]$
is zero, as there is no change in location above and below the cutoff. However, the standard deviation doubles. Researchers who only consider the average treatment effect at the cutoff would conclude there was no treatment effect, but in reality, a doubling of the standard deviation could have large practical implications. Similarly, as discussed in \cite{kim2024causaleffectsbaseddistributional}, treatment effects could easily take a multimodal structure where $Y(0) = 0$ almost surely, but $Y(1) = 1$ or $Y(1) = - 1$ with equal probability. In this setting, the average treatment effect is again zero, but treatment harms half the population and benefits the other half. Fortunately, both of these causal effects can be detected by $\Psi$. For example, in the first setting with two Gaussians, it can be shown that $\Psi = |\sigma_1 - \sigma_2| = 1$, indicating a sharp difference in the outcome distributions. In \cref{interpretation_section}, we provide more guidance on the interpretation of the Wasserstein effect, and its comparison to the average treatment effect at the cutoff. Now that we have defined our effect of interest, we establish the conditions under which it is causally identified.

\begin{figure}[h]
    \centering
\definecolor{plotlyBlue}{HTML}{2E86C1}
\definecolor{plotlyRed}{HTML}{C0392B}

\begin{tikzpicture}
\begin{axis}[
  view={-45}{25},
  width=0.8\linewidth, height=0.68\linewidth,
  y dir=reverse,
  x dir=reverse,
    axis line style={draw=none},
  xtick pos=bottom,               
  ytick pos=left,                 
  ztick pos=left,   
  xtick={-10,-5,0,5},
xticklabels={5,0,-5,-10},
  tick align=outside,    
  tick style={draw=none},
  xlabel={$y$}, ylabel={$x$}, zlabel={Conditional Density},
  xmajorgrids, ymajorgrids, zmajorgrids,
  xminorgrids, yminorgrids, zminorgrids,
  grid style={line width=0.1pt, draw=gray!20, opacity=0.5},
  axis on top
]
 \addplot3[very thin, solid, draw=plotlyRed, mark=none, forget plot]
  table[col sep=comma, header=true, trim cells=true, x index=0, y index=1, z index=2]
  {files/curves/above/s001.csv};
\addplot3[very thin, solid, draw=plotlyRed, mark=none, forget plot]
  table[col sep=comma, header=true, trim cells=true, x index=0, y index=1, z index=2]
  {files/curves/above/s002.csv};
\addplot3[very thin, solid, draw=plotlyRed, mark=none, forget plot]
  table[col sep=comma, header=true, trim cells=true, x index=0, y index=1, z index=2]
  {files/curves/above/s003.csv};
\addplot3[very thin, solid, draw=plotlyRed, mark=none, forget plot]
  table[col sep=comma, header=true, trim cells=true, x index=0, y index=1, z index=2]
  {files/curves/above/s004.csv};
\addplot3[very thin, solid, draw=plotlyRed, mark=none, forget plot]
  table[col sep=comma, header=true, trim cells=true, x index=0, y index=1, z index=2]
  {files/curves/above/s005.csv};
\addplot3[very thin, solid, draw=plotlyRed, mark=none, forget plot]
  table[col sep=comma, header=true, trim cells=true, x index=0, y index=1, z index=2]
  {files/curves/above/s006.csv};
\addplot3[very thin, solid, draw=plotlyRed, mark=none, forget plot]
  table[col sep=comma, header=true, trim cells=true, x index=0, y index=1, z index=2]
  {files/curves/above/s007.csv};
\addplot3[very thin, solid, draw=plotlyRed, mark=none, forget plot]
  table[col sep=comma, header=true, trim cells=true, x index=0, y index=1, z index=2]
  {files/curves/above/s008.csv};
\addplot3[very thin, solid, draw=plotlyRed, mark=none, forget plot]
  table[col sep=comma, header=true, trim cells=true, x index=0, y index=1, z index=2]
  {files/curves/above/s009.csv};
\addplot3[very thin, solid, draw=plotlyRed, mark=none, forget plot]
  table[col sep=comma, header=true, trim cells=true, x index=0, y index=1, z index=2]
  {files/curves/above/s010.csv};
\addplot3[very thin, solid, draw=plotlyRed, mark=none, forget plot]
  table[col sep=comma, header=true, trim cells=true, x index=0, y index=1, z index=2]
  {files/curves/above/s011.csv};
\addplot3[very thin, solid, draw=plotlyRed, mark=none, forget plot]
  table[col sep=comma, header=true, trim cells=true, x index=0, y index=1, z index=2]
  {files/curves/above/s012.csv};
\addplot3[very thin, solid, draw=plotlyRed, mark=none, forget plot]
  table[col sep=comma, header=true, trim cells=true, x index=0, y index=1, z index=2]
  {files/curves/above/s013.csv};
\addplot3[very thin, solid, draw=plotlyRed, mark=none, forget plot]
  table[col sep=comma, header=true, trim cells=true, x index=0, y index=1, z index=2]
  {files/curves/above/s014.csv};
\addplot3[very thin, solid, draw=plotlyRed, mark=none, forget plot]
  table[col sep=comma, header=true, trim cells=true, x index=0, y index=1, z index=2]
  {files/curves/above/s015.csv};
\addplot3[very thin, solid, draw=plotlyRed, mark=none, forget plot]
  table[col sep=comma, header=true, trim cells=true, x index=0, y index=1, z index=2]
  {files/curves/above/s016.csv};
\addplot3[very thin, solid, draw=plotlyRed, mark=none, forget plot]
  table[col sep=comma, header=true, trim cells=true, x index=0, y index=1, z index=2]
  {files/curves/above/s017.csv};
\addplot3[very thin, solid, draw=plotlyRed, mark=none, forget plot]
  table[col sep=comma, header=true, trim cells=true, x index=0, y index=1, z index=2]
  {files/curves/above/s018.csv};
\addplot3[very thin, solid, draw=plotlyRed, mark=none, forget plot]
  table[col sep=comma, header=true, trim cells=true, x index=0, y index=1, z index=2]
  {files/curves/above/s019.csv};
\addplot3[very thin, solid, draw=plotlyRed, mark=none, forget plot]
  table[col sep=comma, header=true, trim cells=true, x index=0, y index=1, z index=2]
  {files/curves/above/s020.csv};

\addplot3[very thin, solid, draw=plotlyBlue, mark=none, forget plot]
  table[col sep=comma, header=true, trim cells=true, x index=0, y index=1, z index=2]
  {files/curves/below/s001.csv};
\addplot3[very thin, solid, draw=plotlyBlue, mark=none, forget plot]
  table[col sep=comma, header=true, trim cells=true, x index=0, y index=1, z index=2]
  {files/curves/below/s002.csv};
\addplot3[very thin, solid, draw=plotlyBlue, mark=none, forget plot]
  table[col sep=comma, header=true, trim cells=true, x index=0, y index=1, z index=2]
  {files/curves/below/s003.csv};
\addplot3[very thin, solid, draw=plotlyBlue, mark=none, forget plot]
  table[col sep=comma, header=true, trim cells=true, x index=0, y index=1, z index=2]
  {files/curves/below/s004.csv};
\addplot3[very thin, solid, draw=plotlyBlue, mark=none, forget plot]
  table[col sep=comma, header=true, trim cells=true, x index=0, y index=1, z index=2]
  {files/curves/below/s005.csv};
\addplot3[very thin, solid, draw=plotlyBlue, mark=none, forget plot]
  table[col sep=comma, header=true, trim cells=true, x index=0, y index=1, z index=2]
  {files/curves/below/s006.csv};
\addplot3[very thin, solid, draw=plotlyBlue, mark=none, forget plot]
  table[col sep=comma, header=true, trim cells=true, x index=0, y index=1, z index=2]
  {files/curves/below/s007.csv};
\addplot3[very thin, solid, draw=plotlyBlue, mark=none, forget plot]
  table[col sep=comma, header=true, trim cells=true, x index=0, y index=1, z index=2]
  {files/curves/below/s008.csv};
\addplot3[very thin, solid, draw=plotlyBlue, mark=none, forget plot]
  table[col sep=comma, header=true, trim cells=true, x index=0, y index=1, z index=2]
  {files/curves/below/s009.csv};
\addplot3[very thin, solid, draw=plotlyBlue, mark=none, forget plot]
  table[col sep=comma, header=true, trim cells=true, x index=0, y index=1, z index=2]
  {files/curves/below/s010.csv};
\addplot3[very thin, solid, draw=plotlyBlue, mark=none, forget plot]
  table[col sep=comma, header=true, trim cells=true, x index=0, y index=1, z index=2]
  {files/curves/below/s011.csv};
\addplot3[very thin, solid, draw=plotlyBlue, mark=none, forget plot]
  table[col sep=comma, header=true, trim cells=true, x index=0, y index=1, z index=2]
  {files/curves/below/s012.csv};
\addplot3[very thin, solid, draw=plotlyBlue, mark=none, forget plot]
  table[col sep=comma, header=true, trim cells=true, x index=0, y index=1, z index=2]
  {files/curves/below/s013.csv};
\addplot3[very thin, solid, draw=plotlyBlue, mark=none, forget plot]
  table[col sep=comma, header=true, trim cells=true, x index=0, y index=1, z index=2]
  {files/curves/below/s014.csv};
\addplot3[very thin, solid, draw=plotlyBlue, mark=none, forget plot]
  table[col sep=comma, header=true, trim cells=true, x index=0, y index=1, z index=2]
  {files/curves/below/s015.csv};
\addplot3[very thin, solid, draw=plotlyBlue, mark=none, forget plot]
  table[col sep=comma, header=true, trim cells=true, x index=0, y index=1, z index=2]
  {files/curves/below/s016.csv};
\addplot3[very thin, solid, draw=plotlyBlue, mark=none, forget plot]
  table[col sep=comma, header=true, trim cells=true, x index=0, y index=1, z index=2]
  {files/curves/below/s017.csv};
\addplot3[very thin, solid, draw=plotlyBlue, mark=none, forget plot]
  table[col sep=comma, header=true, trim cells=true, x index=0, y index=1, z index=2]
  {files/curves/below/s018.csv};
\addplot3[very thin, solid, draw=plotlyBlue, mark=none, forget plot]
  table[col sep=comma, header=true, trim cells=true, x index=0, y index=1, z index=2]
  {files/curves/below/s019.csv};
\addplot3[very thin, solid, draw=plotlyBlue, mark=none, forget plot]
  table[col sep=comma, header=true, trim cells=true, x index=0, y index=1, z index=2]
  {files/curves/below/s020.csv};

\pgfplotsextra{%
  \pgfkeysgetvalue{/pgfplots/xmin}{\xmin}%
  \pgfkeysgetvalue{/pgfplots/xmax}{\xmax}%
  \pgfkeysgetvalue{/pgfplots/ymin}{\ymin}%
  \pgfkeysgetvalue{/pgfplots/ymax}{\ymax}%
  \pgfkeysgetvalue{/pgfplots/zmin}{\zmin}%
  \pgfkeysgetvalue{/pgfplots/zmax}{\zmax}%

  \draw[black, dashed, line width=0.5pt, on layer=axis foreground]
    (axis cs:\xmin,-0.25,\zmin) -- (axis cs:\xmax,-0.25,\zmin);

  \def\xRight{\xmin} 
  \draw[black, dashed, line width=0.5pt, on layer=axis foreground]
    (axis cs:\xRight,-0.25,\zmin) -- (axis cs:\xRight,-0.25,\zmax);
}%
  \end{axis}
\end{tikzpicture}

    \caption{Counterfactual distributions above and below a treatment discontinuity}
    \label{distributional_diff}
\end{figure}
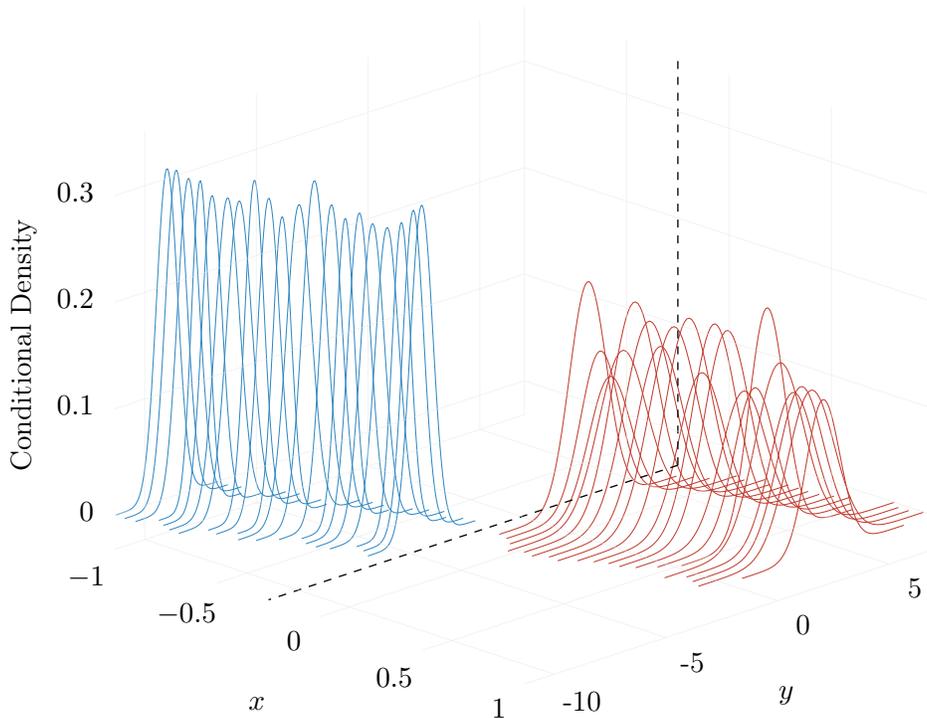

\subsection{Identification} \label{identification_section}

In this section, we discuss the assumptions required for causal identification of the distributional effect $\Psi$. These conditions are nearly identical to the identification requirements established in \cite{FRANDSEN2012382} for quantile treatment effects in discontinuity designs, since in one dimension the Wasserstein distance can be expressed as the $L^2$ distance between quantile functions \citep{vallender1974} --- the only additional assumption required is finite second moments of $P_{a \mid x}$ in order for the Wasserstein distance to be well-defined. For completeness, we still outline each assumption required. Let $F_{Y(a) \mid X}(y \! \mid \! x)$ be the cumulative distribution function of $P_{a \mid x}$. Then, for a sharp treatment assignment $A = \mathbb{I}(X \geq x_0)$, we require:
\begin{enumerate}
    \item[$(i)$] \textit{Consistency:} $Y = Y(a)$ if $A = a$ for $a \in \{0, 1\}$.
    
    \item[$(ii)$] \textit{Continuity:} For $a \in \{0, 1\}$ and all $y \in \mathbb{R}$, $\text{lim}_{x \to x_0} F_{Y(a) \mid X}(y \! \mid \! x) = F_{Y(a) \mid X}(y \! \mid \! x_0)$.
    
    \item[$(iii)$] \textit{Density at threshold:} $f_X(x)$ is differentiable at $x = x_0$ and $\text{lim}_{x \to x_0} f_X(x) > 0$.
    \item[$(iv)$] \textit{Regularity:} $P_{a \mid x} \in \mathcal{P}_2(\mathbb{R})$ for $a \in \{0, 1\}$.
\end{enumerate}

Assumption $(i)$ rules out any interference or spillover effects, where the treatment of one observation affects the outcomes of another. Assumption $(ii)$ ensures that as we approach the cutoff the cumulative distribution functions of the counterfactuals have well-defined limits. This rules out sudden jumps or discontinuities in the outcome distribution that could be unrelated to the treatment assignment. Assumption $(iii)$ guarantees that there are observations arbitrarily close to the cutoff on both sides, which is necessary for well-defined limiting distributions. Finally, assumption $(iv)$ ensures that the Wasserstein distance is well defined by requiring the counterfactual distributions to have a finite second moment. With these assumptions defined, we now establish causal identification in the following lemma.
\begin{lemma}[Identification] \label{identification_theorem}
    Under assumptions $(i)$-$(iv)$ by \cite{FRANDSEN2012382} it follows that
    \begin{align*}
        \Psi = \left\{\int^1_0 \left( Q_1(u) - Q_0(u) \right)^2 du \right\}^{1/2}
\end{align*}
where $Q_{1}(u) = \text{inf} \, \{ y : \text{lim}_{x \downarrow x_0} F_{Y \mid X} (y \! \mid \! x) \geq u\}$ and $Q_{0}(u) = \text{inf} \, \{ y : \text{lim}_{x \uparrow x_0} F_{Y \mid X} (y \! \mid \! x) \geq u\}$ are the limiting conditional quantiles of $Y \mid X = x$ above and below the cutoff.
\end{lemma}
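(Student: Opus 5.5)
The argument has two steps. First, in one dimension the Wasserstein distance reduces to quantile functions. Second, the quantile functions of the counterfactual distributions at $x_0$ are identified from the observed data by the continuity argument of \cite{FRANDSEN2012382}.

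\emph{Step 1 (one-dimensional representation).} For $P,Q\in\mathcal{P}_2(\mathbb{R})$ with quantile functions $F_P^{-1},F_Q^{-1}$, take $U\sim\mathrm{Unif}(0,1)$. The pair $(F_P^{-1}(U),F_Q^{-1}(U))$ is a coupling of $P$ and $Q$, namely the comonotone coupling. It is optimal for the quadratic cost, since in one dimension the cost $(x-y)^2$ is submodular, so the monotone rearrangement minimizes $\E[(X-Y)^2]$ among couplings with fixed marginals. This gives
\begin{align*}
W_2^2(P,Q)=\int_0^1\big(F_P^{-1}(u)-F_Q^{-1}(u)\big)^2\,du,
\end{align*}
which is the classical result of \cite{vallender1974}. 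Assumption $(iv)$ makes both sides finite. Applying this with $P=P_{1\mid x_0}$ and $Q=P_{0\mid x_0}$ gives $\Psi^2=\int_0^1 (Q^{*}_1(u)-Q^{*}_0(u))^2du$, where $Q^*_a$ is the quantile function of $F_{Y(a)\mid X}(\cdot\mid x_0)$.

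\emph{Step 2 (identification of the counterfactual CDFs).} Fix $y$. For $x>x_0$, the sharp design gives $A=1$, so consistency $(i)$ yields $F_{Y\mid X}(y\mid x)=F_{Y(1)\mid X}(y\mid x)$. Assumption $(iii)$ guarantees that the conditional law is well defined for $x$ arbitrarily close to $x_0$ on both sides. Taking $x\downarrow x_0$ and using continuity $(ii)$,
\begin{align*}
\lim_{x\downarrow x_0}F_{Y\mid X}(y\mid x)=F_{Y(1)\mid X}(y\mid x_0),
\end{align*}
and symmetrically $\lim_{x\uparrow x_0}F_{Y\mid X}(y\mid x)=F_{Y(0)\mid X}(y\mid x_0)$. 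The limits agree for every $y$, so their generalized inverses agree for every $u\in(0,1)$. Hence $Q_1=Q_1^*$ and $Q_0=Q_0^*$, with $Q_1,Q_0$ the limiting quantiles in the statement. Substituting into Step 1 and taking square roots gives the claim.

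\emph{Main obstacle.} The delicate point is Step 2's passage from pointwise CDF limits to quantile functions. The limit is taken in the CDF before inverting, so the generalized inverse is applied to the limiting function itself, and no convergence of quantiles $Q_x\to Q_1$ is needed. The only check is that the limiting function is a genuine CDF, meaning right-continuous with the correct tail limits. This holds because it equals $F_{Y(a)\mid X}(\cdot\mid x_0)$, which is the distribution function of $P_{a\mid x_0}$. I would make this identification explicit, citing \cite{FRANDSEN2012382}, rather than arguing convergence of quantiles.
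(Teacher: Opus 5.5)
Your proposal is correct and follows the same route as the paper. The paper gives no separate proof; it justifies the lemma by combining the one-dimensional quantile representation of $W_2$ from \cite{vallender1974} with the sharp-design identification of the limiting counterfactual CDFs from \cite{FRANDSEN2012382}, using $(iv)$ only for finiteness, and your explicit choice to take limits in the CDF before inverting makes precise a step the paper leaves implicit.
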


By \cref{identification_theorem}, we can see that the Wasserstein effect $\Psi$ may be expressed as the squared difference in the $u$-th quantile below and above the cutoff, integrated across the entire distribution. This highlights the fact that $\Psi$ measures distributional changes of any form, whether it be changes in location, scale, shape, etc. In the next section, we explore how $\Psi$ can be interpreted, and compare it to the traditional regression discontinuity design estimand $\tau$.
We note that the reduction to quantiles only holds because $Y$ is scalar;
when $Y$ is multivariate the estimation of the Wasserstein effect
is more complicated and will be dealt with in future work.

\subsection{Interpretation} \label{interpretation_section}

In this section, we build intuition for how to interpret the Wasserstein effect, $\Psi$. In particular, we establish an inequality that directly compares $\Psi$ to the average treatment effect at the cutoff and the conditions under which the effects are equal, we demonstrate how the direction of the effect at each quantile can be neatly visualized, we decompose the distributional effect into individual moment effects, and we define a novel measure of effect magnitude by considering the degree to which $Q_1$ stochastically dominates $Q_0$.

\subsubsection{Relation to the Average Treatment Effect} \label{ate_comparison_sec}
In traditional regression discontinuity designs, practitioners are typically interested in estimating the difference in means above and below the treatment cutoff, defined by
\begin{align*}
    \tau = \mathbb{E}[Y(1) - Y(0) \mid X = x_0].
\end{align*}
Notably, this can be interpreted through a distributional lens; $\tau$ is simply measuring the distance between the means of the counterfactual distributions at the cutoff. In fact, if the treatment effect is purely additive (such that it only impacts the distribution means) then it can be shown that these two causal effects are equal. In the following theorem, we establish an inequality between the Wasserstein and mean effects at the cutoff that shows $\Psi$ must be weakly greater than $|\tau|$. Furthermore, we establish the condition under which these effects are identical.

\begin{theorem}[Effect Inequality] \label{effect_inequality}
    The Wasserstein effect upper bounds the average treatment effect at the cutoff, i.e.
    \begin{align*}
        |\tau| \leq \Psi.
    \end{align*}
    Furthermore, equality holds if and only if the treatment effect is purely additive; that is, if for some $\delta \in \mathbb{R}$ and for all $u \in (0, 1)$ that $Q_1(u) = Q_0(u) + \delta$.
\end{theorem}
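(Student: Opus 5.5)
We work with the quantile representation from \cref{identification_theorem}, writing $D(u) = Q_1(u) - Q_0(u)$ for $u \in (0,1)$. First we express $\tau$ in terms of $D$. By consistency and continuity (assumptions $(i)$--$(ii)$), the counterfactual distribution $P_{a \mid x_0}$ coincides with the one-sided limit of the observed conditional distribution. Its quantile function is therefore $Q_a$. Because $P_{a\mid x_0}\in\mathcal{P}_2(\mathbb{R})$ by $(iv)$, its mean is finite, and $U\sim\mathrm{Unif}(0,1)$ gives $Q_a(U)\sim P_{a\mid x_0}$. Hence
\begin{align*}
\mathbb{E}[Y(a)\mid X=x_0] = \int_0^1 Q_a(u)\,du, \qquad \tau = \int_0^1 D(u)\,du .
\end{align*}
We also note that $\tau$ depends only on the two marginals and not on the joint law of $(Y(0),Y(1))$, by linearity of expectation.

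The inequality then follows from Cauchy--Schwarz, or equivalently Jensen, on $L^2(0,1)$:
\begin{align*}
|\tau|^2 = \left(\int_0^1 D(u)\,du\right)^2 \le \int_0^1 D(u)^2\,du = \Psi^2 .
\end{align*}
All quantities are finite because $Q_0,Q_1\in L^2(0,1)$ by $(iv)$. A cleaner way to see the gap is the variance identity
\begin{align*}
\Psi^2 - \tau^2 = \int_0^1 \bigl(D(u)-\tau\bigr)^2\,du = \mathrm{Var}\bigl(D(U)\bigr) \ge 0 .
\end{align*}
This identity also previews the later interpretation of $\Psi^2-\tau^2$ as a measure of heterogeneity.

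For the equality case, the ``if'' direction is immediate. If $Q_1 = Q_0+\delta$, then $D\equiv\delta$, so $\tau=\delta$ and $\Psi=|\delta|$.

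For the ``only if'' direction, equality forces $\int_0^1 (D-\tau)^2\,du = 0$. This gives $D(u)=\tau$ for Lebesgue-almost every $u$. Upgrading this to \emph{all} $u\in(0,1)$ is the only delicate step. We would use that quantile functions are nondecreasing and left-continuous. Suppose $D(u_0)\neq\tau$ at some $u_0$. By left-continuity of both $Q_0$ and $Q_1$, $D$ is left-continuous, so $D(u)\neq\tau$ on some interval $(u_0-\epsilon,u_0]$, which has positive measure. This is a contradiction, so $D\equiv\tau$ everywhere. Taking $\delta=\tau$ completes the proof.
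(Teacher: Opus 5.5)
Your proof is correct and follows the same route as the paper: write $\tau=\int_0^1 \Delta Q(u)\,du$, apply Cauchy--Schwarz for the inequality, and use the identity $\Psi^2=\tau^2+\mathbb{V}(\Delta Q(U))$ for the converse. Your left-continuity argument, which upgrades $D=\tau$ almost everywhere to $D(u)=\tau$ for every $u\in(0,1)$, supplies a step the paper's proof leaves implicit.
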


\cref{effect_inequality} shows that the jump, or discontinuity, in the outcome distributions at the cutoff is always at least as large as the jump in the means. Intuitively, we can think of the relationship between these effects by framing both in terms of the quantile effect function
\begin{align*}
    \Delta Q(u) = Q_1(u) - Q_0(u).
\end{align*}
Suppose that $U \sim \text{Uniform}(0, 1)$. Then, it is clear that $\tau = \int^1_0 \Delta Q(u) du = \mathbb{E}[\Delta Q(U)]$ is simply the average (or signed area) of the quantile effect curve. Meanwhile, we can see that the Wasserstein effect can equivalently be written as $\Psi^2 = \int^1_0 \Delta Q(u)^2 du = \mathbb{E}[\Delta Q(U)^2]$, i.e. the area under the squared quantile effect curve. Immediately, this yields the variance decomposition
\begin{align*}
    \Psi^2 = \tau^2 + \mathbb{V}\! \left(\Delta Q(U) \right).
\end{align*}
Consequently, we can see that $\Psi$ captures the shift in location (as measured by $\tau$) and the heterogeneity around that shift (as measured by the variance of $\Delta Q(U)$). In fact, we can use this decomposition to define a heterogeneity index; let
\begin{align*}
    \gamma := \frac{\mathbb{V}\! \left(\Delta Q(U) \right)}{\Psi^2} = 1 - \left( \frac{|\tau|}{\Psi} \right)^2.
\end{align*}
Then, it is clear that $\gamma \in [0, 1]$. When $\gamma = 0$, the treatment effect is purely additive. Meanwhile, when $\gamma = 1$ the difference in means explains none of the distributional distance.

\subsubsection{Visualizing Quantile Effect Curves} \label{quantile_effect_curve_section}

Considering $\tau$ by itself can conceal important differences: positive and negative quantile effects may cancel out in the average, thereby leaving a small average treatment effect. This problem is readily addressed by the Wasserstein effect. Here, no treatment effect is lost or canceled out since $\Psi$ aggregates these effect differences across all quantiles. However, considering $\Psi$ in isolation can be restrictive since it doesn't describe the direction of the effect at each quantile (e.g. is treatment harmful or helpful). This concern is easily addressed by plotting the quantile effect curve $\Delta Q(u)$ across $u \in (0, 1)$ which lets us directly visualize quantile-by-quantile contributions to the Wasserstein effect. In this sense, our analysis neatly complements existing methods for studying quantile treatment effects, such as in \cite{FRANDSEN2012382}, \cite{QU20151}, \cite{Qu02102019}, and \cite{CHIANG2019589}. In the left panel of \cref{quantile_effect_curves}, we can see two curves, both of which have an average treatment effect of zero. The height at each quantile shows the individual contribution to the overall effect; notably, one effect curve is nearly constant, suggesting a null treatment effect. However, the other curve has a significant negative treatment effect in the left tail of the distribution that is masked by a positive effect near the median. This juxtaposition between effect curves highlights the importance of considering distributional effects over traditional difference-in-means analyses.   Furthermore, we can also neatly visualize the contribution of each quantile to the Wasserstein effect via the contribution function $u \mapsto \frac{1}{\Psi^2}\Delta Q(u)^2$, as shown in the right panel of \cref{quantile_effect_curves}. Here, we can see that most of the Wasserstein effect in the skewed distribution is driven by the left tail. Meanwhile, the null effect curve has nearly a uniformly distributed contribution plot across $u \in (0, 1)$.
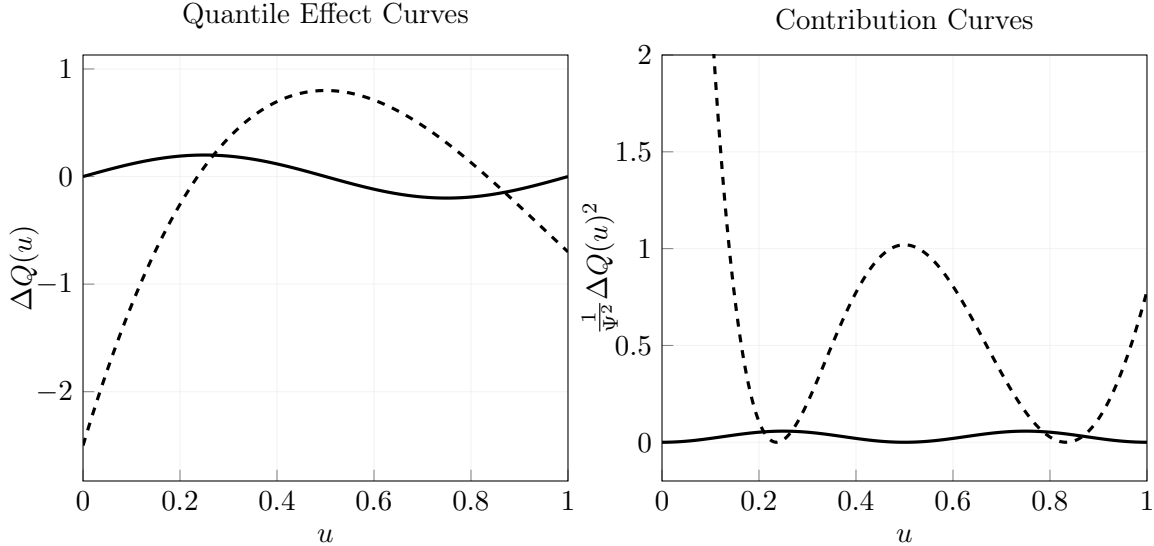
\begin{figure}
    \centering
\begin{tikzpicture}
  \pgfmathsetmacro{\delta}{0}  
  \pgfmathsetmacro{\epsA}{0.20}   
  \pgfmathsetmacro{\psiASq}{\delta*\delta + 0.5*\epsA*\epsA} 

  \pgfmathsetmacro{\alphaB}{0.90} 
  \pgfmathsetmacro{\betaB}{0.30}  
  \pgfmathsetmacro{\psiBSq}{
    \delta*\delta + (4.0/45.0)*\alphaB*\alphaB
    + 0.5*\betaB*\betaB + 2.0*\alphaB*\betaB*(2.0/(pi*pi))
  }

    \pgfmathsetmacro{\alphaA}{2.8} 
  \pgfmathsetmacro{\psiASq}{ (4.0/45.0)*\alphaA*\alphaA }

  \pgfmathsetmacro{\alphaB}{2.4}
  \pgfmathsetmacro{\betaB}{0.9}  
  \pgfmathsetmacro{\psiBSq}{ (4.0/45.0)*\alphaB*\alphaB + (1.0/7.0)*\betaB*\betaB }

  \begin{groupplot}[
      group style={group size=2 by 1, horizontal sep=1.25cm},
      width=0.515\textwidth,
      height=0.465\textwidth,
      grid=both,
      grid style={line width=0.1pt, draw=gray!20, opacity=0.5},
      legend cell align=left,
      legend image post style={sharp plot,-, very thick},
      xtick pos=bottom,
      ytick pos=left,
      domain=0:1, samples=300,
      xmin=0, xmax=1,
      xlabel={$u$},
      ylabel style={yshift=-0.3cm}
    ]

    \nextgroupplot[
      title={Quantile Effect Curves},
      ylabel={$\Delta Q(u)$},
      legend pos=south east
    ]
      \addplot[very thick] ({x}, {\delta + \epsA * sin(deg(2*pi*x))});

      \addplot[very thick, dashed]
        ({x}, { \alphaB*( 1.0/3.0 - 4*(x-0.5)^2 ) + \betaB*(2*x-1)^3 });

    \nextgroupplot[
      title={Contribution Curves},
      ymax=2,
      ylabel={$\frac{1}{\Psi^2}\Delta Q(u)^2$},
      legend pos=north east
    ]
      \addplot[very thick]
        ({x}, { ( (\delta + \epsA * sin(deg(2*pi*x)))^2 ) / (\psiASq) });

      \addplot[very thick, dashed]
        ({x}, { ( \alphaB*( 1.0/3.0 - 4*(x-0.5)^2 ) + \betaB*(2*x-1)^3 )^2 / (\psiBSq) });

  \end{groupplot}
\end{tikzpicture}
    \caption{Quantile effect curves (left panel) and contribution curves (right panel) for a hypothetical null effect curve (solid) and a skewed effect curve (dashed). Both effect curves are defined such that the average treatment effect $\tau = 0$. }
    \label{quantile_effect_curves}
\end{figure}

\subsubsection{Direction of the Treatment Effect}

Visualizing the quantile effect curve is a useful exercise and can help practitioners better interpret the Wasserstein effect, however, it can leave some ambiguity in terms of the overall direction of the treatment effect. In this section, we define a novel one-number summary of the degree to which the treated quantiles dominate the untreated ones. Recall that for any two quantile functions $Q_a(u)$ and $Q_b(u)$, $Q_a$ stochastically dominates $Q_b$ if and only if
\begin{align*}
    Q_a(u) \geq Q_b(u) \quad \text{for all $u \in (0, 1)$},
\end{align*}
as discussed in \cite{QU20151}. Importantly, we can decompose the Wasserstein effect into directional-dominance effects by defining the positive and negative splits,
\begin{align*}
    \Delta Q_+(u) = \text{max}\{ \Delta Q(u), 0 \} \quad \text{and} \quad \Delta Q_-(u) =  \text{max}\{ -\Delta Q(u), 0 \}.
\end{align*}
Intuitively, $\Delta Q_+(u)$ captures all of the positive treatment effects across quantiles (where the difference between $Q_1(u)$ and $Q_0(u)$ is greater than zero), and $\Delta Q_{-}(u)$ captures all of the negative treatment effects. Then, it follows that we may write
\begin{align*}
    \Psi^2 = \int^1_0 \{ \Delta Q(u) \}^2 du = \int^1_0 \{ \Delta Q_+(u) \}^2 du + \int^1_0 \{ \Delta Q_-(u) \}^2 du
\end{align*}
which for notational simplicity we write as $\Psi^2_+ + \Psi^2_-$. Now that we have split the Wasserstein effect into positive and negative quantile effects, we may define the Wasserstein Dominance,
\begin{align*}
    \rho = \frac{\Psi^2_+ - \Psi^2_-}{\Psi^2_+ + \Psi^2_-} \in [-1, 1].
\end{align*}
If $Q_1(u)$ stochastically dominates $Q_0(u)$ then $\Psi^2_- = 0$ and then $\rho = 1$. Similarly, if $Q_0(u)$ stochastically dominates $Q_1(u)$ then $\Psi^2_+ = 0$ and $\rho = -1$. Thus, $\rho$ neatly describes the degree to which one treatment effect dominates the other. When $\rho$ is close to zero, it follows that the quantile effects cross each other, leading to cancellations.

\subsubsection{Decomposition into $L$-Moments}

Although decomposing the Wasserstein effect into $\Delta Q(u)$ is useful and lets us neatly visualize the signed contributions of each quantile, it doesn't say anything about the moments of the counterfactual distributions at the cutoff. Practitioners may be interested in understanding the effect contribution from the differences in means, standard deviations, skewnesses, etc. Fortunately, following a similar approach to \cite{sillitto69}, the Wasserstein effect can be written as a generalized Fourier series using the shifted Legendre polynomials as an orthogonal basis. The shifted Legendre polynomials are defined by $P^*_k(x) = P_k(2x-1)$ where $P_k(x)$ are the usual Legendre polynomials and form an orthogonal basis on $L^2([0, 1])$. A closed form expression for the $k$th shifted Legendre polynomial is given by
\begin{align*}
    P^*_k(x) = (-1)^k \sum^k_{j=0} \binom{k}{j} \binom{k + j}{j} (-x)^j.
\end{align*}
Importantly, under this orthogonal basis it can be shown that $\Psi$ may be decomposed into the summation of squared differences in $L$-moments. First introduced by \cite{hosking1990}, for any random variable $X$ with a finite first moment, the $k$th $L$-moment is defined as
\begin{align*}
    \lambda_k = \int^1_0 Q_x(u) P^*_{k-1}(u) du
\end{align*}
where $Q_x(u)$ is the quantile function for $X$. Note that $P^*_0 = 1$. As shown in \cite{hosking1990}, $L$-moments are defined by taking expectations of linear combinations of order statistics, and represent a ``robust'' analogue of conventional moments of a probability distribution that are typically less sensitive to heavy tailed distributions and are better behaved in small samples. Notably, they are always well-defined (as long as the first moment exists) even when not all conventional moments exist. To build intuition, let $X_{1:n} \leq X_{2:n} \leq \cdots \leq X_{n:n}$ be the order statistics of a random sample of size $n$ from the distribution of $X$. Then, the first three $L$-moments are given by:
\begin{align*}
    \lambda_1 &= \mathbb{E}[X], \\
    \lambda_2 &= \frac{1}{2}\mathbb{E}[X_{2:2} - X_{1:2}], \quad \text{and}\\
    \lambda_3 &= \frac{1}{3} \mathbb{E}[(X_{3:3} - X_{2:3}) - (X_{2:3} - X_{1:3})].
\end{align*}
Focusing on the second $L$-moment, we can see that it is proportional to the expected difference between two independent draws from a distribution. Thus, it provides an alternate measure of dispersion to the traditional standard deviation. Similarly, $\lambda_3$ provides an alternate measure of asymmetry to the traditional skewness by taking the expected difference between upper and lower order statistics. In the following theorem, we establish how $\Psi$ can be decomposed into a summation of squared differences in $L$-moments.

\begin{theorem}[$L$-Moment Decomposition] \label{series_representation}
     Suppose that $P_{a \mid x} \in \mathcal{P}_2(\mathbb{R})$ for $a \in \{0, 1\}$. Then, 
     \begin{align*}
         \Psi^2 = \sum^\infty_{k=1} (2k - 1) \big(\lambda^{(1)}_k - \lambda^{(0)}_k \big)^2
     \end{align*}
     where $\lambda^{(a)}_k = \int^1_0 Q_a(u) P^*_{k-1}(u) du$ are the $k$th $L$-moments above and below the cutoff.
\end{theorem}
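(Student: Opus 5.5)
The plan is to apply Parseval's identity to the quantile effect function $\Delta Q = Q_1 - Q_0$, expanded in the shifted Legendre basis. By \cref{identification_theorem}, $\Psi^2 = \int_0^1 \Delta Q(u)^2\,du$. So the argument reduces to three things: showing $\Delta Q \in L^2([0,1])$, normalizing the shifted Legendre polynomials into an orthonormal basis, and identifying the Fourier coefficients of $\Delta Q$ with differences of $L$-moments.

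\textbf{Step 1: $\Delta Q \in L^2([0,1])$.} Under the regularity assumption, each limiting law has a finite second moment. For $U \sim \text{Uniform}(0,1)$ we have $Q_a(U) \sim P_{a\mid x_0}$, so
\begin{align*}
\int_0^1 Q_a(u)^2\,du = \mathbb{E}[Q_a(U)^2] < \infty .
\end{align*}
Hence $Q_0, Q_1 \in L^2([0,1])$, and therefore $\Delta Q \in L^2([0,1])$. This also shows each $\lambda^{(a)}_k$ is finite, by Cauchy--Schwarz, since the $P^*_{k-1}$ are bounded on $[0,1]$.

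\textbf{Step 2: orthonormal basis.} Substituting $t = 2u-1$ in the standard identity $\int_{-1}^1 P_k(t)^2\,dt = 2/(2k+1)$ gives
\begin{align*}
\int_0^1 P^*_k(u)^2\,du = \frac{1}{2k+1}.
\end{align*}
Orthogonality transfers the same way. Thus $e_k(u) = \sqrt{2k-1}\,P^*_{k-1}(u)$, for $k \ge 1$, is an orthonormal system in $L^2([0,1])$.

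\textbf{Step 3: completeness.} This is the only nontrivial point. The span of $\{e_k\}$ is the set of all polynomials, since $P^*_k$ has degree exactly $k$. Polynomials are dense in $C([0,1])$ by Weierstrass, and $C([0,1])$ is dense in $L^2([0,1])$. So the system is complete, and Parseval applies:
\begin{align*}
\|\Delta Q\|_2^2 = \sum_{k\ge 1} \langle \Delta Q, e_k\rangle^2 .
\end{align*}

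\textbf{Step 4: identify the coefficients.} By linearity of the integral and the definition of $L$-moments,
\begin{align*}
\langle \Delta Q, e_k\rangle = \sqrt{2k-1}\int_0^1 \big(Q_1(u)-Q_0(u)\big)P^*_{k-1}(u)\,du = \sqrt{2k-1}\,\big(\lambda^{(1)}_k - \lambda^{(0)}_k\big).
\end{align*}
Squaring and summing gives $\Psi^2 = \sum_{k\ge1}(2k-1)\big(\lambda^{(1)}_k-\lambda^{(0)}_k\big)^2$, which is the claim.

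The main care point is the finite second moment hypothesis. It is what places $\Delta Q$ in $L^2$ rather than merely $L^1$, and that is exactly what Parseval requires. Without it, the $L$-moments still exist but the series identity can fail. Everything else is a routine change of variables.
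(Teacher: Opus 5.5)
Your proof is correct and follows essentially the paper's argument: Parseval's identity for $\Delta Q = Q_1 - Q_0$ in the shifted Legendre basis, normalized using $\|P^*_{k-1}\|_2^2 = (2k-1)^{-1}$, with the Fourier coefficients identified as the $L$-moment differences $\lambda^{(1)}_k - \lambda^{(0)}_k$. The only difference is that you prove completeness of the basis directly via Weierstrass density, whereas the paper relies on the mean-square convergence of the Legendre expansion of each $Q_a$, citing Sillitto and Hosking.
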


By \cref{series_representation} we obtain an important decomposition of $\Psi$: we may now define what can be thought of as a ``distributional $R^2$,'' that is, the amount of the Wasserstein effect that can be explained by a given $L$-moment. For example, for each $k \geq 1$ the share of the total distributional distance explained by the $k$th $L$-moment is given by
\begin{align} \label{rk_equation}
    R^2_k = \frac{(2k - 1)\big(\lambda^{(1)}_k - \lambda^{(0)}_k \big)^2}{\Psi^2}
\end{align}
such that $\sum^\infty_{k=1} R^2_k = 1$. This decomposition is purely distributional: it decomposes the Wasserstein distance between the marginal counterfactual outcome distributions at the cutoff and does not require a rank invariance assumption. As an illustrative example, in \cref{comparison_table} we can see the explanatory power of the first three moments for the effect curves shown in \cref{quantile_effect_curves}. Notably, the null effect curve is primarily explained by variation in its $L$-scale and higher-order moments, as its quantile effect curve is symmetric. Meanwhile, the skewed effect curve is (unsurprisingly) primarily driven by the differences in its $L$-skewness. Note that both have an $L$-location value of zero, since they are both defined to have an average treatment effect of zero. The moment decomposition outlined in \cref{comparison_table} provides a new and powerful tool for decoding treatment effect heterogeneity.

\begin{table}[h]
\centering
\begin{tabular}{r|c|c}
\textit{Moment} & \textit{Null Effect Curve} & \textit{Skewed Effect Curve}\\ \hline
$k = 1$ & 0.0000 & 0.0000 \\
$k = 2$ & 0.6079 & 0.1548 \\
$k = 3$ & 0.0000 & 0.8157 \\
$k \geq 4$ & 0.3921 & 0.0295 \\
\end{tabular}
\caption{Comparison of Explained Distributional Distance}
\label{comparison_table}
\end{table}

Now that we have established several methods of interpreting the Wasserstein effect and how it compares to traditional causal effects, we turn to estimation and inference. In the next section, we formalize an estimator for the Wasserstein effect and derive its asymptotic properties around some chosen bandwidth of the treatment threshold $X = x_0$. We show that standard bias correction techniques can be applied to estimation of the Wasserstein effect such that empirical bandwidth selection methods can be implemented.

\subsection{Estimation and Asymptotics} \label{estimation_inference_section}

In this section, we establish formal properties for estimation of the Wasserstein effect. Note that $\Psi$ depends on one-sided limiting conditional distributions evaluated at a single point; such functionals are not pathwise differentiable, so there is no $\sqrt{n}$-regular estimator and no efficient influence function. We therefore employ a simple plug-in estimator, defined by
\begin{align*}
    \widehat{\Psi}_n = \left\{\int_0^1 (\widehat{Q}_{1}(u) - \widehat{Q}_{0}(u))^2 du\right\}^{1/2}
\end{align*}
where $Q_{1}(u) = \text{inf} \, \{ y : \text{lim}_{x \downarrow x_0} F_{Y \mid X} (y \! \mid \! x) \geq u\}$ and $Q_{0}(u) = \text{inf} \, \{ y : \text{lim}_{x \uparrow x_0} F_{Y \mid X} (y \! \mid \! x) \geq u\}$ are the limiting conditional quantiles of $Y \mid X = x$. Thus, estimation of the Wasserstein effect reduces to estimation of conditional quantile processes (which is a well studied problem), followed by numerical integration.

There are many ways that $Q_{a}(u)$ can be estimated. For example, one natural route is local linear quantile regression, as proposed by \cite{Yu01031998}, which minimizes the check loss of a kernel-weighted polynomial estimator in order to produce boundary-adaptive estimates of the conditional quantile curves. This approach was adapted by \cite{FRANDSEN2012382} when first defining quantile treatment effects in a discontinuity design framework. However, the methods established in \cite{FRANDSEN2012382} only yield pointwise confidence intervals for conditional quantiles. Furthermore, their bandwidth condition requires $\sqrt{nh}h^2 \to \gamma < \infty$. When $\gamma > 0$, the squared bias and variance are of the same order; consequently, undersmoothing must be employed so the bias is negligible relative to the variance and $\gamma \to 0$. In practice, this means that the standard mean-squared-error optimal bandwidth selection of $h \propto n^{1/5}$ can lead to improper coverage. More recently, \cite{QU20151} showed that local quantile regression admits a uniform Bahadur representation which they then leverage to obtain uniform confidence intervals for quantile treatment effects. Building on this framework \cite{Qu02102019} show that by estimating the leading bias term it is possible to obtain bias-adjusted uniform inference in the spirit of \cite{calonico_2014}. Ultimately, the methods established by \cite{QU20151} and \cite{Qu02102019} rely on the fact that the asymptotic distribution is conditionally pivotal, so they are not suitable for the local Wald ratios required by fuzzy designs (which we consider in \cref{fuzzy_rdd_section}), thus, we turn to the framework established in \cite{CHIANG2019589}. Their approach develops a general theory for local Wald estimands that allows for uniform inference across quantiles and can accommodate empirical bandwidth selection. Moreover, it encompasses both sharp and fuzzy discontinuity designs, as well as kinked designs (which we also consider in \cref{kink_designs_section}). We formalize these technical details in what follows. 

In order to estimate $Q_a(u)$, \cite{CHIANG2019589} adapt the local polynomial estimation with bias correction approach established in \cite{calonico_2014}. For $a \in \{0, 1\}$ let 
\begin{align*}
    F^{(k)}_a(y \! \mid \! x_0^{\pm}) = \left. \frac{\partial^k}{\partial x^k} F_{Y \mid X} (y \! \mid \! x) \right|_{x \to x_0^{\pm}}
\end{align*}
be the $k$th partial derivative of the conditional cumulative distribution function where $a = 1$ corresponds to the right limit (as $x \downarrow x_0$) and $a = 0$ corresponds to the left limit (as $x \uparrow x_0$). Then, under appropriate smoothness assumptions, it follows that we may define the following $p$th order one-sided Taylor expansions about $x = x_0$,
\begin{align*}
    F_{Y \mid X} (y \! \mid \! x) &\approx F_{Y \mid X} (y \! \mid \! x_0^+) + \cdots + \frac{F^{(p)}_{Y \mid X} (y \! \mid \! x_0^+)}{p!} (x - x_0)^p = r_p \!\left( \frac{x-x_0}{h} \right)^T \alpha_{1, p}(y)  \\
     F_{Y \mid X} (y \! \mid \! x) &\approx F_{Y \mid X} (y \! \mid \! x_0^-) + \cdots + \frac{F^{(p)}_{Y \mid X} (y \! \mid \! x_0^-)}{p!} (x - x_0)^p = r_p\! \left( \frac{x-x_0}{h} \right)^T \alpha_{0, p}(y)
\end{align*}
for $x > x_0$ and $x < x_0$ respectively, where we say $F_{Y \mid X}(y \! \mid \! x_0^+) = \text{lim}_{x \downarrow x_0} F_{Y \mid X}(y \! \mid \! x)$ and $F_{Y \mid X}(y \! \mid \! x_0^-) = \text{lim}_{x \uparrow x_0} F_{Y \mid X}(y \! \mid \! x)$ are the one-sided limits of $F_{Y \mid X}(y \! \mid \! x)$, we define $r_p(u) = (1, u, \ldots, u^p)^T$, and
\begin{align*}
    \alpha_{a, p}(y) = \left[F_{Y \mid X}(y \! \mid \! x_0^{\pm}), F^{(1)}_{Y \mid X}(y \! \mid \! x_0^{\pm})\frac{h}{1!}, \ldots, F^{(p)}_{Y \mid X}(y \! \mid \! x_0^{\pm})\frac{h^p}{p!}  \right]^T.
\end{align*}
Then, we may estimate the coefficients separately on each side of the treatment discontinuity by solving one-sided local weighted least squares problems, defined by
\begin{align*}
    \widehat{\alpha}_{1, p}(y) &= \underset{\alpha \in \mathbb{R}^{p + 1}}{\text{arg min}} \sum^n_{i=1} \mathbb{I}(X_i \geq x_0) \left(\mathbb{I}(Y_i \leq y) - r_p \!\left( \frac{X_i - x_0}{h} \right)^T \alpha \right)^2 K \! \left( \frac{X_i - x_0}{h} \right)
\end{align*}
where $K(\cdot)$ is some kernel function and the estimator for $\widehat{\alpha}_{0, p}(y)$ follows analogously with $\mathbb{I}(X_i \leq x_0)$. Clearly, if $e_0 = (1, 0, \ldots, 0)^T$ is a standard basis vector it follows that
\begin{align*}
    \widehat{F}_{Y \mid X}(y \! \mid \! x_0^+) = e^T_0 \widehat{\alpha}_{1, p}(y) \quad \text{and} \quad \widehat{F}_{Y \mid X}(y \! \mid \! x_0^-) = e^T_0 \widehat{\alpha}_{0, p}(y).
\end{align*}
However, we are not quite done defining our estimator. From here, \cite{CHIANG2019589} add a bias correction term in the style of \cite{calonico_2014} in order to allow for empirical bandwidth selection. To develop a deeper understanding of this calculation, observe that the bias of our local polynomial estimator is given by 
\begin{align*}
    \mathbb{E}\left[\widehat{F}_{Y \mid X}(y \! \mid \! x_0^{\pm}) \right] - F_{Y \mid X}(y \! \mid \! x_0^{\pm}) = \underbrace{h^{p+1} e^T_0 (\Gamma^{\pm}_p)^{-1} \Lambda^{\pm}_{p, p+1} \frac{F^{(p+1)}_{Y \mid X}(y \! \mid \! x_0^{\pm})}{(p+1)!}}_{\mathcal{B}^{\pm}(y, h, p)} + o(h^{p+1})
\end{align*}
where we define $\mathcal{B}^{\pm}(y, h, p)$ to be the bias such that
\begin{align*}
    \Gamma^{\pm}_p = \int_{\mathbb{R}_{\pm}} K(u) r_p(u) r_p(u)^T du \quad \text{and} \quad \Lambda^{\pm}_{p, q} = \int_{\mathbb{R}_{\pm}} u^q K(u) r_p(u) du.
\end{align*}
Intuitively, $\Gamma^{\pm}_p$ is a matrix that describes how the polynomial regressors interact under the kernel weights and $\Lambda^{\pm}_{p, q}$ captures how the next higher-order term in the Taylor expansion interacts with the regressors. Then, Lemma 1 of \cite{CHIANG2019589} shows that under some set of regularity conditions
\begin{align*}
   \Delta^{\pm}_{\mathcal{B}}(y) := \sqrt{nh}\left( \widehat{F}_{Y \mid X}(y \! \mid \! x_0^{\pm}) - F_{Y \mid X}(y \! \mid \! x_0^{\pm}) - \widehat{\mathcal{B}}^{\pm}(y, h, p) \right) 
\end{align*}
admits the uniform Bahadur representation
\begin{align*}
    \Delta^{\pm}_{\mathcal{B}}(y) &= \sum^n_{i=1} \frac{e^T_0 (\Gamma^{\pm}_p)^{-1} r_p \!\left( \frac{X_i - x_0}{h}\right) K \!\left( \frac{X_i - x_0}{h} \right)\left( \mathbb{I}(Y_i \leq y ) - F_{Y \mid X}(y \! \mid \! X_i) \right) \delta^{\pm}_i }{\sqrt{nh} f_X(x_0)} + o_{P \mid X}(1)
\end{align*}
where $\delta^+_i = \mathbb{I}(X_i \geq x_0)$ and $\delta^-_i = \mathbb{I}(X_i \leq x_0)$. We defer the reader to Assumption 1 of \cite{CHIANG2019589} for a comprehensive list of these regularity conditions. Notably, \cite{CHIANG2019589} require that the Kernel function $K(\cdot)$ is bounded and continuous and is of VC type, which allows for common kernels such as the uniform, triangular, biweight, triweight, and Epanechnikov kernels, but rules out the Gaussian kernel due to its unbounded support. Furthermore, for some bandwidth $h$ satisfying $h \to 0$, they require
\begin{align*}
    nh^2 \to \infty \quad \text{and} \quad nh^{2p+3} \to 0.
\end{align*}
The former condition is a stronger assumption than the typical $nh \to \infty$ in order to allow for uniform convergence of the quantile process, and the latter condition controls the bias relative to the variance.

Now that we have defined this machinery, we discuss the conditional weak convergence of our estimator. First, note that \cite{CHIANG2019589} consider convergence of the quantile process after trimming the left and right tails, such that $u \in [\varsigma, 1 - \varsigma]$ for some $\varsigma \in (0, 1/2)$. They do so since near the tails the conditional quantile function can be difficult to estimate reliably, so instead they establish weak convergence in $l^\infty([\varsigma, 1-\varsigma])$. However, in order to properly estimate the Wasserstein effect we need to extend the domain of the quantiles to the full support on $[0, 1]$. Therefore, for weak convergence we require the additional assumptions that:
\begin{enumerate}
    \item[$(i)$]  The potential outcomes are compactly supported. \vspace{-0.05in}
    \item[$(ii)$] $f_{Y(a) \mid X}(y \! \mid \! x)$ is uniformly bounded away from zero on that support.
\end{enumerate}
With these assumptions in place, let
\begin{align*}
    \nu^{\pm}_n(y) = \sum^n_{i=1} \frac{e^T_0 (\Gamma^{\pm}_p)^{-1} r_p \!\left( \frac{X_i - x_0}{h}\right) K \!\left( \frac{X_i - x_0}{h} \right)\left( \mathbb{I}(Y_i \leq y) - F_{Y \mid X}(y \! \mid \! X_i) \right) \delta^{\pm}_i }{\sqrt{nh} f_X(x_0)}.
\end{align*}
Then, by Theorem 1 of \cite{CHIANG2019589} it follows that $\nu^{\pm}_n \rightsquigarrow \mathbb{G}_{H^{\pm}}$ where $\mathbb{G}_{H^{\pm}}$ are zero mean Gaussian processes with some covariance function $H^{\pm}$. Now that we have established conditional weak convergence for the bias corrected cumulative distribution functions above and below the cutoff, we need to invert them in order to obtain weak convergence for the quantile processes. Simply put, we define
\begin{align*}
    \widehat{Q}_{1}(u) &= \text{inf} \, \{ y :  \widehat{F}_{Y \mid X}(y \! \mid \! x_0^{+}) - \widehat{\mathcal{B}}^{+}(y, h, p)\geq u\} \quad \text{and} \\
    \widehat{Q}_{0}(u) &= \text{inf} \, \{ y :  \widehat{F}_{Y \mid X}(y \! \mid \! x_0^{-}) - \widehat{\mathcal{B}}^{-}(y, h, p)\geq u \}
\end{align*}
such that the quantile treatment effect may be defined as $\Delta \widehat{Q}(u) = \widehat{Q}_{1}(u) - \widehat{Q}_{0}(u)$. From here, since the quantile map $F \mapsto F^{-1}$ is Hadamard differentiable, we may apply the functional delta method to see that
\begin{align*}
    \sqrt{nh}(\widehat{Q}_{1}(u) - Q_1(u)) &\rightsquigarrow - \frac{\mathbb{G}_{H^+}(Q_1(u))}{f_{Y \mid X}(Q_1(u) \! \mid \! x_0^+)} \quad \text{and} \\
    \sqrt{nh}(\widehat{Q}_{0}(u) - Q_0(u)) &\rightsquigarrow - \frac{\mathbb{G}_{H^-}(Q_0(u))}{f_{Y \mid X}(Q_0(u) \! \mid \! x_0^-)}.
\end{align*}
Consequently, it follows that
\begin{align*}
    \sqrt{nh}( \Delta \widehat{Q}(u) -  \Delta Q(u)) \rightsquigarrow  \frac{\mathbb{G}_{H^-}(Q_0(u))}{f_{Y \mid X}(Q_0(u) \! \mid \! x_0^-)} - \frac{\mathbb{G}_{H^+}(Q_1(u))}{f_{Y \mid X}(Q_1(u) \! \mid \! x_0^+)}.
\end{align*}
In practice, $p$ is often chosen to be two, yielding local quadratic polynomial estimators. Higher order polynomials can potentially reduce bias even further, but they also come with the risk of a larger variance due to sensitivity of the estimator near the boundary. In the next section, we discuss inference for the Wasserstein effect.

\begin{remark}[Conditioning on covariates]
    Although covariates are not required for identification, they are often of interest to practitioners both to obtain covariate indexed causal effects and to improve precision \citep{Frolich02102019, calonico_covariates_2019}. Let $W_i \in \mathbb{R}^d$ denote a vector of covariates and let $\widehat{\mu}_{a, W}(x)$ denote a local polynomial estimate of $\mathbb{E}[W \mid X = x]$; for example, following the same estimation procedure described in \cref{estimation_inference_section}. We may then define the centered covariates
    \begin{align*}
        \widetilde{W}_{a, i} = W_i - \widehat{\mu}_{a, W}(X_i)
    \end{align*}
    for $a \in \{0, 1\}$. Then, following \cite{CHIANG2019589}, for each $y$ we solve for
\begin{align*}
     (\widetilde{\alpha}_{1, p}(y), \widetilde{\vartheta}_1(y)) &= \underset{\alpha \in \mathbb{R}^{p + 1}, \vartheta \in \mathbb{R}^d}{\text{arg min}} \sum^n_{i=1}  \delta^+_i \left(\mathbb{I}(Y_i \leq y) - r_p \!\left( \frac{X_i - x_0}{h} \right)^T \alpha - \widetilde{W}^T_{1, i} \vartheta \right)^2 K \! \left( \frac{X_i - x_0}{h} \right)
\end{align*}
and analogously, for $(\widetilde{\alpha}_{0, p}(y), \widetilde{\vartheta}_0(y))$ by replacing $\delta^+_i$ with $\delta^-_i$ and $\widetilde{W}_{1, i}$ with $\widetilde{W}_{0, i}$. From here, if our goal is target $F_{Y \mid X}(y \! \mid \! x_0^+)$ (using the covariates only as variance-reducing nuisances), then we simply take 
\begin{align*}
    \widetilde{F}_{Y \mid X}(y \! \mid \! x_0^+) = e^T_0 \widetilde{\alpha}_{1, p}(y) \qquad \text{and} \qquad \widetilde{F}_{Y \mid X}(y \! \mid \! x_0^-) = e^T_0 \widetilde{\alpha}_{0, p}(y).
\end{align*}
Note that since we have centered our covariates we can now safely interpret each estimate as the cumulative distribution function at the average covariate value. If our goal is the conditional cumulative distribution function itself,
\begin{align*}
    F_{Y \mid X, W}(y \! \mid \! x^+_0, w) = \underset{x \downarrow x_0}{\mathrm{lim}} \big\{ \mathbb{P}(Y \leq y \! \mid \! X = x, W = w) \big\}
\end{align*}
then for any $w \in \mathbb{R}^d$ we define our estimator to be $\widetilde{F}_{Y \mid X, W}(y \! \mid \! x_0^{\pm}) = e^T_0 \widetilde{\alpha}_{a, p}(y) + w^T\widetilde{\vartheta}_a(y)$. With these estimators in place, we can now estimate the conditional Wasserstein effect
\begin{align*}
    \Psi(w) =  \left\{\int^1_0 \left( \widetilde{Q}_1(u; w) - \widetilde{Q}_0(u; w) \right)^2 du \right\}^{1/2}
\end{align*}
where $\widetilde{Q}_1(u; w)$ and $\widetilde{Q}_0(u; w)$ are the inverses of $\widetilde{F}_{Y \mid X, W}(y \! \mid \! x_0^{+})$ and $\widetilde{F}_{Y \mid X, W}(y \! \mid \! x_0^{-})$, respectively. Bias correction and the multiplier bootstrap can be implemented following the same covariate augmentation procedure with higher-order local fits.
\end{remark}

\subsection{Statistical Inference} \label{inference_section}

Now that we have established methods for estimation of quantile treatment effects as well as their limiting distributions, we turn to inference for the Wasserstein effect. Surprisingly, statistical inference in this setting is not as straightforward as one might expect since $\Psi$ is a quadratic parameter; here, the limiting distribution and rate of convergence change as $\Psi \to 0$. To illustrate this point broadly for quadratic parameters, \cite{verdinelli_wasserman_dvi_2024} consider a toy example where $X_1, \ldots, X_n \sim N(\mu, \sigma^2)$ and we are interested in estimating $\psi = \mu^2$. Using the estimator $\widehat{\psi} = \Bar{X}^2_n$, it follows that
\begin{align*}
    \sqrt{n}(\widehat{\psi} - \psi) &\rightsquigarrow  N(0, \eta^2) \hspace{-1in} &&\text{for some $\eta^2$ when $\mu \neq 0$, and} \\
    n \widehat{\psi} & \rightsquigarrow \sigma^2 \chi^2_1  \hspace{-1in} &&\text{when $\mu = 0$.}
\end{align*}
Moreover, when $\mu$ is close to zero, its distribution will be neither normal nor $\chi^2_1$, and its rate of convergence will be between $1 / n$ and $1 / \sqrt{n}$. This is a common (and perhaps understudied problem) in statistics; other parameters such as kernel two-sample statistics \citep{gretton12a} and Reproducing Kernel Hilbert Space corrections \citep{Sejdinovic_RKHS_test_2013} suffer from this misalignment of convergence around the null. In the context of distributional discontinuity design, the delta method fails for our functional $\Psi^2 = \int^1_0 \left[\Delta Q(u)\right]^2 du$, so we must construct our hypothesis tests and confidence intervals around this fact. We first consider testing the null hypothesis of no distributional change above and below the cutoff; that is, $\Delta Q(u) = 0$ for all $u \in (0, 1)$ (or equivalently that $\Psi = 0$). Then, we define two methods of constructing valid (but conservative) confidence intervals for $\Psi$.

\subsubsection{Testing the Null Hypothesis} \label{hypothesis_testing_section}

In this section, we test the null hypothesis of no causal effect. Under the null, it follows that $\Delta Q(u) = 0$ for all $u \in (0, 1)$. Furthermore, as discussed in \cref{estimation_inference_section} it follows that $\sqrt{nh} \Delta \widehat{Q}(u) \rightsquigarrow \mathbb{G}(u)$ where $\mathbb{G}(u)$ is a mean-zero Gaussian process with covariance kernel $\kappa$. From here, we may apply the Karhunen-Lo\`eve theorem \citep{Karhunen1946ZurSS, loeve1977probability} to expand $\mathbb{G}(u)$ as
\begin{align*}
    \mathbb{G}(u) = \sum^\infty_{k=1} \sqrt{\lambda_k} Z_k \phi_k(u)
\end{align*}
where $\{\phi_k \}^\infty_{k=1}$ are an orthonormal basis on $L^2([0, 1])$ defined by the eigenfunctions of the covariance operator induced by the kernel $\kappa(u, v)$ (with eigenvalues $\lambda_1, \lambda_2, \ldots$) and $Z_k \sim N(0, 1)$ for all $k$. Then, it follows that
\begin{align} \label{chaos_eq}
   nh \widehat{\Psi}^2_n =  \int^1_0 \left( \sqrt{nh}\Delta \widehat{Q}(u)\right)^2 du \rightsquigarrow \sum^\infty_{k=1} \lambda_k Z^2_k,
\end{align}
which is a second-order Gaussian (or Wiener-It\^o) Chaos \citep{Janson_1997}. From here, there are several ways we can go about conducting our hypothesis test. The first option is to directly estimate the eigenvalues of $\kappa(u, v) = \text{Cov}(\mathbb{G}(u), \mathbb{G}(v))$ and approximate \cref{chaos_eq} via Monte-Carlo simulation. Although in principle this appears to be a straightforward procedure, the validity of such a test is not automatic as we must estimate $\lambda_1, \lambda_2, \ldots$, truncate $\sum^K_{k=1} \lambda_k Z^2_k$ for some $K$, and approximate the null distribution via Monte-Carlo simulation. In the following theorem, we formally establish the conditions required to obtain a valid level-$\alpha$ test under this procedure.

\begin{theorem}[Eigenvalue Test] \label{eigenvalue_test}
    Suppose that $\sum^\infty_{k=1} \lambda_k < \infty$ with $\lambda_1 > 0$ and define the Monte-Carlo draws
    \begin{align*}
        \widehat{T}^*_{K_n, b} = \sum^{K_n}_{k=1} \widehat{\lambda}_{k, n} Z^2_{k, b}
    \end{align*} 
    where $\widehat{\lambda}_{k,n}$ are the estimated eigenvalues and $Z_{k, b} \sim N(0, 1)$ for $k \geq 1$ and $b = 1, \ldots, B_n$. Let $\widehat{c}^*_{n, \alpha}$ be the empirical $(1-\alpha)$ quantile computed from $\{\widehat{T}^*_{K_n,b}\}_{b=1}^{B_n}$. Then, supposing that $B_n \to \infty$ and $K_n \to \infty$, for any $\alpha \in (0, 1)$ it follows that
\begin{align*}
    \lim_{n\to\infty}
  \mathbb{P}_{H_0}(nh \widehat{\Psi}^2_n > \widehat{c}^*_{n,\alpha})
  = \alpha
\end{align*}
as long as $|| \widehat{\kappa}_n - \kappa||_{2} = o_{\mathbb{P}}(K^{-1/2}_n)$.
\end{theorem}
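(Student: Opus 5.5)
Under $H_0$, \cref{chaos_eq} gives $nh\widehat{\Psi}^2_n \rightsquigarrow T := \sum_{k\geq 1}\lambda_k Z_k^2$, so the task reduces to showing that the Monte-Carlo critical value satisfies $\widehat{c}^*_{n,\alpha} \to_{\mathbb{P}} c_\alpha$, the $(1-\alpha)$ quantile of $T$. Once this holds, Slutsky's lemma gives $nh\widehat{\Psi}^2_n - \widehat{c}^*_{n,\alpha} \rightsquigarrow T - c_\alpha$, and therefore $\mathbb{P}_{H_0}(nh\widehat{\Psi}^2_n > \widehat{c}^*_{n,\alpha}) \to \mathbb{P}(T > c_\alpha) = \alpha$. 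Two facts are needed for this last step. First, $T$ has a continuous, strictly increasing distribution function on $(0,\infty)$. This holds because $\lambda_1 > 0$: $T$ is the sum of $\lambda_1 Z_1^2$, which has a positive density on $(0,\infty)$, and an independent nonnegative term that is almost surely finite since $\sum_k \lambda_k < \infty$. Second, $c_\alpha$ is a continuity point and a unique quantile. I will establish convergence of the critical value in three steps.

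\emph{Step 1 (eigenvalue perturbation).} By the Hoffman--Wielandt inequality for Hilbert--Schmidt operators, $\big(\sum_k (\widehat{\lambda}_{k,n}-\lambda_k)^2\big)^{1/2} \leq \|\widehat{\kappa}_n - \kappa\|_2$, with eigenvalues ordered decreasingly. Cauchy--Schwarz then gives
\begin{align*}
\sum_{k=1}^{K_n} |\widehat{\lambda}_{k,n}-\lambda_k| \leq K_n^{1/2}\,\|\widehat{\kappa}_n-\kappa\|_2 = o_{\mathbb{P}}(1),
\end{align*}
and this is exactly where the rate condition $o_{\mathbb{P}}(K_n^{-1/2})$ is used.

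\emph{Step 2 (conditional law of the truncated statistic).} Let $T^*_{K_n} = \sum_{k\le K_n}\widehat{\lambda}_{k,n}Z_k^2$ and $T_{K_n} = \sum_{k\le K_n}\lambda_kZ_k^2$, with the $Z_k$ independent of the data. Conditionally on the data,
\begin{align*}
\mathbb{E}\,|T^*_{K_n}-T| \leq \sum_{k\le K_n}|\widehat{\lambda}_{k,n}-\lambda_k| + \sum_{k>K_n}\lambda_k .
\end{align*}
The first term is $o_{\mathbb{P}}(1)$ by Step 1, and the second tends to zero because $K_n\to\infty$ and $\sum_k\lambda_k<\infty$. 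Hence the conditional law of $T^*_{K_n}$ converges weakly to the law of $T$ in probability, for example in bounded-Lipschitz distance. Since the distribution function of $T$ is continuous and strictly increasing at $c_\alpha$, the conditional $(1-\alpha)$ quantile $c^*_{n,\alpha}$ of $T^*_{K_n}$ converges in probability to $c_\alpha$.

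\emph{Step 3 (Monte-Carlo error).} Conditionally on the data, $\widehat{c}^*_{n,\alpha}$ is the empirical quantile of $B_n$ i.i.d.\ draws from the law of $T^*_{K_n}$. By the DKW inequality, the empirical distribution function lies uniformly within $O_{\mathbb{P}}(B_n^{-1/2})$ of the conditional distribution function. That conditional distribution function converges uniformly to $F_T$, by Step 2 and the continuity of $F_T$ (Pólya's theorem). Together these give $\widehat{c}^*_{n,\alpha} - c^*_{n,\alpha} = o_{\mathbb{P}}(1)$ as $B_n\to\infty$. Combined with Step 2, $\widehat{c}^*_{n,\alpha}\to_{\mathbb{P}} c_\alpha$, which completes the argument described in the first paragraph.

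I expect Step 2 to be the main obstacle. The delicate points are making the convergence of a random (data-dependent) law rigorous in probability, and turning weak convergence into quantile convergence, which needs strict monotonicity of $F_T$ at $c_\alpha$. A secondary subtlety is that Step 1 requires $\widehat{\kappa}_n$ to be a symmetric Hilbert--Schmidt kernel whose eigenvalues are matched in decreasing order. Non-negativity of the $\widehat{\lambda}_{k,n}$ is not needed, since the perturbation bound controls their deviation regardless of sign.
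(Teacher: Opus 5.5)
Your proof is correct and follows essentially the same route as the paper. Both arguments use Hoffman--Wielandt plus Cauchy--Schwarz to bound $\sum_{k\le K_n}|\widehat{\lambda}_{k,n}-\lambda_k|$, then a conditional $L^1$/Markov coupling with a sandwich bound and P\'olya's theorem to get uniform convergence of the conditional distribution function, then quantile convergence, and finally a Glivenko--Cantelli-type step for the Monte-Carlo error. The differences are minor streamlinings: you fold the tail $\sum_{k>K_n}\lambda_k$ into the same coupling bound and compare everything directly with $F_T$ via DKW, so you avoid the paper's detour through $c_{K_n,\alpha}$ and its need for continuity of the estimated conditional distribution on $\{\widehat{\lambda}_{1,n}>0\}$.
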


By \cref{eigenvalue_test}, we can see that the conditions required to obtain a level-$\alpha$ test using Monte-Carlo simulation depend crucially on the number of terms included in our truncation, $K_n$. Importantly, there are two errors introduced by simulating the critical value: the truncation error, controlled by $\sum_{k > K_n} \lambda_k$, and the estimation error, controlled by $\sqrt{K_n}|| \widehat{\kappa}_n - \kappa||_{2}$. Thus, $K_n$ must diverge to eliminate the truncation error, but not so fast that the estimation error fails to vanish. One way to obtain a rate for $K_n$ is to assume some kind of polynomial eigenvalue decay of the form $\lambda_k \lesssim k^{-\beta}$ for some $\beta > 1$; in the following corollary we formalize this notion.

\begin{corollary}[Eigenvalue Decay] \label{eigenvalue_decay}

Assume the conditions of \cref{eigenvalue_test} hold and suppose that $|| \widehat{\kappa}_n - \kappa||_{2} = O_p(r_n)$ for some $r_n \to 0$. Furthermore, suppose that there exist constants $C_\lambda > 0$ and $\beta > 1$ such that for all $k$, $\lambda_k \leq C_\lambda k^{-\beta}$. Then, it follows that letting
\begin{align*}
    K_n \asymp r^{-2/(2\beta - 1)}_n
\end{align*}
balances the truncation bias and estimation error, such that
\begin{align*}
    \sum_{k>K_n}\lambda_k = O\!\left(r_n^{\frac{2(\beta-1)}{2\beta-1}}\right) \quad \text{and} \quad \sqrt{K_n}|| \widehat{\kappa}_n - \kappa||_{2}
= O_p\!\left(r_n^{\frac{2(\beta-1)}{2\beta-1}}\right).
\end{align*}
    
\end{corollary}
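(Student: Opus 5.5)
The corollary is essentially a balancing computation, so the plan is to bound the two error terms of \cref{eigenvalue_test} separately under the decay assumption $\lambda_k \le C_\lambda k^{-\beta}$ and then choose $K_n$ to equate their orders.

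\textbf{Truncation error.} I would compare the tail sum with an integral. Since $k\mapsto k^{-\beta}$ is decreasing and $\beta>1$,
\begin{align*}
    \sum_{k>K_n}\lambda_k \le C_\lambda \sum_{k>K_n} k^{-\beta} \le C_\lambda \int_{K_n}^\infty t^{-\beta}\,dt = \frac{C_\lambda}{\beta-1}K_n^{1-\beta}.
\end{align*}
This step is routine; $\beta>1$ is exactly what makes the integral finite.

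\textbf{Estimation error.} The hypothesis $\|\widehat{\kappa}_n-\kappa\|_2=O_p(r_n)$ gives $\sqrt{K_n}\|\widehat{\kappa}_n-\kappa\|_2 = O_p(\sqrt{K_n}\,r_n)$ directly. I read this quantity as the one that, in the proof of \cref{eigenvalue_test}, controls $\sum_{k\le K_n}|\widehat{\lambda}_{k,n}-\lambda_k|$ through Cauchy--Schwarz and the Hoffman--Wielandt inequality. For the corollary, though, I only need to bound the stated expression itself.

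\textbf{Balancing.} Set $K_n^{1-\beta}\asymp K_n^{1/2}r_n$. This gives $K_n^{\beta-1/2}\asymp r_n^{-1}$, that is, $K_n\asymp r_n^{-2/(2\beta-1)}$. Substituting back:
\begin{itemize}
    \item the truncation term is $K_n^{1-\beta}\asymp r_n^{2(\beta-1)/(2\beta-1)}$;
    \item the estimation term is $\sqrt{K_n}\,r_n \asymp r_n^{1-1/(2\beta-1)} = r_n^{2(\beta-1)/(2\beta-1)}$.
\end{itemize}
These are the two rates claimed in the corollary.

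\textbf{Conditions of \cref{eigenvalue_test}.} Because $r_n\to 0$ and $2/(2\beta-1)>0$, we get $K_n\to\infty$. The second rate shows $\sqrt{K_n}\|\widehat{\kappa}_n-\kappa\|_2=o_p(1)$, which is the same as $\|\widehat{\kappa}_n-\kappa\|_2=o_p(K_n^{-1/2})$. So the hypotheses of \cref{eigenvalue_test} hold and the test remains asymptotically level $\alpha$.

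The only place needing care is rounding: $K_n$ must be an integer, so I would take $K_n=\lceil c\, r_n^{-2/(2\beta-1)}\rceil$, which changes the bounds only by constant factors. I do not expect a genuine obstacle.
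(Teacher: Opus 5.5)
Your proposal is correct and matches the paper's proof step for step. Like the paper, you bound the tail by the integral comparison $\frac{C_\lambda}{\beta-1}K_n^{1-\beta}$, bound the estimation term directly as $O_p(\sqrt{K_n}\,r_n)$, and substitute $K_n \asymp r_n^{-2/(2\beta-1)}$; the paper writes out the two-sided constants $c_1, c_2$, using the lower one for the tail and the upper one for $\sqrt{K_n}\,r_n$. Your extra check that $K_n\to\infty$ and $\sqrt{K_n}\|\widehat{\kappa}_n-\kappa\|_2=o_p(1)$, so that the hypotheses of \cref{eigenvalue_test} are actually met, is a small addition that the paper leaves implicit.
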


\cref{eigenvalue_decay} clarifies the relationship between both the truncation bias and estimation error, as well as the $K_n$ and the rate of eigenvalue decay. Clearly, faster eigenvalue decay (i.e. a larger $\beta$) allows for a smaller $K_n$; in this setting there will be less sensitivity to estimating $\kappa$. Conversely, slower decay requires a larger $K_n$ and therefore requires more accurate estimation of the covariance operator. A natural choice for the rate is $r_n \asymp (nh)^{-1/2}$, as this aligns with the effective sample size in a discontinuity design setting.

While \cref{eigenvalue_test} and \cref{eigenvalue_decay} establish a useful testing framework, choosing $K_n$ in practice can be tricky. That, combined with the computational burden of Monte-Carlo simulation, suggests the eigenvalue test may be less than desirable for practitioners. Alternatively, one could leverage Theorem 5 of \cite{luedtke_omnibus_2018} to obtain a conservative, but computationally simple statistical test. Specifically, \cite{luedtke_omnibus_2018} derive non-parametric tests of equality in distribution between unknown functions; they show that such a test also manifests as a Gaussian chaos, which can be easily bounded by applying a one-sided Chebyshev inequality. In the following proposition, we leverage their results to obtain a conservative test for no causal effect. 

\begin{proposition}[Conservative Test] \label{conservative_test}

Let $\mu =  \int^1_0 \kappa(u, u) du$ and $\sigma^2 =  2 \int^1_0 \!  \int^1_0 \kappa(u, v)^2 du \, dv$. Fix $\alpha \in (0, 1)$ and define $c^{\text{ub}}_{1- \alpha} = \mu + \sigma \sqrt{(1 - \alpha) / \alpha}$. Then, by \cite{luedtke_omnibus_2018} it follows that
\begin{align*}
    \underset{n \to \infty}{\text{lim} \, \text{sup}} \ \mathbb{P}_{H_0}(nh \widehat{\Psi}^2_n > \widehat{c}^{\text{ub}}_{n, 1- \alpha}) \leq \alpha
\end{align*}
where $\widehat{c}^{\text{ub}}_{n, 1- \alpha} = \widehat{\mu} + \widehat{\sigma} \sqrt{(1 - \alpha) / \alpha}$ for any estimators such that $\widehat{\mu} \overset{p}{\rightarrow} \mu$ and $\widehat{\sigma} \overset{p}{\rightarrow} \sigma$.
    
\end{proposition}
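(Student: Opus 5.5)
Under $H_0$, display \eqref{chaos_eq} gives $nh\widehat{\Psi}^2_n \rightsquigarrow T := \sum_{k\geq 1}\lambda_k Z_k^2$. The plan has three steps: compute the first two moments of $T$, bound its upper tail with the one-sided Chebyshev (Cantelli) inequality, and then pass from the limit law and the estimated critical value to the finite-sample rejection probability using Slutsky and portmanteau.

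\textbf{Moments of $T$.} Because $\kappa$ is a continuous covariance kernel on $[0,1]^2$, Mercer's theorem gives $\kappa(u,v)=\sum_k \lambda_k\phi_k(u)\phi_k(v)$. Integrating the diagonal yields $\sum_k\lambda_k=\int_0^1\kappa(u,u)\,du=\mu<\infty$, and orthonormality of the $\phi_k$ yields $\sum_k\lambda_k^2=\int_0^1\!\int_0^1\kappa(u,v)^2\,du\,dv$. The $Z_k$ are independent standard normals, so $\mathbb{E}[Z_k^2]=1$ and $\mathbb{V}(Z_k^2)=2$. Monotone convergence then gives $\mathbb{E}T=\mu$. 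Independence together with $L^2$-convergence of the partial sums gives $\mathbb{V}(T)=2\sum_k\lambda_k^2=\sigma^2$. This is the computation underlying Theorem 5 of \cite{luedtke_omnibus_2018}, and we can cite it directly.

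\textbf{Cantelli bound.} For any $t>0$, $\mathbb{P}(T-\mu\geq t)\leq \sigma^2/(\sigma^2+t^2)$. Setting $t=\sigma\sqrt{(1-\alpha)/\alpha}$ makes the right-hand side equal to $\alpha$, so $\mathbb{P}(T\geq c^{\text{ub}}_{1-\alpha})\leq\alpha$. When $\sigma>0$, $T$ is a nondegenerate weighted sum of chi-squares with $\lambda_1>0$ and therefore has a continuous distribution; in particular $c^{\text{ub}}_{1-\alpha}$ is a continuity point. The degenerate case $\sigma=0$ forces $\kappa=0$ and $T=0$, and it has to be handled separately: for example, assume $\lambda_1>0$ as in \cref{eigenvalue_test}, or note that it is then trivially conservative once $\widehat{\mu}>0$.

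\textbf{Passing to the limit.} Because $\widehat{\mu}\to\mu$ and $\widehat{\sigma}\to\sigma$ in probability, $\widehat{c}^{\text{ub}}_{n,1-\alpha}\to c^{\text{ub}}_{1-\alpha}$ in probability. By Slutsky, $nh\widehat{\Psi}^2_n-\widehat{c}^{\text{ub}}_{n,1-\alpha}\rightsquigarrow T-c^{\text{ub}}_{1-\alpha}$. The portmanteau theorem applied to the closed set $[0,\infty)$ gives
\begin{align*}
\limsup_{n\to\infty}\mathbb{P}_{H_0}\big(nh\widehat{\Psi}^2_n>\widehat{c}^{\text{ub}}_{n,1-\alpha}\big)\leq \mathbb{P}\big(T\geq c^{\text{ub}}_{1-\alpha}\big)\leq\alpha .
\end{align*}
Using the closed set means continuity of $T$ is not even needed at this step.

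\textbf{Main obstacle.} The delicate point is justifying the weak convergence in \eqref{chaos_eq} itself. This requires the map $g\mapsto\int_0^1 g^2$ to be continuous on the space where $\sqrt{nh}\,\Delta\widehat{Q}$ converges. Continuity holds in $\ell^\infty([0,1])$, since the integral is bounded by the sup norm squared. The difficulty is that the quantile process converges on the full interval $[0,1]$ only under the paper's compact-support and density-bounded-below assumptions, and these must be invoked here to extend the trimmed results of \cite{CHIANG2019589} to all of $[0,1]$. Given that, the continuous mapping theorem supplies \eqref{chaos_eq}, and the remaining steps above are routine.
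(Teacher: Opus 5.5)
Your proposal is correct and follows the paper's proof essentially step for step: compute $\mathbb{E}T=\mu$ and $\mathbb{V}(T)=\sigma^2$ for the Gaussian chaos, apply Cantelli's inequality at $t=\sigma\sqrt{(1-\alpha)/\alpha}$, then pass to the limit with Slutsky and the portmanteau theorem. Your use of the closed event $\{T\ge c^{\text{ub}}_{1-\alpha}\}$ and of the difference $nh\widehat{\Psi}^2_n-\widehat{c}^{\text{ub}}_{n,1-\alpha}$ tightens the paper's argument, which standardizes by $\widehat{\sigma}$ and uses a strict inequality, without changing the route.

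The one flaw is your aside on $\sigma=0$. There the bound can genuinely fail, because $nh\widehat{\Psi}^2_n$ and $\widehat{c}^{\text{ub}}_{n,1-\alpha}$ both tend to $0$ at uncontrolled rates, and $\widehat{\mu}>0$ does not fix this. What you need is the assumption $\lambda_1>0$, which is also implicit in the paper's division by $\widehat{\sigma}$.
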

Note that we may equivalently define $\mu$ and $\sigma^2$ as $\sum^\infty_{k=1} \lambda_k$ and $2 \sum^\infty_{k=1} \lambda^2_k$, respectively. \cref{conservative_test} provides us with a more convenient statistical test that requires fewer assumptions on the estimation error of $\kappa$. Practically speaking, the condition $K_n^{1/2} ||\widehat{\kappa}_n-\kappa||_2 =o_{\mathbb{P}}(1)$ required in \cref{eigenvalue_test} means the eigenvalue test is only trustworthy when $\widehat\kappa_n$ is estimated accurately enough that one can include many eigenvalues without the simulated critical value becoming sensitive to $K_n$. With a small sample size, $\widehat\kappa_n$ may only support a small $K_n$, making the test fragile to the truncation choice and potentially anti-conservative if $K_n$ is pushed too large. In such settings the conservative test is preferable; it avoids estimating the full eigenspectrum and instead requires only consistent estimation of $\mu$ and $\sigma$. Finally, we note that one may reject the null hypothesis using a one-sided $1 - \alpha$ upper confidence bound for $\Psi$ using the intervals defined in the following section.

\subsubsection{Constructing Confidence Intervals} \label{confidence_intervals_section}

As discussed in \cite{verdinelli_wasserman_dvi_2024}, constructing confidence intervals for quadratic parameters with uniformly correct coverage (with length $n^{-1/2}$ away from the null and length $n^{-1}$ at the null) is an unsolved problem in statistics. In practice, we deal with this problem by constructing intervals that are conservative near the null. We consider two approaches for constructing such intervals. Later, in \cref{application_section}, we compare the coverage and width of both methods via simulation.

First, we consider constructing a confidence interval for $\Psi$ using the uniform confidence band defined for the quantile treatment effect. As shown by \cite{CHIANG2019589}, we can construct a multiplier bootstrap process $\mathbb{G}^*_n$ such that $\mathbb{G}^*_n \rightsquigarrow \mathbb{G}$. Therefore, if we let $\widehat{c}_{n,\alpha}$ be the $1-\alpha$ conditional quantile of $\sup_{u}|\mathbb{G}_n^*(u)|$, it follows that $\Delta \widehat{Q}(u) \pm \frac{1}{\sqrt{nh}} \widehat{c}_{n, \alpha}$ yields a $1 - \alpha$ confidence band. With that in mind, let
\begin{align*}
    a_n(u) = \Delta\widehat{Q}(u) - \frac{\widehat{c}_{n, \alpha}}{\sqrt{nh}} \quad \text{and} \quad   b_n(u) = \Delta\widehat{Q}(u) + \frac{\widehat{c}_{n, \alpha}}{\sqrt{nh}}.
\end{align*}
Then, it is clear that over the interval $[a, b]$ that
\begin{align*}
    \underset{x \in [a, b]}{\mathrm{max}} \, x^2 = \begin{cases}
b^2,& a\geq 0 \\
a^2,& b\leq 0,\\
\mathrm{max}\{a^2,b^2\},& a<0<b
\end{cases}
\qquad \text{and} \qquad 
\underset{x \in [a, b]}{\mathrm{min}} \, x^2 =
\begin{cases}
a^2, & a \geq 0 \\
b^2, & b \leq 0 \\
0 & a<0<b.
\end{cases}
\end{align*}
Therefore, if we define the upper and lower bounds
\begin{align*}
    \overline{M}_n(u) = \text{max}\{ a_n^2(u), b_n^2(u) \} \quad \text{and} \quad \underline{M}_n(u) = (\text{max}\{a_n(u), 0\})^2 + (\text{min}\{b_n(u), 0\})^2
\end{align*}
then it becomes straightforward to construct the interval $C_n = [\int^1_0 \underline{M}_n(u)du , \int^1_0 \overline{M}_n(u) du ]$. Under the regularity conditions established in \cite{CHIANG2019589} it immediately follows that
\begin{align*}
    \underset{n \to \infty}{\text{lim} \, \text{inf}} \  \mathbb{P}(\Psi^2 \in C_n) \geq 1 - \alpha.
\end{align*}

Alternatively, we can artificially widen our confidence interval following the approach of \cite{verdinelli_wasserman_dvi_2024}. Specifically, we could define
\begin{align} \label{conservative_interval}
    C^\prime_n = \left[ \widehat{\Psi}^2_n \pm z_{1-\alpha/2}\sqrt{\widehat{s}^2_n + \frac{c^2}{nh}} \, \right]
\end{align}
where $\widehat{s}_n$ is the estimated standard deviation of $\Psi^2$, $z_{1-\alpha/2}$ is the $1 - \alpha / 2$ quantile of a standard Normal distribution, and $c$ is some constant, such as $\mathbb{V}(Y)$.  We now confirm that this provides a valid, but possibly conservative, confidence interval.

\begin{lemma}[Conservative Interval] \label{coverage_lemma}
Let $C^\prime_n$ be the interval defined in \cref{conservative_interval} for some constant $c$. Suppose that $\mathbb{E}[\widehat{\Psi}^2_n - \Psi^2] = o((nh)^{-1/2})$ and $\mathbb{V}(\widehat{\Psi}^2_n) = o((nh)^{-1})$. Then, it follows that 
\begin{align*}
      \mathbb{P}(\Psi^2 \not \in C^
    \prime_n) = o(1).
\end{align*}

\end{lemma}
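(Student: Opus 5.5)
The plan is a Chebyshev argument driven by the deterministic extra width $c/\sqrt{nh}$. Write $w_n = z_{1-\alpha/2}\sqrt{\widehat{s}^2_n + c^2/(nh)}$ for the half-width of $C'_n$. Since $\widehat{s}^2_n \geq 0$, we have the deterministic lower bound
\begin{align*}
    w_n \geq z_{1-\alpha/2}\, \frac{|c|}{\sqrt{nh}} =: \epsilon_n .
\end{align*}
Hence
\begin{align*}
    \{\Psi^2 \notin C'_n\} = \{ |\widehat{\Psi}^2_n - \Psi^2| > w_n \} \subseteq \{ |\widehat{\Psi}^2_n - \Psi^2| > \epsilon_n \}.
\end{align*}
We assume $c \neq 0$; with $c=0$ the width would rely on $\widehat{s}_n$ alone and the argument would need more. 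The random estimate $\widehat{s}_n$ is simply discarded, so no consistency of it is required.

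Next, decompose the error as
\begin{align*}
    \widehat{\Psi}^2_n - \Psi^2 = \big(\widehat{\Psi}^2_n - \mathbb{E}[\widehat{\Psi}^2_n]\big) + b_n, \qquad b_n := \mathbb{E}[\widehat{\Psi}^2_n - \Psi^2].
\end{align*}
By hypothesis $b_n = o((nh)^{-1/2})$, so $|b_n| \leq \epsilon_n/2$ for all $n$ large. On that range,
\begin{align*}
    \mathbb{P}(|\widehat{\Psi}^2_n - \Psi^2| > \epsilon_n) \leq \mathbb{P}\big(|\widehat{\Psi}^2_n - \mathbb{E}\widehat{\Psi}^2_n| > \epsilon_n/2\big) \leq \frac{4\,\mathbb{V}(\widehat{\Psi}^2_n)}{\epsilon_n^2},
\end{align*}
where the last step is Chebyshev's inequality. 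Substituting $\epsilon_n^2 = z^2_{1-\alpha/2}c^2/(nh)$ gives the bound
\begin{align*}
    \frac{4\, nh\, \mathbb{V}(\widehat{\Psi}^2_n)}{z^2_{1-\alpha/2}c^2},
\end{align*}
which is $o(1)$ by the variance assumption $\mathbb{V}(\widehat{\Psi}^2_n) = o((nh)^{-1})$. This completes the argument.

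There is no real obstacle; the mild points are the following.
\begin{itemize}
    \item[(a)] The $c^2/(nh)$ term must supply width of the exact order $(nh)^{-1/2}$, matching the rates in the hypotheses. If the paper intends $c$ to be estimated, e.g.\ by the sample variance of $Y$, one would instead restrict to the event $\{\widehat{c} > c/2\}$, whose complement has vanishing probability.
    \item[(b)] The moments of $\widehat{\Psi}^2_n$ must exist. This is ensured by the compact support assumption of \cref{estimation_inference_section}, and in any case it is implicit in the hypotheses.
\end{itemize}
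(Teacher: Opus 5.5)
Your proof is correct and follows essentially the same route as the paper: you discard $\widehat{s}^2_n$ to obtain the deterministic half-width lower bound $z_{1-\alpha/2}|c|/\sqrt{nh}$, and then apply a second-moment inequality. The only difference is cosmetic. The paper applies Markov's inequality directly to $(\widehat{\Psi}^2_n - \Psi^2)^2$ and splits the mean squared error into squared bias plus variance, whereas you absorb the bias deterministically for large $n$ and apply Chebyshev's inequality to the centered term.
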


In practice, either of the proposed methods for constructing confidence intervals for $\Psi$ is reasonable; their empirical widths are further discussed in \cref{application_section}. Additionally, as noted in \cref{hypothesis_testing_section}, we can check the null hypothesis of no causal effect by checking if zero is in $C_n$ or $C^\prime_n$. In the following section, we extend our analysis to the fuzzy treatment assignment setting.

\subsection{Fuzzy Distributional Discontinuity Design} \label{fuzzy_rdd_section}

In many applications treatment assignment above and below the cutoff is not perfectly sharp. That is, although the probability of receiving treatment jumps discontinuously at the threshold, some units below the threshold may receive treatment, and some above may not; such settings are referred to as ``fuzzy'' regression discontinuity designs \citep{hahn2001}. Intuitively, in this setting the cutoff acts as an instrument for treatment status; crossing the threshold changes the likelihood of treatment but does not fix it. Now, it no longer makes sense to directly compare outcome distributions above and below the cutoff because these groups differ in more than treatment status. Notably, \cite{FRANDSEN2012382} extend the framework proposed by \cite{Angrist01061996} to define local always-takers, never-takers, compliers, defiers, and indefinites. To do so, let $X_i$ be the running variable with cutoff $x_0$ and now let $A_i(x)$ denote unit $i$'s potential treatment status if the running variable were $x$. Observed treatment is then $A_i = A_i(X_i)$. Then, we define the one-sided treatment limits (when they exist) as
\begin{align*}
    A^-_i = \underset{x \uparrow x_0}{\text{lim}} \{ A_i(x) \} \quad \text{and} \quad A^+_i = \underset{x \downarrow x_0}{\text{lim}} \{ A_i(x) \} 
\end{align*}
such that $A^-_i$ is the treatment status that would be received if the running variable approaches the cutoff from the left and $A^+_i$ the treatment that would be received from the right, then we may define the following mutually exclusive groups:
\begin{itemize}
    \item \textit{Local Always-Takers}: $\mathcal{AT} = \{i : A^-_i = 1, A^+_i = 1 \}$.
    \item \textit{Local Never-Takers}: $\mathcal{NT} = \{i : A^-_i = 0, A^+_i = 0 \}$.
    \item \textit{Local Compliers}: $\mathcal{C} = \{i : A^-_i = 0, A^+_i = 1 \}$
    \item \textit{Local Defiers}: $\mathcal{D} = \{i : A^-_i = 1, A^+_i = 0 \}$
    \item \textit{Local Indefinites}: $\mathcal{I} = \{i : \text{one or both of $(A^-_i, A^+_i)$ do not exist}\}$.
\end{itemize}

Now, we focus on the sub-population of local compliers when defining a fuzzy distributional effect, as this is the group whose treatment status is actually changed by the treatment discontinuity. In this setting, we need require additional assumptions for causal identification. Beyond the assumptions discussed in \cref{identification_section}, we need to assume
\begin{enumerate}
    \item[$(v)$] \textit{Treatment Discontinuity:} $\text{lim}_{x \downarrow x_0} \mathbb{P}(A = 1 \mid X = x) > \text{lim}_{x \uparrow x_0} \mathbb{P}(A = 1 \mid X = x)$.
    \item[$(vi)$] \textit{Local Smoothness:} $\mathbb{E}[A^\pm \mid X = x]$ and $F_{Y(a) \mid G=g, X}(y \! \mid \! g, x)$ are continuous at $X = x_0$, the latter for all $y$, each $a \in \{0, 1\}$, and each $g \in \{\mathcal{AT}, \mathcal{NT}, \mathcal{C}, \mathcal{D} \}$.
    \item[$(vii)$] \textit{Monotonicity}: $\text{lim}_{x \to x_0} \mathbb{P}( A^+ \geq A^- \mid X = x) = 1$ and $\mathbb{P}(\mathcal{I}) = 0$.
\end{enumerate}

Assumption $(v)$ simply requires that the probability of treatment changes discontinuously at the threshold $X = x_0$. Assumption $(vi)$ requires that the fraction of units that would take treatment evolves smoothly as the treatment cutoff is approached. This guarantees that the only discontinuity of the observed treatment assignment is through the discontinuity at $X = x_0$ and not through some hidden break in the treatment assignment mechanism. Furthermore, $(vi)$ requires that (within each compliance group) the distribution of the potential outcomes varies smoothly with the running variable at the cutoff. This again ensures that  any discontinuity in observed outcome distributions is attributable to the change in the probability of treatment, rather than a discontinuity in the potential outcome distributions themselves. Assumption $(vii)$ rules out the existence of defiers (units that always go against their treatment assignment) and indefinites (units with ill-defined treatment limits) in a neighborhood of the treatment discontinuity. Intuitively, this assumption implies that all units weakly comply with the treatment assignment mechanism, such that moving from below to above the cutoff can only increase the chance of treatment. Furthermore, it ensures that every unit has well-defined potential treatment statuses. With these assumptions in place, it now follows that the group identified by the discontinuity are genuine compliers.

With assumptions $(v)$-$(vii)$ in place \cite{FRANDSEN2012382} show that the cumulative distribution functions for compliers above and below the cutoff are identified as
\begin{align*}
    F_{1 \mid \mathcal{C}}(y) = \frac{\text{lim}_{x \downarrow x_0} \mathbb{E}[\mathbb{I}(Y \leq y) A \mid X = x] - \text{lim}_{x \uparrow x_0} \mathbb{E}[\mathbb{I}(Y \leq y) A \mid X = x]}{\text{lim}_{x \downarrow x_0} \mathbb{E}[A \mid X = x] - \text{lim}_{x \uparrow x_0} \mathbb{E}[A \mid X = x] } 
\end{align*}
and
\begin{align*}
    F_{0 \mid \mathcal{C}}(y) = \frac{\text{lim}_{x \downarrow x_0} \mathbb{E}[\mathbb{I}(Y \leq y)(1 - A) \mid X = x] - \text{lim}_{x \uparrow x_0} \mathbb{E}[\mathbb{I}(Y \leq y) (1 - A) \mid X = x]}{\text{lim}_{x \downarrow x_0} \mathbb{E}[(1 - A) \mid X = x] - \text{lim}_{x \uparrow x_0} \mathbb{E}[(1 -A) \mid X = x] }.
\end{align*}
Therefore, it follows that the Wasserstein effect for compliers is defined and identified as
\begin{align*}
    \Psi_{\mathcal{C}} = \left\{\int^1_0 \left( Q_{1 \mid \mathcal{C}}(u) - Q_{0 \mid \mathcal{C}}(u) \right)^2 du\right\}^{1/2}
\end{align*}
where $Q_{a \mid \mathcal{C}}(u) = \text{inf} \{ y : F_{a \mid \mathcal{C}}(y) \geq u \}$ for $a \in \{0, 1\}$ are the quantiles of the complier cumulative distribution functions above and below the treatment discontinuity. Interpretation of the fuzzy Wasserstein effect follows analogously to the sharp case; $\Psi^2_{\mathcal{C}}$ still acts as a distributional analogue upper bounding the local average treatment effect at the cutoff. Moreover, the same inequalities and decompositions can be extended to $\Psi^2_{\mathcal{C}}$. Now that we have defined the Wasserstein effect in a fuzzy distributional discontinuity design framework, we discuss estimation.

\subsubsection{Estimation of the Fuzzy Wasserstein Effect}

To estimate the fuzzy Wasserstein effect, we can directly extend the procedure described in \cref{estimation_inference_section}, again following the work of \cite{CHIANG2019589}. For $a \in \{0, 1\}$ we define $G_a(y \mid x)= \mathbb{E}\left[\mathbb{I}(Y\leq y)\mathbb{I}(A=a)\mid X=x\right]$ and $\pi_a(x) = \mathbb{E}\left[\mathbb{I}(A=a)\mid X=x\right]$. Then, taking the corresponding one-sided limits we have
\begin{align*}
 F_{a\mid\mathcal{C}}(y)= \frac{G_a(y \! \mid \! x_0^+)-G_a(y \! \mid \! x_0^-)}{\pi_a(x_0^+)-\pi_a(x_0^-)}.
\end{align*}
To estimate $G_a(y \! \mid \! x_0^\pm)$ and $\pi_a(x_0^{\pm})$ we use one-sided local polynomial regression, mirroring the sharp case. Specifically, for $G_a(y \! \mid \! x)$ we solve
\begin{align*}
    \widehat{\alpha}^{G}_{a,p}(y) = \underset{\alpha \in \mathbb{R}^{p + 1}}{\text{arg min}}
\sum_{i=1}^n \delta_i^{\pm}\Big(
\mathbb{I}(Y_i \leq y) \mathbb{I}(A_i=a) -r_p \!\left(\frac{X_i-x_0}{h}\right)^T \alpha
\Big)^2 K \!\left(\frac{X_i-x_0}{h}\right).
\end{align*}
Analogously, for $\pi_a(x_0^{\pm})$ we solve $\widehat{\alpha}^{\pi}_{a,p}$ by letting $\mathbb{I}(A_i=a)$ be the dependent variable. Then, it follows that $\widehat{G}_a(y \! \mid \! x_0^\pm) = e^T_0\widehat{\alpha}^{G}_{a,p}(y)$ and $\widehat{\pi}_a(x_0^\pm)= e^T_0\widehat{\alpha}^{\pi}_{a,p}$, which yields local Wald estimator
\begin{align*}
    \widehat{F}_{a\mid\mathcal{C}}(y)=\frac{\widehat{G}_a(y \! \mid \! x_0^+) - \widehat{G}_a(y \! \mid \! x_0^-)}{\widehat{\pi}_a(x_0^+)-\widehat{\pi}_a(x_0^-)}.
\end{align*}
Finally, the complier quantile function is given by $\widehat{Q}_{a\mid\mathcal{C}}(u)=\inf\{y:\widehat{F}_{a\mid\mathcal{C}}(y)\ge u\}$. Statistical inference for the fuzzy Wasserstein effect follows exactly as in the sharp case, as described in \cref{inference_section}.

\section{Distributional Kink Designs} \label{kink_designs_section}

In many practical settings, we do not observe a discontinuity in the treatment assignment, but rather a kink or change in slope of the policy. The idea here is the same as under regression discontinuity designs; it is assumed that units arbitrarily close to either side of the policy kink are comparable, and therefore a causal interpretation can be justified. A canonical application of regression kink designs is that of \cite{nielsen2010estimating}, who estimate the causal effect of student grants on college enrollment in Denmark. Here, their running variable $X$ is some continuous measure of parental income and the benefit $b(X)$ exhibits a kink at different eligibility thresholds; that is, the full grant is offered up to some level $X=x_1$, a linear phaseout occurs between $x_1$ and $x_2$ (with a decreasing benefit in $x$ given), and zero benefit is offered for parental incomes greater than $x_2$. Other notable early applications of regression kink designs can be found in \cite{NBERw8269} and \cite{DAHLBERG20082320}; interested readers should refer to \cite{card_theory_practice_2016} for a review, and to \cite{ando2017trust, Ganong03042018} for discussions of inference and robustness in finite samples.
\begin{figure}[h]
\centering
\begin{tikzpicture}
\definecolor{plotlyBlue}{HTML}{2E86C1}
\definecolor{plotlyRed}{HTML}{C0392B}

\pgfplotstableread[col sep=comma]{files/rkd_points.csv}\pts
\pgfplotstableread[col sep=comma]{files/rkd_true_line.csv}\line

\begin{axis}[
    width=0.8\textwidth,
    height=0.4\textwidth,
    scale only axis,  
  grid=major,
  major grid style={line width=0.2pt, draw=black!12},
  minor tick num=1,
  minor grid style={line width=0.1pt, draw=black!6}, 
  legend cell align=left, xtick pos=bottom, ytick pos=left,
  xlabel={$b(x)$},
  ylabel={$y$},
  xmin=0, xmax=6,
  ymin=-0.3, ymax=2.0,
    xtick={0,1, 2,3,4,5, 6},
  xticklabels={0,1, 2,3,4,5, 6},
     xtick pos=bottom,
    ytick pos=left,
]

\addplot[
  only marks,
  mark=*,
  mark size=0.55pt,
  draw=none,
  fill=black,
  opacity=0.38
]
table[x="X", y="Y"]{\pts};

\addplot[
  no markers,
  line width=5.2pt,
  draw=white,
  line cap=round
]
table[x="X", y="trueY"]{\line};

\addplot[
  no markers,
  line width=2pt,
  color=plotlyBlue,
  line cap=round
]
table[x="X", y="trueY"]{\line};

\addplot[
  densely dashed,
  line width=1.5pt,
  plotlyRed
]
coordinates {(2.5,-0.3) (2.5,2.0)};

\end{axis}
\end{tikzpicture}
\caption{Example of a hypothetical regression kink design.}
\label{kink_design}
\end{figure}
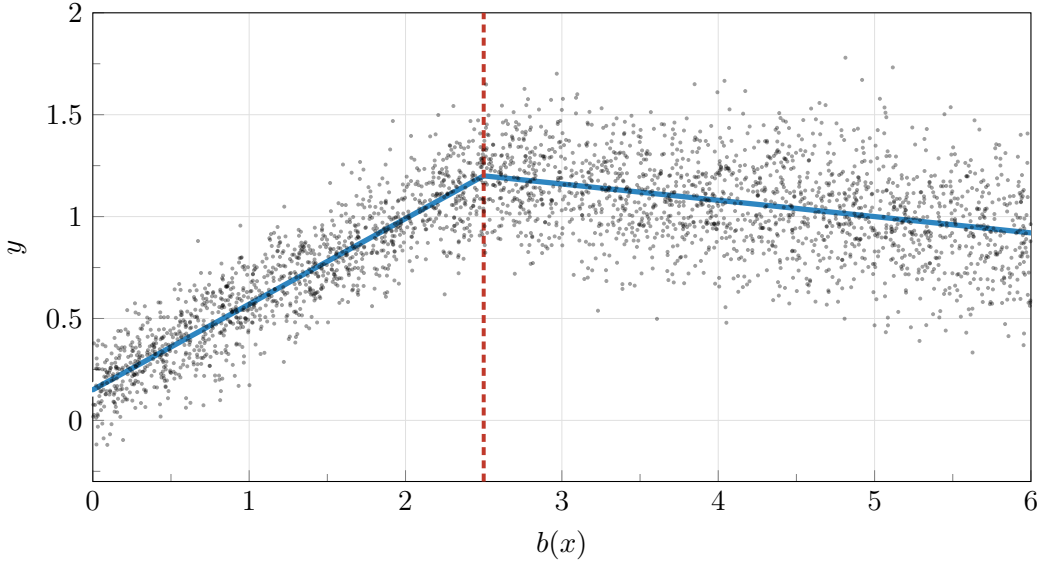

In a sharp regression kink design the benefit is set deterministically according to the known assignment rule $b(\cdot)$. One causal target is the local average marginal effect of the benefit on the outcomes; that is, the \textit{slope} of the dose-response curve at the policy kink,
\begin{align*}
\tau^\prime = \frac{\partial}{\partial t} \mathbb{E}[Y(t) \mid X = x_0] \, \Big|_{t=b(x_0)} \overset{(i)}{=} \frac{\mu^\prime_Y(x_0^+) - \mu^\prime_Y(x_0^-)}{b^\prime(x_0^+) - b^\prime(x_0^-)}
\end{align*}
where equality $(i)$ follows by the identifying assumptions outlined in \cite{card_rkd_2015} and
\begin{align*}
    \mu^\prime_Y(x_0^+) &= \lim_{x\downarrow x_0} \left\{ \frac{ \partial }{ \partial x} \mathbb{E}[Y\mid X=x] \right\}, \\
    b^\prime(x_0^+) &=\lim_{x\downarrow x_0} \left\{\frac{\partial}{ \partial x}b(x) \right\},
\end{align*}
and analogous definitions are given for $\mu^\prime_Y(x_0^-)$ and $b^\prime(x_0^-)$. For example, in \cref{kink_design} we can see an example of a regression kink design, where there is a clear kink in the dose-response curve at $X = x_0$. As was the case for regression discontinuity design, far more information can be gained by considering distributional causal effects. Notably, quantile treatment effects in kinked designs have been explored by \cite{CHIANG2019405}, \cite{CHIANG2019589}, \cite{Chen_Chiang_Sasaki_2020}, and \cite{wang2025unifiedframeworkidentificationinference}. However, these approaches suffer from the same set of drawbacks as before --- namely, difficulty in implementation and interpretation. Thus, in what follows, we show that the Wasserstein derivative at the policy kink provides a clean generalization of traditional kink design effects.

Let $g(t, x, \varepsilon)$ be a function of the benefit, running variable, and unobservables. Then, we may define the counterfactual $Y(t) = g(t, X, \varepsilon)$ and the observed outcome $Y = g(b(X), X, \varepsilon)$. Again, we let $P_{t\mid x}$ denote the conditional distribution of $Y(t) \mid X = x$ under some benefit or treatment level $t = b(x)$. In this setting, we can think of $P_{t \mid x}$ as a distribution along some absolutely continuous path of distributions in the running variable $x$. This allows us to define the Wasserstein derivative at the policy kink $X=x_0$ as
\begin{align*}
\Psi^\prime = \lim_{\delta \to 0} 
\left\{ \frac{W_2(P_{t_0 + \delta \mid x_0},P_{t_0 \mid x_0})}{|\delta|} \right\} = \left\{ \int^1_0 \left( \left. \frac{\partial}{\partial t} Q_{Y(t) \mid X=x_0}(u) \right|_{t=b(x_0)} \right)^2 du \right\}^{1/2}
\end{align*}
where we define $t_0 = b(x_0)$. Intuitively, $\Psi^\prime$ represents the instantaneous rate at which probability mass moves or \textit{flows} at the policy kink \citep{AmbrosioGigliSavare2005GradientFlows}. While the traditional regression kink design estimand $\tau^\prime$ measures how the center of mass moves or drifts through the kink, $\Psi^\prime$ measures how the entire distribution moves.

Now we consider identification of the Wasserstein derivative at the kink. Assume that $b^\prime(x_0^+)\neq b^\prime(x_0^-)$ and $b(\cdot)$ is a known function. Naturally, we expect the behavior of
$F_{Y \mid X}(y \! \mid \! x)$ near $x_0$
to provide information
about the causal effect; however, making this intuition rigorous is subtle. For identification and interpretation, the classical approaches of \cite{card_rkd_2015} and \cite{CHIANG2019405} can be surprisingly difficult to work with. In the case of mean effects, \cite{card_rkd_2015} show that $\tau^\prime$ can be written as a weighted average of individual-level marginal effects where the weights depend on unobservables. For quantile effects, \cite{CHIANG2019405} obtain an analogous weighted-average of structural derivatives evaluated along some latent boundary set. Both settings can be hard to translate into standard treatment-effect language, e.g. ``what is the effect of a marginal increase in the benefit.'' Furthermore, these identification strategies can require additional regularity conditions to make the weights well-defined; for example, \cite{CHIANG2019405} suggest a rank-invariance assumption. More recently, 
\cite{wang2025unifiedframeworkidentificationinference}, proposed a more direct identification strategy that leads to a cleaner interpretation. Specifically, \cite{wang2025unifiedframeworkidentificationinference} define the local treatment effect at the kink to be
\begin{align*}
  \Delta_\phi =  \frac{\partial}{\partial t} \left( \phi(F_{Y(t) \mid X = x_0})\right) \Big|_{t = b(x_0)} = \lim_{\delta\to 0} \left\{\frac{\phi(F_{Y(t_0 + \delta) \mid X = x_0})-\phi(F_{Y(t_0) \mid X = x_0})}{\delta} \right\}
\end{align*}
where $\phi$ is some Hadamard differentiable functional. Now, the estimand describes a genuinely local average marginal effect of a small policy-induced change in the benefit level around $b(x_0)$ for units at $X=x_0$. Under some regularity conditions, \cite{wang2025unifiedframeworkidentificationinference} show that the causal effect $\Delta_\phi$ is identified as $\phi^\prime_{F_{Y \mid X = x_0}}( \mathrm{DRKD}(\cdot))$ where $\mathrm{DRKD}(\cdot)$ is the distributional regression kink design estimand,
\begin{align*}
    \mathrm{DRKD}(y) =
\frac{\frac{\partial}{\partial x} F_{Y\mid X}(y \! \mid \! x_0^+)-\frac{\partial}{\partial x} F_{Y \mid X}(y \! \mid \! x_0^-)}
{b^\prime(x_0^+) - b^\prime(x_0^-)}. 
\end{align*}
In the case of distributional kink designs, we let $\phi_u(F) = F^{-1}(u)$ denote the $u$-quantile functional. Then, since $Y$ is univariate, it follows that the Wasserstein derivative at the policy kink is identified as
\begin{align*}
    \Psi^\prime = \left\{\int^1_0 \left( \frac{\frac{\partial}{\partial x} Q_{Y\mid X}(u \! \mid \! x_0^+)-\frac{\partial}{\partial x} Q_{Y \mid X}(u \! \mid \! x_0^-)}
{b^\prime(x_0^+) - b^\prime(x_0^-)}\right)^2 du\right\}^{1/2}.
\end{align*}
Notably, if $Y$ is multivariate this identification strategy does not work as the Wasserstein distance is no longer a function of the limiting conditional quantiles; establishing and identifying such distributional causal effects in high dimensional settings is an interesting and open question. In the following section, we extend the work of \cite{wang2025unifiedframeworkidentificationinference} to handle identification of fuzzy treatment assignment in kink designs.

\subsection{Fuzzy Distributional Kink Designs} \label{fuzzy_kink_design_section}

Although \cite{wang2025unifiedframeworkidentificationinference} establish a clean and interpretable framework for identification of causal effects in sharp kink designs, they do not consider the fuzzy treatment assignment setting, where the running variable induces a kink in treatment propensities rather than deterministically setting a benefit level. Here, we observe a noisy analogue of $b(x)$ due to imperfect compliance, measurement error, or some other unobserved determinants of behavior. More formally, suppose we observe some $b(X, \eta)$ where $\eta$ captures unobserved variation in the treatment assignment. Now, there is no single baseline level of treatment, so we must define an analogous version of $\Delta_\phi$ in the fuzzy setting, and we must establish additional structure/conditions to ensure that the unobserved determinants of treatment evolve smoothly in $x$ around the kink. To formalize this, we first define a nonseparable outcome model and establish a fuzzy kink design characterization. \vspace{0.1in} 

\begin{assumption}[Nonseparable Model] \label{nonsep_model_assumption}
Suppose there exist unobservables $(\varepsilon,\eta)$ and a measurable structural function
$g:\R\times\mathcal X\times\mathcal E\to \R$ such that:
\begin{enumerate}
\item[$(i)$] (Potential outcomes) For each $t \in \mathbb{R}$, $Y(t)=g(t,X,\varepsilon)$.
\item[$(ii)$] (Fuzzy assignment) $T = b(X,\eta)$ for some measurable $b:\mathcal X\times\mathcal H\to \mathbb{R}$.
\item[$(iii)$] (Consistency) $Y = Y(T) = g(b(X,\eta),X,\varepsilon)$
\end{enumerate}
where $\varepsilon \in \mathcal{E} \subset \mathbb{R}^{d_\varepsilon}$, $\eta \in \mathcal{H}  \subset \mathbb{R}^{d_\eta}$, and $\mathcal{X} \subset \mathbb{R}$ is the support of the running variable.
\end{assumption} \medskip

\begin{assumption}[Fuzzy Kink Characterization]\label{fuzzy_characterization} Let $I_{x_0}$ be a closed interval containing the kink point $x_0$. Then, suppose that:
\begin{enumerate}
\item[$(i)$] For a.e.\ $\eta$, the map $x \mapsto b(x,\eta)$ is continuous on $I_{x_0}$ and continuously differentiable on $I_{x_0}\setminus\{x_0\}$, with finite one-sided derivatives
$b^\prime(x_0^+,\eta)$ and $b^\prime(x_0^-,\eta)$.
\item[$(ii)$] Let $\mu_B(x) = \mathbb{E}[b(X, \eta) \mid X = x]$ and assume that
\begin{align*}
    \Delta_B := \mu^\prime_B(x^+_0) - \mu^\prime_B(x^-_0) \neq 0
\end{align*}
where $\mu^\prime_B(x^+_0) = \underset{x\downarrow x_0}{\mathrm{lim}} \left\{\frac{\partial}{ \partial x} \mathbb{E}[b(x, \eta) \mid X = x]  \right\}$ and $\mu^\prime_B(x^-_0) = \underset{x\uparrow x_0}{\mathrm{lim}} \left\{\frac{\partial}{ \partial x} \mathbb{E}[b(x, \eta) \mid X = x]  \right\}$.
\end{enumerate}
\end{assumption}

By \cref{fuzzy_characterization}, we guarantee that there is continuity at the kink point, but a discontinuity in the first order derivative. Moreover, under both \cref{nonsep_model_assumption} and \cref{fuzzy_characterization}, letting $T_0 = b(x_0, \eta)$ and $\omega(\eta) = b^\prime(x_0^+,\eta) - b^\prime(x_0^-,\eta)$, we can define the counterfactual treatment at the kink
\begin{align*}
    T_\delta = T_0 + \delta \! \left( \frac{\omega(\eta)}{\Delta_B} \right),
\end{align*}
and the associated counterfactual outcome $Y_\delta = g(T_\delta ,x_0,\varepsilon)$. \medskip 

\begin{remark}[Counterfactual Definition]
    The definition of the counterfactual $T_\delta$ may seem counterintuitive at first glance since the intervention depends on the kink-responsiveness $\omega(\eta) = b^\prime(x_0^+,\eta) - b^\prime(x_0^-,\eta)$; it might seem more natural to define an intervention such as $T_\delta = T_0 + \delta$, which shifts every unit by the same amount. The reason the weighting $\omega(\eta) / \Delta_B$ is required is that in a fuzzy kink design, the identifying variation comes from the change in the slope of $\mathbb{E}[b(X, \eta) \mid X]$ at $X = x_0$. A small change in the running variable shifts a unit's treatment by an amount proportional to $\omega(\eta)$, so units more responsive to the kink contribute more to the local change actually observed in the data. The reason we normalize by $\Delta_B = \mathbb{E}[\omega(\eta) \mid X = x_0]$ is to allow $\delta$ to be interpreted as a one-unit change in the average treatment at the kink, since now $\mathbb{E}[T_\delta - T_0 \mid X = x_0] = \delta$. Thus, even though $T_\delta$ doesn't correspond to a uniform shift in treatment, it is the counterfactual that aligns with the observed kink-variation, with weights determined by each unit's kink-responsiveness.
\end{remark} \medskip

With this counterfactual outcome in place, we may now define the fuzzy local treatment effect at the kink for the functional $\phi$ as
\begin{align*}
    \Delta_\phi^{F} = \left.\frac{\partial }{\partial \delta} \phi\left(F_{Y_\delta\mid X=x_0}\right)\right|_{\delta=0} = 
\lim_{\delta\to 0} \left\{
\frac{\phi\big(F_{Y_\delta\mid X=x_0}\big)-\phi \big(F_{Y_0\mid X=x_0}\big)}{\delta} \right\},
\end{align*}
provided the limit exists. Next, our goal is to obtain a structural representation of $\Delta_\phi^{F}$ analogous to Lemma 1 of \cite{wang2025unifiedframeworkidentificationinference}. However, before doing so we must outline a few additional assumptions. First, we need identical smoothness assumptions to those required in \cite{wang2025unifiedframeworkidentificationinference}; for completeness, we write them out in what follows. \medskip

\begin{assumption}[Smooth Functional] \label{hadamard_differentiability}

Let $\mathcal{F}$ be the space of all one-dimensional distribution functions. Then, assume the functional $\phi : \mathcal{F} \to \mathbb{R}$ is Hadamard differentiable at $F_{Y \mid X = x_0}$, with its Hadamard derivative denoted by $\phi^\prime_{F_{Y \mid X = x_0}}$.
\end{assumption} \medskip

\begin{assumption}[Smooth Structural Functions] \label{smooth_g_assumption}
    The function $g(t,x, e)$ is continuously differentiable in $(t,x)$ for each $e \in \mathcal{E}$, with continuous partial derivatives
    \begin{align*}
        g_1(t,x, e) = \frac{\partial}{\partial t } g(t,x,e) \qquad \text{and} \qquad g_2(t,x, e) = \frac{\partial}{\partial x} g(t,x,e).
    \end{align*}
\end{assumption} \medskip

As discussed in \cite{wang2025unifiedframeworkidentificationinference}, \cref{smooth_g_assumption} is analogous to the smoothness conditions imposed in \cite{card_rkd_2015}, but weaker than those required by the identification strategy of \cite{CHIANG2019405}. Under this smoothness condition, the partial derivative of $h(x, e, u) := g(b(x, u), x, e)$ with respect to $x$ is given by
\begin{align} \label{partial_deriv}
    \frac{\partial}{\partial x} h(x, e, u) := h_x(x, e, u) =  b^\prime(x, u) g_1(b(x, u), x, e) + g_2(b(x, u), x, e).
\end{align}

An implication of \cref{partial_deriv} is that $x \mapsto g_1(b(x, u), x, e)$ and $x \mapsto g_2(b(x, u), x, e)$ are continuous at $x_0$, but $x \mapsto h(x, e, u)$ is not continuously differentiable at $x_0$ due to the discontinuity in $b^\prime(x, u)$. Finally, we must establish a few regularity conditions in the spirit of conditions $R1(i)$ and $R1(ii)$ of \cite{wang2025unifiedframeworkidentificationinference}. 

\begin{assumption}[Regularity 1] \label{regularity_conditions} Let $Z = \left(\omega(\eta) /\Delta_B \right) g_1(T_0,x_0,\varepsilon)$ and assume the following conditions hold:
    \begin{enumerate}
        \item[$(i)$] For each $c > 0$,
        \begin{align*}
            P \left(\big|Y_\delta-Y_0-\delta Z\big|\ge c|\delta| \mid X=x_0 \right)=o(|\delta|)
        \end{align*}
        as $\delta \to 0$. 
        
        \item[$(ii)$] The conditional distribution of $(Y,Z)$ given $X=x_0$ is absolutely continuous with respect to the Lebesgue measure and has a joint density $f_{Y,Z\mid X}(y,y^\prime \! \mid \! x_0)$ that is continuous in $y$ for all $y^\prime$. Furthermore, assume there exists a Lebesgue integrable function $\varpi : \mathbb{R} \to \mathbb{R}$ with $\int | y^\prime \varpi(y^\prime) | dy^\prime < \infty$ such that for all $(y, y^\prime)$, 
        \begin{align*}
            f_{Y,Z\mid X}(y, y^\prime \! \mid \! x_0) \leq |\varpi(y^\prime)|.
        \end{align*}
    \end{enumerate}
\end{assumption}

\cref{regularity_conditions}$(i)$ is a stochastic differentiability condition along the counterfactual path $\delta\mapsto Y_\delta$ induced by the fuzzy kink. It requires that, conditional on $X = x_0$, the change in outcomes from shifting treatment from $T_0$ to $T_0+\delta(\omega(\eta)/\Delta_B)$ admits a first-order expansion with a remainder that is small enough to control. \cref{regularity_conditions}$(ii)$ ensures that the joint density $(Y,Z)\mid X=x_0$ is well-behaved. Relative to the regularity conditions required in \cite{wang2025unifiedframeworkidentificationinference}, the difference now is that the derivative direction can vary. That is, $Z$ includes the random compliance weight $\omega(\eta)/\Delta_B$, so the domination and integrability requirements must control the weighted marginal effect $(\omega(\eta)/\Delta_B) g_1(T_0,x_0,\varepsilon)$ and not just $g_1(T_0,x_0,\varepsilon)$ alone. Note that we don't explicitly need to require an additional smooth outcome distribution assumption like Assumption S4 of \cite{wang2025unifiedframeworkidentificationinference} since \cref{regularity_conditions}$(ii)$ already implies continuity of $y$ in $f_{Y \mid X}( \cdot \! \mid \! x_0)$. With these conditions in place, we can now establish a structural representation of $\Delta^F_\phi$.

\begin{lemma}[Fuzzy Structural Representation]
\label{fuzzy_WZ_lemma1}
Suppose that Assumptions 1-5 hold. Then the fuzzy local treatment effect at the kink $\Delta^F_\phi$ admits the representation
\begin{align*}
\Delta_\phi^F = 
\phi^{\prime}_{F_{Y\mid X=x_0}}\left( \mathbb{E}\left[ -f_{Y\mid X}(\, \cdot \mid x_0) \left(\frac{\omega(\eta)}{\Delta_B} \right) g_1(T_0,x_0,\varepsilon) \mid Y= \, \cdot \, , X=x_0\right]\right).
\end{align*}
\end{lemma}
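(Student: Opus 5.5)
The plan follows Lemma 1 of \cite{wang2025unifiedframeworkidentificationinference}. The random direction $\omega(\eta)/\Delta_B$ is absorbed into the variable $Z$, so the sharp-design argument carries over. By \cref{hadamard_differentiability} and the chain rule for Hadamard derivatives, it suffices to show that the path $\delta\mapsto F_{Y_\delta\mid X=x_0}$ is differentiable at $\delta=0$ in sup-norm, i.e.
\begin{align*}
\sup_{y}\left|\frac{F_{Y_\delta\mid X=x_0}(y)-F_{Y_0\mid X=x_0}(y)}{\delta}-D(y)\right|\to 0,
\end{align*}
with $D(y)=-f_{Y\mid X}(y\mid x_0)\,\mathbb{E}[Z\mid Y=y,X=x_0]$. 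Given this, $\Delta^F_\phi=\phi^\prime_{F_{Y\mid X=x_0}}(D)$, which is the claim. Note first that $Y_0=g(T_0,x_0,\varepsilon)$ equals the observed $Y$ in the limit $X=x_0$. By consistency in \cref{nonsep_model_assumption} and continuity of $b(\cdot,\eta)$ in \cref{fuzzy_characterization}, $F_{Y_0\mid X=x_0}=F_{Y\mid X=x_0}$, so the base point is the right one.

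\textbf{Step 1 (linearize).} Write $Y_\delta=Y_0+\delta Z+R_\delta$. For fixed $c>0$, split on the event $E_\delta=\{|R_\delta|<c|\delta|\}$. By \cref{regularity_conditions}$(i)$, $P(E_\delta^c\mid X=x_0)=o(|\delta|)$, uniformly in $y$. On $E_\delta$, the indicator $\mathbb{I}(Y_\delta\le y)$ is sandwiched between $\mathbb{I}(Y_0+\delta Z\le y\mp c|\delta|)$. It therefore suffices to compute the derivative of $G_\delta(y):=P(Y_0+\delta Z\le y\mid X=x_0)$ and then let $c\downarrow 0$.

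\textbf{Step 2 (derivative of the linear path).} Using the joint density in \cref{regularity_conditions}$(ii)$,
\begin{align*}
G_\delta(y)-G_0(y)=-\int\!\int \mathbb{I}\{y-\delta z< y_0\le y\}\,\mathrm{sgn}\text{-adjusted}\; f_{Y,Z\mid X}(y_0,z\mid x_0)\,dy_0\,dz,
\end{align*}
that is, $\int\big[\int_{y}^{y-\delta z}f_{Y,Z\mid X}(s,z\mid x_0)\,ds\big]dz$. Dividing by $\delta$, the inner integral is bounded by $|z|\,|\varpi(z)|$, which is integrable. By continuity of $f$ in its first argument it converges to $-z f_{Y,Z\mid X}(y,z\mid x_0)$. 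Dominated convergence then gives the limit $-\int z f_{Y,Z\mid X}(y,z\mid x_0)dz=-f_{Y\mid X}(y\mid x_0)\mathbb{E}[Z\mid Y=y,X=x_0]=D(y)$. The shifts $y\mp c|\delta|$ contribute at most $c\sup_y f_{Y\mid X}(y\mid x_0)$. This is finite because $f_{Y\mid X}(y\mid x_0)\le\int|\varpi|$, and it vanishes as $c\to0$.

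\textbf{Main obstacle.} The hard part is \emph{uniformity in $y$}, which Hadamard differentiability requires; pointwise convergence is not enough. I would obtain it from three facts. First, the bound $|z\varpi(z)|$ is uniform in $y$. Second, the convergence of the inner integral is uniform on compacts in $y$. Third, the tails are controlled: both $F_{Y_\delta}$ and $F_{Y_0}$ approach $0$ and $1$, and $\int |z|\,f_{Y,Z}(y,z)\,dz$ can be made small outside a compact set using integrability of $|z\varpi(z)|$. Failing that, I would argue equicontinuity via an Arzelà--Ascoli style argument. Finally, I would check that $D$ lies in the tangent set on which $\phi^\prime$ is defined. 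This holds because $D$ is bounded, continuous, and vanishes at $\pm\infty$.
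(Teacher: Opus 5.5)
Your Steps 1--2 follow the paper's proof. The paper also writes $Y_\delta=Y_0+\delta Z+R_\delta$ and removes the remainder by splitting on $\{|R_\delta|\ge c|\delta|\}$. It bounds $|\delta|^{-1}P(|Y_0+\delta Z-y|\le c|\delta|\mid X=x_0)\le 2c\int|\varpi|$ directly, which avoids your sandwich and the need to evaluate the difference quotient of $G_\delta$ at the moving points $y\pm c|\delta|$. It then gets $-\int z f_{Y,Z\mid X}(y,z\mid x_0)\,dz$ by the same change of variables and dominated convergence. However, the paper proves this only pointwise in $y$ and then passes straight to $\phi^\prime$. You are right that the real issue is sup-norm convergence of $h_\delta=(F_\delta-F_0)/\delta$ to $D$, which the paper glosses over. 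The problem is that your resolution of it does not go through under Assumption~5.

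\textbf{What Assumption~5 does give.}
\begin{itemize}
\item The remainder bound $|\delta|^{-1}P(|R_\delta|\ge c|\delta|\mid X=x_0)+2c\int|\varpi|$ is uniform in $y$.
\item $\widetilde h_\delta(y)-D(y)=\int(-z)\int_0^1[f(y-\delta zv,z)-f(y,z)]\,dv\,dz\to0$ uniformly on compacts, by uniform continuity of $f(\cdot,z)$ on compacts and domination by $2|z\varpi(z)|$.
\item $\sup_y|h_\delta(y)|=O(1)$.
\end{itemize}

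\textbf{Where your tail claim fails.} Integrability of $|z\varpi(z)|$ bounds $\int|z|f(y,z)\,dz$ uniformly in $y$, but gives no decay. A continuous density dominated by $\varpi(z)$ need not vanish at infinity.

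For example, let $Y_\delta=\varepsilon_1+\delta\varepsilon_2$, so $R_\delta\equiv0$. Take $\varepsilon_2\sim N(1,1)$ independent of $\varepsilon_1$, and give $\varepsilon_1$ a continuous density $g$ with spikes of fixed height and width $2^{-n}$ at $y=n$. This fits Assumptions~1--5 with $g(t,x,e)=e_1+te_2$, $T_0=0$ and $\omega/\Delta_B\equiv1$. Then $D=-g$ does not vanish at $\pm\infty$, so your tangent-set justification also fails. Yet for every fixed $\delta$, $\widetilde h_\delta(n)\to0$ as $n\to\infty$. Hence $\sup_y|h_\delta-D|\not\to0$, and Arzel\`a--Ascoli can only return the local uniformity you already have.

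\textbf{Two ways to close this.}
\begin{itemize}
\item For the functional the paper actually needs, $\phi_u(F)=F^{-1}(u)$, four facts already suffice: local uniform convergence near $Q(u)$, continuity of $D$ there, $f_{Y\mid X}(Q(u)\mid x_0)>0$, and $\|F_\delta-F_0\|_\infty=O(|\delta|)$. The argument is exactly as in the proof of Lemma~21.3 of \cite{Vaart_1998}.
\item For a general Hadamard differentiable $\phi$, you must strengthen Assumption~5$(ii)$. For instance, require $y\mapsto f_{Y,Z\mid X}(y,z\mid x_0)$ to be uniformly continuous for each $z$. Then $\sup_y|\widetilde h_\delta-D|\le\int|z|\sup_y\sup_{|a|\le|\delta z|}|f(y-a,z)-f(y,z)|\,dz\to0$ by dominated convergence. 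Moreover $D$, being integrable and uniformly continuous, does then vanish at $\pm\infty$.
\end{itemize}
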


As discussed in \cite{wang2025unifiedframeworkidentificationinference}, \cref{fuzzy_WZ_lemma1} shows that the fuzzy local treatment effect at the kink can be expressed as the Hadamard derivative of $\phi$ applied to a conditional expectation. Notably, this conditional expectation is analogous to the local average structural derivative discussed in \cite{hoderlein2007, hoderlein2009}. The primary difference between \cref{fuzzy_WZ_lemma1} and Lemma 1 of \cite{wang2025unifiedframeworkidentificationinference} is that now the conditional expectation is compliance weighted; it averages $g_1$ evaluated at each unit's (random) baseline treatment $T_0 = b(x_0, \eta)$, weighted by the unit’s kink-responsiveness $\omega(\eta)$, such that units whose treatment is more strongly shifted by the kink contribute more to the identified marginal effect. With this structural representation in place, we can now finalize proof for causal identification of the fuzzy local treatment effect at the kink; however, we first need to establish a few more assumptions and regularity conditions. \medskip

\begin{assumption}[Smooth Disturbance Distributions] \label{smoothness_unobservables}
    The conditional distribution of $(\varepsilon,\eta)$ given $X=x$ is absolutely continuous with respect to Lebesgue measure. Furthermore, it admits a density $f_{\varepsilon,\eta \mid X}(e,u \mid x)$ that is continuously differentiable in $x$ on $I_{x_0}$ for all $(e,u)$. Finally, assume there exists some Lebesgue integrable function $\varpi(e, u)$ such that 
    \begin{align*}
        \underset{x \in I_{x_0}}{\mathrm{sup}} \ \left| \frac{\partial}{\partial x} f_{\varepsilon, \eta \mid X}(e, u \mid x) \right| \leq |\varpi(e, u) |.
    \end{align*}
    Finally, for each $y$ assume $\mathbb{I}(h(x_0 + t, e, u) \leq y) \to \mathbb{I}(h(x_0, e, u) \leq y)$ as $t \to 0$ for all $(e, u)$.
\end{assumption} \medskip

\begin{assumption}[Regularity 2] \label{theorem_regularity_conditions}
    Recall that we use the notational shorthand $h_x(x^{\pm}_0, \varepsilon, \eta) := \frac{\partial}{\partial x } h(x_0^{\pm},\varepsilon,\eta)$. Assume the following conditions hold:
    \begin{enumerate}
        \item[$(i)$] For each $c > 0$,
        \begin{align*}
            P\left(\, \left|h(x_0+\delta,\varepsilon,\eta)-h(x_0,\varepsilon,\eta)-\delta  h_x(x_0^+,\varepsilon,\eta) \right| \geq c|\delta| \mid X=x_0 \right) &= o(|\delta|) \\
P\left( \, \left|h(x_0+\delta,\varepsilon,\eta)-h(x_0,\varepsilon,\eta)-\delta h_x(x_0^-,\varepsilon,\eta)  \right| \geq c|\delta| \mid X=x_0 \right) &= o(|\delta|)
\end{align*}
as $\delta \downarrow 0$ and $\delta \uparrow 0$, respectively.
    \item[$(ii)$] The conditional distributions of $(Y, h_x(x_0^\pm,\varepsilon,\eta))$ given $X=x_0$ are absolutely continuous with respect to the Lebesgue measure with densities $f_{Y, h^\pm_x\mid X}(y, y^\prime \mid x_0)$  that are continuous in $y$ for each fixed $y^\prime$. Moreover, there exists a Lebesgue integrable function $\varpi_h : \mathbb{R} \to \mathbb{R}$ with $\int |y^\prime \varpi_h(y^\prime)| dy^\prime <\infty$ such that for all $y, y^\prime$,
    \begin{align*}
        f_{Y, h^\pm_x\mid X}(y,y^\prime \! \mid \! x_0) \leq |\varpi_h(y^\prime)|.
    \end{align*}
    \item[$(iii)$] Suppose that the conditional distribution of $\eta$ given $X=x$ is absolutely continuous with respect to Lebesgue measure, with conditional density $f_{\eta\mid X}(u\mid x)$ that is continuously differentiable in $x$ on $I_{x_0}$. Furthermore, that there exists some Lebesgue integrable function $\varpi_\eta(u)$ such that
    \begin{align*}
        \sup_{x\in I_{x_0}}\left| \frac{\partial}{\partial x}f_{\eta\mid X}(u \! \mid \! x) \right|\leq \varpi_\eta(u).
    \end{align*}
    \item[$(iv)$] Assume the function $b(x,u)$ is continuous in $x$ at $x_0$ for each $u$ and differentiable on each side of $x_0$ with one-sided derivatives $b^\prime(x_0^\pm,u)$. Finally, assume there exist Lebesgue integrable functions $\kappa_0,\kappa_1$ such that
    \begin{align*}
\sup_{x\in I_{x_0}}|b(x,u)|\le \kappa_0(u)
\qquad \text{and} \qquad 
\sup_{x\in I_{x_0} \setminus \{x_0\}}\left|\frac{\partial}{\partial x}b(x,u)\right|\le \kappa_1(u)
    \end{align*}
together with $\int \!  \kappa_0(u)\varpi_\eta(u)du<\infty$ and $\mathbb{E}[\kappa_1(\eta)\mid X=x_0]<\infty$.
    \end{enumerate}
\end{assumption}

\cref{smoothness_unobservables} allows for $(\varepsilon,\eta)$ to be both correlated with $X$ and to vary with $x$, however, it requires this variation be smooth on $I_{x_0}$. \cref{theorem_regularity_conditions}$(i)$ is a local linearization requirement for $h(x,\varepsilon,\eta)$ around $x_0$ that ensures that small changes in $x$ induce approximately linear shifts in $Y$ that are controlled by the one-sided derivatives $h_x(x_0^\pm,\varepsilon,\eta)$. Assumption $(ii)$ is another regularity condition ensuring the joint distribution of $(Y,h_x(x_0^\pm,\varepsilon,\eta))\mid X=x_0$ is well behaved at the kink. Finally, $(iii)$ ensures that any selection effect arising from $x\mapsto f_{\eta\mid X}(\cdot\mid x)$ is smooth and therefore does not itself generate a kink and $(iv)$ adds integrability conditions on $b(x,\eta)$ (and its derivative). With these assumptions in place, we now establish causal identification of the fuzzy local treatment effect at the kink.

\begin{theorem}[Fuzzy Kink Design Identification] \label{fuzzy_identification}
    Suppose the conditions of \cref{fuzzy_WZ_lemma1} and Assumptions 6-7 hold. Then, the fuzzy local treatment effect at the kink is identified as
    \begin{align*}
        \Delta_\phi^{F} = \phi'_{F_{Y 
        \mid X=x_0}}\big(\mathrm{FDRKD}(\cdot)\big),
    \end{align*}
    where $\mathrm{FDRKD}(\cdot)$ is the fuzzy distributional regression kink design estimand,
    \begin{align*}
        \mathrm{FDRKD}(y) = \frac{\frac{\partial}{\partial x} F_{Y \mid X}(y \! \mid \! x_0^+) - \frac{\partial}{\partial x} F_{Y \mid X}(y \! \mid \! x_0^-)}
{\mu^\prime_B(x_0^+) - \mu^\prime_B(x_0^-)}.
    \end{align*}
\end{theorem}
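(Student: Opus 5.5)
The plan is to combine \cref{fuzzy_WZ_lemma1} with a direct computation of the one-sided $x$-derivatives of $F_{Y\mid X}(y\mid x)$ at $x_0$. By \cref{fuzzy_WZ_lemma1}, $\Delta^F_\phi = \phi'_{F_{Y\mid X=x_0}}(D(\cdot))$ with
\begin{align*}
D(y) = -\frac{f_{Y\mid X}(y\mid x_0)}{\Delta_B}\,\mathbb{E}\big[\omega(\eta)\, g_1(T_0,x_0,\varepsilon)\mid Y=y, X=x_0\big].
\end{align*}
Since the Hadamard derivative is linear, it suffices to show $\mathrm{FDRKD}(y) = D(y)$ for every $y$. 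Because the denominators agree (both are $\Delta_B$), this reduces to proving
\begin{align*}
\tfrac{\partial}{\partial x}F_{Y\mid X}(y\mid x_0^+) - \tfrac{\partial}{\partial x}F_{Y\mid X}(y\mid x_0^-) = -f_{Y\mid X}(y\mid x_0)\,\mathbb{E}[\omega(\eta) g_1(T_0,x_0,\varepsilon)\mid Y=y,X=x_0].
\end{align*}

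First, write $F_{Y\mid X}(y\mid x) = \int \mathbb{I}(h(x,e,u)\le y) f_{\varepsilon,\eta\mid X}(e,u\mid x)\,de\,du$. For $\delta\downarrow 0$, split the difference quotient $\{F_{Y\mid X}(y\mid x_0+\delta)-F_{Y\mid X}(y\mid x_0)\}/\delta$ into two pieces. The first is a \emph{structural} piece, $\int [\mathbb{I}(h(x_0+\delta,e,u)\le y)-\mathbb{I}(h(x_0,e,u)\le y)] f(e,u\mid x_0)/\delta$. The second is a \emph{selection} piece, $\int \mathbb{I}(h(x_0+\delta,e,u)\le y)[f(e,u\mid x_0+\delta)-f(e,u\mid x_0)]/\delta$.

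For the selection piece, use the mean value theorem, the domination by $\varpi$ in \cref{smoothness_unobservables}, and pointwise convergence of the indicators. Dominated convergence then gives $\int \mathbb{I}(h(x_0,e,u)\le y)\,\partial_x f(e,u\mid x_0)$. This limit is the same from both sides, so it cancels in the difference of one-sided derivatives. This is exactly the ``no selection kink'' role of the assumption.

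For the structural piece, mirror the argument behind \cref{fuzzy_WZ_lemma1} (and Lemma 1 of Wang et al.), now with the path $x\mapsto h(x,\varepsilon,\eta)$ in place of $\delta\mapsto Y_\delta$. \cref{theorem_regularity_conditions}$(i)$ lets us replace $h(x_0+\delta)$ by $Y+\delta h_x(x_0^+,\varepsilon,\eta)$ at cost $o(\delta)$ in probability. Then $P(Y+\delta h^+_x\le y\mid x_0)-P(Y\le y\mid x_0)$ is written via the joint density $f_{Y,h^+_x\mid X}$ as $-\int\int_{y-\delta y'}^{y} f(s,y')\,ds\,dy'$ (with the sign handled by cases on $y'$). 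Dividing by $\delta$ and using continuity in $s$ together with the bound $|y'\varpi_h(y')|$ gives the limit $-f_{Y\mid X}(y\mid x_0)\,\mathbb{E}[h_x(x_0^+,\varepsilon,\eta)\mid Y=y,X=x_0]$. The symmetric argument gives the left limit. Subtracting and using \eqref{partial_deriv}, the $g_2$ terms cancel because $g_2(b(x,u),x,e)$ is continuous at $x_0$. What remains is $-f_{Y\mid X}(y\mid x_0)\,\mathbb{E}[(b'(x_0^+,\eta)-b'(x_0^-,\eta))\,g_1(T_0,x_0,\varepsilon)\mid Y=y,X=x_0]$, and $b'(x_0^+,\eta)-b'(x_0^-,\eta) = \omega(\eta)$, as required.

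Two loose ends remain. The first is that the FDRKD denominator is $\Delta_B$: differentiate $\mu_B(x)=\int b(x,u) f_{\eta\mid X}(u\mid x)\,du$ using \cref{theorem_regularity_conditions}$(iii)$--$(iv)$ and dominated convergence. The term $\int b\,\partial_x f_{\eta\mid X}$ is continuous at $x_0$ and cancels, leaving $\mathbb{E}[\omega(\eta)\mid X=x_0]=\Delta_B$. The second is matching the conditional expectation produced here with the one in \cref{fuzzy_WZ_lemma1}. The main obstacle I expect is the structural linearization step: the indicator difference is handled only in probability, not pointwise. The $o(|\delta|)$ rate in \cref{theorem_regularity_conditions}$(i)$ is what makes the event $\{|h(x_0+\delta)-Y-\delta h^+_x|\ge c\delta\}$ negligible after dividing by $\delta$. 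On its complement, we need to sandwich $Y+\delta(h^+_x\pm c)$ and let $c\to 0$, using boundedness of the density to control the $O(c)$ slack.
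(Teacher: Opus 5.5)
Your proposal is correct and follows essentially the same route as the paper. You invoke \cref{fuzzy_WZ_lemma1}, split the difference quotient of $F_{Y\mid X}(y\mid x)$ into a structural term and a selection term whose limit is the same from both sides, linearize the structural term via $h_x(x_0^\pm,\varepsilon,\eta)$ so that the $g_2$ contributions cancel and leave $\omega(\eta)g_1(T_0,x_0,\varepsilon)$, and treat $\mu_B$ with the same structural/selection split. The only cosmetic difference is that you control the linearization remainder by sandwiching with $Y+\delta(h_x^+\pm c)$, whereas the paper bounds the symmetric-difference event $\{\,|y-(Y_0+\delta Z)|\le |R_\delta|\,\}$ directly; both give the same $o(1)+O(c)$ bound.
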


Analogously to the identification results of \cite{wang2025unifiedframeworkidentificationinference} in sharp kink designs, \cref{fuzzy_identification} shows that the fuzzy local treatment effect at the kink is identified by applying the Hadamard derivative of the functional $\phi$ in the direction of the FDRKD estimand. Importantly, $\mathrm{FDRKD}(y)$ is interpretable as the local distributional effect per unit of the kink-induced treatment change, represented as a distributional local Wald ratio. Clearly, in the case of distributional kink designs the fuzzy Wasserstein derivative at the kink is identified as
\begin{align*}
     \Psi^\prime_{\mathcal{C}} = \left\{\int^1_0 \left( \frac{\frac{\partial}{\partial x} Q_{Y\mid X}(u \! \mid \! x_0^+)-\frac{\partial}{\partial x} Q_{Y \mid X}(u \! \mid \! x_0^-)}
{\mu^\prime_B(x_0^+) - \mu^\prime_B(x_0^-)}\right)^2 du\right\}^{1/2}
\end{align*}
after letting $\phi$ be the quantile function. Now that we have established identification in the fuzzy kink design setting, in the next section we discuss estimation.

\subsection{Estimation and Inference for the Kinked Wasserstein Effect}

In this section, we discuss estimation of the Wasserstein derivative at a policy kink. Our strategy will build off of the work of \cite{CHIANG2019589} and the framework established in \cref{estimation_inference_section}. First, note that $Q_{Y\mid X}(u \! \mid \! x_0^+) = Q_{Y\mid X}(u \! \mid \! x_0^-) =: Q_{Y\mid X}(u \! \mid \! x_0)$ due to the continuity at $x_0$. Second, recall that the derivative with respect to $x$ of the quantile function can be written as
\begin{align*}
    \frac{\partial}{\partial x}Q_{Y\mid X}(u \! \mid \! x_0^\pm) = -\frac{\frac{\partial}{\partial x}F_{Y\mid X} \! \left(Q_{Y\mid X}(u \! \mid \!  x_0) \! \mid \! x_0^\pm\right)}
{f_{Y\mid X} \! \left(Q_{Y\mid X}(u \! \mid \! x_0) \! \mid \! x_0\right)}.
\end{align*}
Thus, after taking the difference $\frac{\partial}{\partial x}Q_{Y\mid X}(u \! \mid \! x_0^+)-\frac{\partial}{\partial x}Q_{Y\mid X}(u \! \mid \! x_0^-)$ and then dividing by the first-stage kink $\mu^\prime_B(x_0^+)-\mu^\prime_B(x_0^-)$, it is clear that
\begin{align} \label{fuzzy_eq}
    \frac{\frac{\partial}{\partial x}Q_{Y\mid X}(u \! \mid \! x_0^+)-\frac{\partial}{\partial x}Q_{Y\mid X}(u \! \mid \! x_0^-)}
{\mu^\prime_B(x_0^+)-\mu^\prime_B(x_0^-)} = -\frac{\mathrm{FDRKD} \! \left(Q_{Y\mid X}(u \! \mid \! x_0)\right)}
{f_{Y\mid X} \!\left(Q_{Y\mid X}(u \! \mid \! x_0) \! \mid \! x_0\right)}.
\end{align}
With this in mind, we can see that estimation of $Q_{Y\mid X}(u \! \mid \! x_0)$ is the same as in \cref{estimation_inference_section}; the only additional terms to estimate are $\frac{\partial}{\partial x} F_{Y \mid X}(y \! \mid \! x_0^\pm)$, $\mu^\prime_B(x_0^\pm)$, and $f_{Y \mid X}( \cdot \mid x_0)$. We begin by considering estimation of the first two terms. Recall that in \cref{estimation_inference_section} we considered a one-sided Taylor expansion of $F_{Y \mid X}(y \! \mid \! x)$ about $x = x_0$. Specifically, we defined
\begin{align*}
    \alpha_{a, p}(y) = \left[F_{Y \mid X}(y \! \mid \! x_0^{\pm}), F^{(1)}_{Y \mid X}(y \! \mid \! x_0^{\pm})\frac{h}{1!}, \ldots, F^{(p)}_{Y \mid X}(y \! \mid \! x_0^{\pm})\frac{h^p}{p!}  \right]^T.
\end{align*}
and then estimated $\alpha_{a, p}(y)$ via one-sided local weighted least squares. Consequently, leveraging this exact approach it follows that
\begin{align*}
    \frac{\partial}{\partial x} \widehat{F}_{Y\mid X}(y \! \mid \! x_0^\pm) = \frac{1}{h} e_1^T \widehat{\alpha}_{\pm,p}(y)
\end{align*}
where $e_1 = (0, 1, 0,  \ldots, 0)^T$. Notably, we can repeat this approach to estimate $\mu^\prime_B(x_0^\pm)$. Specifically, if we use the same local polynomial estimation, now with outcomes $T_i = b(X_i, \eta_i)$, i.e.
\begin{align} \label{beta_equation}
    \widehat{\beta}_{\pm, p} &= \underset{\beta \in \mathbb{R}^{p + 1}}{\text{arg min}} \sum^n_{i=1} \delta_i^\pm \left(T_i - r_p \!\left( \frac{X_i - x_0}{h} \right)^T \beta \right)^2 K \! \left( \frac{X_i - x_0}{h} \right),
\end{align}
then we can similarly obtain the estimator $ \widehat{\mu}^{\prime}_B(x^\pm_0) = \frac{1}{h} e_1^T \widehat{\beta}_{\pm,p}$. Bias correction can similarly be implemented following the same steps discussed in \cref{estimation_inference_section}. Finally, we note that there are many ways one could estimate $f_{Y \mid X}( \cdot \mid x_0)$. One simple method would be to define a local polynomial conditional density estimator by replacing $T_i$ in \cref{beta_equation} with a kernel in $y$, i.e. $h^{-1}_y K \! \left((Y_i-y)/ h_y\right)$. Putting everything together, if we plug-in all of our estimators into \cref{fuzzy_eq} then squaring and numerically integrating over $(0, 1)$ yields an estimate of $\Psi^\prime_\mathcal{C}$.

Inference for the Wasserstein derivative at the policy kink follows largely in the same manner as discussed in \cref{inference_section}; the only major difference is the scaling. As discussed in \cite{calonico_2014, card_rkd_2015}, estimating a derivative at a boundary introduces an additional $1 / h$ scaling, so its variance now scales as $(nh)^{-1} (h^2)^{-1} = (nh^3)^{-1}$. Consequently, if we wanted to construct a confidence interval for $\Psi^\prime$ we simply need to correct this scaling. Following  \cref{conservative_interval}, we can obtain an analogous interval of
\begin{align*}
    C^{ \prime \prime}_n = \left[ (\widehat{\Psi}^\prime_{\mathcal{C}})^2 \pm z_{1-\alpha/2}\sqrt{\widehat{s}^2_n + \frac{c^2}{nh^3}} \, \right]
\end{align*}
where $\widehat{s}_n$ is the estimated standard deviation of $(\widehat{\Psi}^\prime_{\mathcal{C}})^2$, $z_{1-\alpha/2}$ is the $1 - \alpha / 2$ quantile of a standard Normal distribution, and $c$ is some constant, such as $\mathbb{V}(Y)$. 

\subsection{Interpretation of the Kinked Wasserstein Effect}

Interpretation of the Wasserstein derivative at a policy kink follows analogously to the interpretation established in \cref{interpretation_section} for discontinuity designs. To see this, define the quantile effect curve at the kink by
\begin{align*}
    \Delta Q^\prime(u) =  \frac{\frac{\partial}{\partial x}Q_{Y\mid X}(u \! \mid \! x_0^+)-\frac{\partial}{\partial x}Q_{Y\mid X}(u \! \mid \! x_0^-)}
{\mu^\prime_B(x_0^+)-\mu^\prime_B(x_0^-)}.
\end{align*}
Then, again we can immediately see that $\tau^\prime = \int_0^1 \Delta Q^\prime(u)du$ and $(\Psi^\prime)^2 =\int_0^1 [\Delta Q^\prime(u)]^2du$, so letting $U \sim \mathrm{Uniform}(0, 1)$ we again obtain the same variance decomposition
\begin{align*}
    (\Psi^\prime)^2=(\tau^\prime)^2+ \mathbb{V}(\Delta Q^\prime(U)).
\end{align*}
Thus, $\Psi^\prime$ captures both the mean drift through the kink (as measured by $\tau^\prime$) and the heterogeneity of the treatment effect across quantiles (as measured by $\mathbb{V}(\Delta Q^\prime(U))$). Similarly, by applying the Cauchy-Schwarz inequality we can obtain the kink analogue of \cref{effect_inequality}, $|\tau^\prime|\leq \Psi^\prime$. Equality holds if and only if the marginal effect is purely additive, or equivalently that the quantile effect curve is flat, i.e. $\Delta Q^\prime(u)= \delta$ for all $u\in(0,1)$. Finally, it is possible to obtain an analogous version of \cref{series_representation} for the kinked Wasserstein effect. Let $\lambda_k(x) =\int_0^1 Q_{Y\mid X}(u \! \mid \! x) P_{k-1}^*(u)du$ be the conditional $L$-moment evaluated at $x$ and define its one-sided derivatives as
\begin{align*}
    \lambda^\prime_k(x_0^\pm)=\int_0^1 \frac{\partial}{\partial x} \! \left\{ Q_{Y\mid X}(u \! \mid \! x_0^\pm) \right\}P_{k-1}^*(u)du.
\end{align*}
Then, following the same arguments as in the proof of \cref{series_representation}, it can be shown that the Wasserstein derivative at the kink may be decomposed into derivatives of $L$-moments. We formalize this decomposition in the following theorem.
\begin{theorem}[$L$-Moment Derivative Decomposition] \label{wass_deriv_decomp}
    Suppose that $\int^1_0 [\Delta Q^\prime(u)]^2 du < \infty$. Then, 
    \begin{align*}
    \Psi^\prime_{\mathcal{C}} = \left\{\sum_{k=1}^\infty (2k-1) \left( \frac{\lambda^\prime_k(x_0^+)-\lambda^\prime_k(x_0^-)}{\mu^\prime_B(x_0^+)-\mu^\prime_B(x_0^-)} \right)^2 \right\}^{1/2}.
\end{align*}
\end{theorem}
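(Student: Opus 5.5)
The plan is to mirror the argument for \cref{series_representation}, now applied to the kink quantile effect curve $\Delta Q^\prime(u)$ instead of $\Delta Q(u)$. By the identification result following \cref{fuzzy_identification}, $(\Psi^\prime_{\mathcal{C}})^2 = \int_0^1 [\Delta Q^\prime(u)]^2 du$. The hypothesis $\int_0^1 [\Delta Q^\prime(u)]^2 du < \infty$ places $\Delta Q^\prime$ in $L^2([0,1])$. Everything then reduces to a Parseval identity in the shifted Legendre basis.

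\textbf{Step 1 (orthonormal basis).} Recall that $\{P^*_{k-1}\}_{k\geq 1}$ is a complete orthogonal system in $L^2([0,1])$ with $\int_0^1 P^*_{j}(u)P^*_{k}(u)\,du = \mathbb{I}(j=k)/(2k+1)$. Hence $\phi_k := \sqrt{2k-1}\,P^*_{k-1}$, for $k \geq 1$, is an orthonormal basis. Parseval's identity then gives
\begin{align*}
(\Psi^\prime_{\mathcal{C}})^2 = \sum_{k=1}^\infty \langle \Delta Q^\prime, \phi_k\rangle^2 = \sum_{k=1}^\infty (2k-1)\left(\int_0^1 \Delta Q^\prime(u)P^*_{k-1}(u)\,du\right)^2 .
\end{align*}

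\textbf{Step 2 (identify the coefficients).} Since $\Delta_B = \mu^\prime_B(x_0^+)-\mu^\prime_B(x_0^-)$ is a nonzero constant, linearity gives
\begin{align*}
\int_0^1 \Delta Q^\prime(u)P^*_{k-1}(u)\,du = \frac{\int_0^1 \frac{\partial}{\partial x}Q_{Y\mid X}(u \! \mid \! x_0^+)P^*_{k-1}(u)\,du - \int_0^1 \frac{\partial}{\partial x}Q_{Y\mid X}(u \! \mid \! x_0^-)P^*_{k-1}(u)\,du}{\Delta_B}.
\end{align*}
By the definition of $\lambda^\prime_k(x_0^\pm)$ given just before the theorem, the numerator is exactly $\lambda^\prime_k(x_0^+)-\lambda^\prime_k(x_0^-)$. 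Substituting into Step 1 and taking square roots yields the claim.

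\textbf{Main obstacle.} The delicate point is splitting the single integral of the difference into two separate integrals. This requires each one-sided derivative $u\mapsto \frac{\partial}{\partial x}Q_{Y\mid X}(u\mid x_0^\pm)$ to be integrable against $P^*_{k-1}$, whereas the hypothesis controls only their difference.

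\begin{itemize}
\item My first attempt would be to note that each $P^*_{k-1}$ is bounded on $[0,1]$. It then suffices that each one-sided derivative lie in $L^1([0,1])$.
\item This holds, for instance, under the compact support and density-bounded-away-from-zero conditions of \cref{estimation_inference_section}. These bound $\partial_x F/f$ via the representation preceding \cref{fuzzy_eq}.
\item Alternatively, I would read $\lambda^\prime_k(x_0^+)-\lambda^\prime_k(x_0^-)$ as notation for $\int_0^1 (\partial_x Q^+ - \partial_x Q^-)P^*_{k-1}\,du$, which is always well defined under the stated hypothesis.
\end{itemize}

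I would make this convention explicit rather than add assumptions. If one also wants $\lambda^\prime_k(x_0^\pm)$ to equal the true $x$-derivative of $\lambda_k(x)$, interchanging derivative and integral would need dominated convergence, but the theorem as stated does not require that.
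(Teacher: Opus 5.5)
Your proposal is correct and follows essentially the same route as the paper: expand $\Delta Q^\prime$ in the shifted Legendre basis, identify $\langle \Delta Q^\prime, P^*_{k-1}\rangle$ as $(\lambda^\prime_k(x_0^+)-\lambda^\prime_k(x_0^-))/\Delta_B$ by linearity, and apply Parseval. The differences are cosmetic: you normalize to the orthonormal basis $\phi_k=\sqrt{2k-1}\,P^*_{k-1}$, whereas the paper keeps the orthogonal basis and carries the weights $\|P^*_{k-1}\|_2^2=(2k-1)^{-1}$ through the expansion. Your remark about splitting the integral of the difference into two integrals addresses a point the paper passes over silently, since it simply takes $\lambda^\prime_k(x_0^\pm)$ to be defined by those integrals. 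Making that convention, or an $L^1$ condition on each one-sided derivative, explicit is a fair refinement rather than a needed fix.
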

Now, each $k$ in the series representation of $\Psi^\prime_{\mathcal{C}}$ represents the instantaneous change in $L$-location, $L$-scale, $L$-skewness, etc. at the kink. With these representations and interpretations established, in the next section we analyze real data sets to see how these methods can be implemented in practice.

\section{Simulations and Data Analysis} \label{application_section}

In this section we consider the practical implementation of distributional discontinuity designs and distributional kink designs. First, we compare the empirical coverage and interval width of the two confidence intervals proposed in \cref{confidence_intervals_section}. Next, we re-analyze two natural experiments: one regression discontinuity design and one regression kink design. Our goal is to compare traditional mean-based effects to our proposed distributional effects.

\subsection{Simulations}

In what follows we conduct a simulation study to compare the empirical widths and coverage of the two conservative confidence intervals for $\Psi$ described in \cref{confidence_intervals_section}. We consider three data generating processes, all of which feature a running variable drawn $X_i \sim \mathrm{Uniform}(-1,1)$, treatment sharply assigned such that $A_i = \mathbb{I}(X_i \geq 0)$, and the function $m(x) = 0.5x+x^2$:
\begin{enumerate}
    \item[$(i)$] \textit{Additive Effect:} Let $Y_i=m(X_i)+\tau A_i+\varepsilon_i$ with $\varepsilon_i\sim N(0,1)$.
    \item[$(ii)$] \textit{Differing Variances:} Let $Y_i=m(X_i)+\sigma(A_i)\varepsilon_i$ with $\sigma(0)=1$ and $\sigma(1)=2$.
    \item[$(iii)$] \textit{Heavy Tailed:} Let $Y_i =m(X_i) + \tau A_i + (1 + 0.3|X_i|) (1 + 0.6A_i) (\mathrm{exp}(\varepsilon_i) - \mathrm{exp}(1/2))$.
\end{enumerate}

The three data generating processes are chosen to represent increasingly rich forms of treatment heterogeneity. Setting $(i)$ is a simple additive treatment effect model where treatment shifts the conditional distribution by a constant $\tau = 1/2$ at every quantile. Setting $(ii)$ leaves the mean unchanged at the cutoff, but doubles the standard deviation. Finally, in setting $(iii)$ we introduce errors that are skewed and heavy tailed via $
\mathrm{exp}(\varepsilon_i) - \mathrm{exp}(1/2)$; furthermore, we introduce the factors $(1+0.3|X_i|)$ and $(1+0.6A_i)$ to induce heteroskedasticity in the running variable and an explicitly non-additive treatment effect.

In the simulation we consider $n \in \{10^3, 10^4, 10^5, 10^6\}$, which corresponds to one-sided within-bandwidth sample sizes of roughly 185, 1185, 7500, and 47300. We also consider different truncation profiles for $\Psi$, where instead of integrating over the full quantile grid $u \in (0, 1)$ we consider $u \in (\gamma, 1 - \gamma)$ for $\gamma \in \{0.05, 0.1, 0.25\}$. We find that for small to modest sample sizes some degree of truncation is useful for numerical stability. For each replication and choice of $n$, we estimate $\Psi$ using the bias-corrected local polynomial procedure described in \cref{estimation_inference_section}. We use the default bandwidth rule $h_n \propto n^{-1/5}$, 1,000 bootstrap replications, and 10,000 overall Monte-Carlo simulations.

\begin{figure}[h]
\centering

\definecolor{gammaRed}{HTML}{F8766D}   
\definecolor{gammaGreen}{HTML}{00BA38} 
\definecolor{gammaBlue}{HTML}{619CFF}  

\pgfplotsset{
  CnSimple/.style={solid, thick, mark=*, mark options={solid}, mark size=2.0},
  CnBand/.style={dash pattern=on 5pt off 3pt, thick, mark=triangle*, mark options={solid}, mark size=2.2},
  RefLine/.style={black!60, dashed, thick},
}

\begin{tikzpicture}
\begin{groupplot}[
  group style={group size=3 by 2, horizontal sep=1.3cm, vertical sep=1.3cm},
  width=0.34\textwidth,
  height=0.30\textwidth,
  xmode=log, log basis x=10,
  xtick={1000,10000,100000, 1000000},
  xlabel={$n$},
  ymajorgrids=true,
  xmajorgrids=true,
  grid style={line width=0.1pt, draw=gray!20, opacity=0.5},
  xtick pos=bottom,
  ytick pos=left
]

\nextgroupplot[
  title={$(i)$ Additive}, xlabel={},
    ymode=log, log basis y=10,
  ylabel={Interval width},
  legend to name=sharedlegend_w,
  legend columns=3,
  legend cell align=left,
  legend image post style={mark=none} 
]

\addlegendimage{mark=none, thick, color=gammaRed}
\addlegendentry{$\gamma=0.05 \, $ }
\addlegendimage{mark=none, thick, color=gammaGreen}
\addlegendentry{$\gamma=0.10 \, $}
\addlegendimage{mark=none, thick, color=gammaBlue}
\addlegendentry{$\gamma=0.25$}

\addplot[CnSimple, color=gammaRed]
  table[col sep=comma, x="n", y="width"]
  {files/simulations/A_baseline_width_gamma0p05_Cnprime_simple.csv};

\addplot[CnBand, color=gammaRed]
  table[col sep=comma, x="n", y="width"]
  {files/simulations/A_baseline_width_gamma0p05_Cn_band.csv};

\addplot[CnSimple, color=gammaGreen]
  table[col sep=comma, x="n", y="width"]
  {files/simulations/A_baseline_width_gamma0p10_Cnprime_simple.csv};

\addplot[CnBand, color=gammaGreen]
  table[col sep=comma, x="n", y="width"]
  {files/simulations/A_baseline_width_gamma0p10_Cn_band.csv};

\addplot[CnSimple, color=gammaBlue]
  table[col sep=comma, x="n", y="width"]
  {files/simulations/A_baseline_width_gamma0p25_Cnprime_simple.csv};

\addplot[CnBand, color=gammaBlue]
  table[col sep=comma, x="n", y="width"]
  {files/simulations/A_baseline_width_gamma0p25_Cn_band.csv};

\nextgroupplot[title={$(ii)$ Diff. Variances}, ylabel={},
  ymode=log, log basis y=10]

\addplot[CnSimple, color=gammaRed]
  table[col sep=comma, x="n", y="width"]
  {files/simulations/B_scale_width_gamma0p05_Cnprime_simple.csv};

\addplot[CnBand, color=gammaRed]
  table[col sep=comma, x="n", y="width"]
  {files/simulations/B_scale_width_gamma0p05_Cn_band.csv};

\addplot[CnSimple, color=gammaGreen]
  table[col sep=comma, x="n", y="width"]
  {files/simulations/B_scale_width_gamma0p10_Cnprime_simple.csv};

\addplot[CnBand, color=gammaGreen]
  table[col sep=comma, x="n", y="width"]
  {files/simulations/B_scale_width_gamma0p10_Cn_band.csv};

\addplot[CnSimple, color=gammaBlue]
  table[col sep=comma, x="n", y="width"]
  {files/simulations/B_scale_width_gamma0p25_Cnprime_simple.csv};

\addplot[CnBand, color=gammaBlue]
  table[col sep=comma, x="n", y="width"]
  {files/simulations/B_scale_width_gamma0p25_Cn_band.csv};

\nextgroupplot[title={$(iii)$ Heavy}, ylabel={}, xlabel={},
  ymode=log, log basis y=10]

\addplot[CnSimple, color=gammaRed]
  table[col sep=comma, x="n", y="width"]
  {files/simulations/C_heavy_width_gamma0p05_Cnprime_simple.csv};

\addplot[CnBand, color=gammaRed]
  table[col sep=comma, x="n", y="width"]
  {files/simulations/C_heavy_width_gamma0p05_Cn_band.csv};

\addplot[CnSimple, color=gammaGreen]
  table[col sep=comma, x="n", y="width"]
  {files/simulations/C_heavy_width_gamma0p10_Cnprime_simple.csv};

\addplot[CnBand, color=gammaGreen]
  table[col sep=comma, x="n", y="width"]
  {files/simulations/C_heavy_width_gamma0p10_Cn_band.csv};

\addplot[CnSimple, color=gammaBlue]
  table[col sep=comma, x="n", y="width"]
  {files/simulations/C_heavy_width_gamma0p25_Cnprime_simple.csv};

\addplot[CnBand, color=gammaBlue]
  table[col sep=comma, x="n", y="width"]
  {files/simulations/C_heavy_width_gamma0p25_Cn_band.csv};

\nextgroupplot[
  title={}, xlabel={},
    ylabel={Coverage}, ymin=0.8
]

\addplot[RefLine, domain=900:1100000] {0.95};

\addplot[CnSimple, color=gammaRed]
  table[col sep=comma, x="n", y="coverage"]
  {files/simulations/A_baseline_coverage_gamma0p05_Cnprime_simple.csv};
\addplot[CnBand, color=gammaRed]
  table[col sep=comma, x="n", y="coverage"]
  {files/simulations/A_baseline_coverage_gamma0p05_Cn_band.csv};

\addplot[CnSimple, color=gammaGreen]
  table[col sep=comma, x="n", y="coverage"]
  {files/simulations/A_baseline_coverage_gamma0p10_Cnprime_simple.csv};
\addplot[CnBand, color=gammaGreen]
  table[col sep=comma, x="n", y="coverage"]
  {files/simulations/A_baseline_coverage_gamma0p10_Cn_band.csv};

\addplot[CnSimple, color=gammaBlue]
  table[col sep=comma, x="n", y="coverage"]
  {files/simulations/A_baseline_coverage_gamma0p25_Cnprime_simple.csv};
\addplot[CnBand, color=gammaBlue]
  table[col sep=comma, x="n", y="coverage"]
  {files/simulations/A_baseline_coverage_gamma0p25_Cn_band.csv};

\nextgroupplot[title={}, ylabel={}, ymin=0.8,]

\addplot[RefLine, domain=900:1100000] {0.95};

\addplot[CnSimple, color=gammaRed]
  table[col sep=comma, x="n", y="coverage"]
  {files/simulations/B_scale_coverage_gamma0p05_Cnprime_simple.csv};
\addplot[CnBand, color=gammaRed]
  table[col sep=comma, x="n", y="coverage"]
  {files/simulations/B_scale_coverage_gamma0p05_Cn_band.csv};

\addplot[CnSimple, color=gammaGreen]
  table[col sep=comma, x="n", y="coverage"]
  {files/simulations/B_scale_coverage_gamma0p10_Cnprime_simple.csv};
\addplot[CnBand, color=gammaGreen]
  table[col sep=comma, x="n", y="coverage"]
  {files/simulations/B_scale_coverage_gamma0p10_Cn_band.csv};

\addplot[CnSimple, color=gammaBlue]
  table[col sep=comma, x="n", y="coverage"]
  {files/simulations/B_scale_coverage_gamma0p25_Cnprime_simple.csv};
\addplot[CnBand, color=gammaBlue]
  table[col sep=comma, x="n", y="coverage"]
  {files/simulations/B_scale_coverage_gamma0p25_Cn_band.csv};

\nextgroupplot[title={}, ylabel={}, xlabel={}, ymin=0.8]

\addplot[RefLine, domain=900:1100000] {0.95};

\addplot[CnSimple, color=gammaRed]
  table[col sep=comma, x="n", y="coverage"]
  {files/simulations/C_heavy_coverage_gamma0p05_Cnprime_simple.csv};
\addplot[CnBand, color=gammaRed]
  table[col sep=comma, x="n", y="coverage"]
  {files/simulations/C_heavy_coverage_gamma0p05_Cn_band.csv};

\addplot[CnSimple, color=gammaGreen]
  table[col sep=comma, x="n", y="coverage"]
  {files/simulations/C_heavy_coverage_gamma0p10_Cnprime_simple.csv};
\addplot[CnBand, color=gammaGreen]
  table[col sep=comma, x="n", y="coverage"]
  {files/simulations/C_heavy_coverage_gamma0p10_Cn_band.csv};

\addplot[CnSimple, color=gammaBlue]
  table[col sep=comma, x="n", y="coverage"]
  {files/simulations/C_heavy_coverage_gamma0p25_Cnprime_simple.csv};
\addplot[CnBand, color=gammaBlue]
  table[col sep=comma, x="n", y="coverage"]
  {files/simulations/C_heavy_coverage_gamma0p25_Cn_band.csv};

\end{groupplot}

\path (group c1r2.south west) -- (group c3r2.south east)
  node[midway, yshift=-1.5cm] {\pgfplotslegendfromname{sharedlegend_w}};

\end{tikzpicture}

\caption{Monte Carlo confidence interval widths and coverage for bootstrap intervals (dashed) and simple intervals (solid) across trimming levels $\gamma$ and sample sizes $n$.}
\label{simulation_results}
\end{figure}
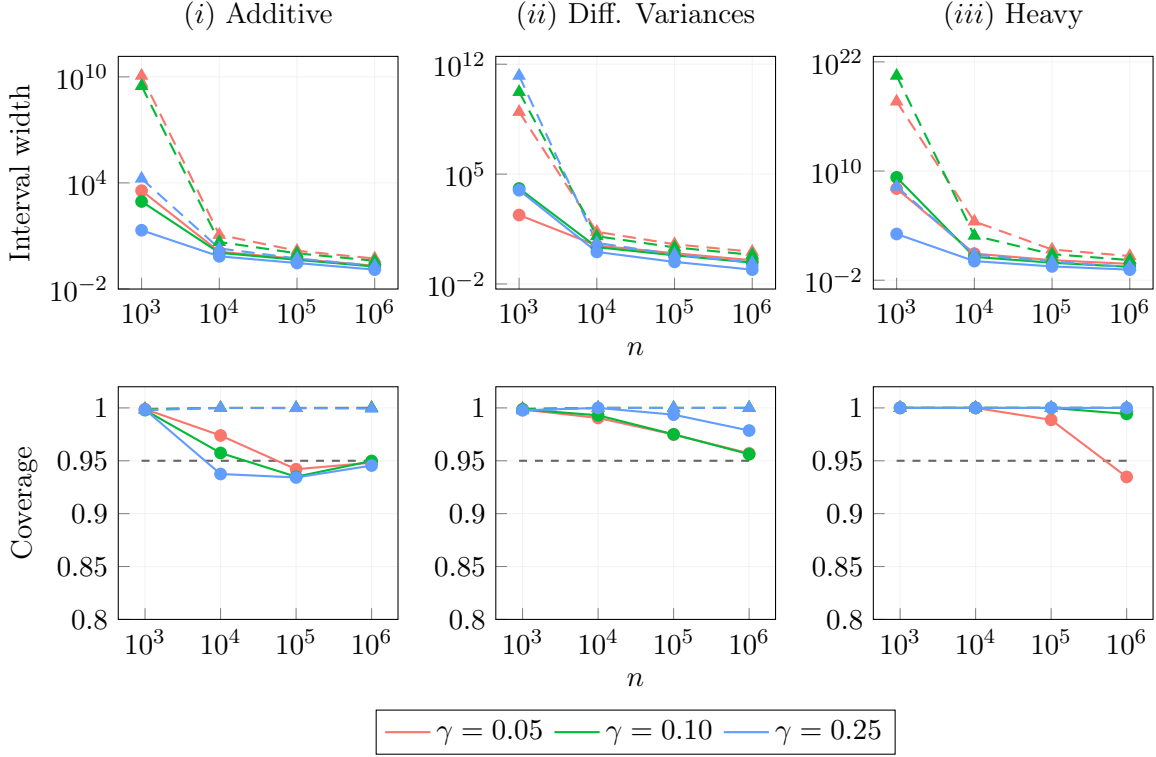

Our simulation results are visualized in \cref{simulation_results}. Broadly speaking, we find that for small sample sizes, the conservative intervals defined in \cref{conservative_interval} are an order of magnitude smaller than the bootstrap intervals; as the sample size increases this difference becomes less pronounced, but does not go away. This is likely because, as outlined in \cite{CHIANG2019589}, the bootstrap intervals estimate both the running variable density at the cutoff as well as conditional outcome densities evaluated at estimated quantiles; both can be unstable with small sample sizes. Furthermore, because \cite{CHIANG2019589} define a uniform confidence band over $u$, if any quantiles are poorly estimated then the bands can blow up in width. Furthermore, we can see that while coverage is theoretically conservative for both methods, as the sample size increased, the conservative intervals attained approximate $1 - \alpha$ coverage; meanwhile, the bootstrap intervals always overcovered.

\subsection{Distributional Discontinuity Design Analysis}

In this section we re-analyze a canonical regression discontinuity design analysis in order to compare the Wasserstein effect to the conventional mean effect at the cutoff. We consider the work of \cite{lee2008}, who studied the causal effect of electoral incumbency in U.S. house elections by leveraging the idea that elections decided by very small margins are ``as good as randomized;'' their data is publicly available via the \texttt{R} package \texttt{RDHonest}.  The running variable is the Democratic vote share margin of victory in a given election (defined by the Democratic vote share minus the vote share of the strongest opponent), with a corresponding discontinuity at zero. The primary outcome is the Democratic party's vote share in the subsequent election. Empirically, \cite{lee2008} finds clear evidence of an incumbency advantage, where barely winning an election leads to a statistically significant jump in the following election, to the tune of a 7-8\% increase in vote share. In what follows, we consider whether or not there were interesting distributional effects not visible by considering the mean alone. 

In our re-analysis, to keep things simple we choose $p=1$ for our local polynomial estimator, a triangular kernel, and we set $h = n^{-1/5}$. Under these settings, using the \texttt{rdrobust} package we estimate the mean effect to be a 7.099 increase in vote share with a 95\% confidence interval of [2.648 , 8.038] --- these results replicate the findings of \cite{lee2008}. Next, we consider the Wasserstein effect without truncating quantiles: we obtain an estimate for $\Psi$ of 7.544 with a 95\% confidence interval of [5.023, 9.412]. The fact that $\widehat{\tau}$ and $\widehat{\Psi}$ are so close to each other suggests there was not much heterogeneity in the treatment effect. Indeed, if we break down $\widehat{\Psi}$ into the distributional $R^2$ table as shown in \cref{comparison_table_lee08}, 
\begin{table}[h]
\centering
\begin{tabular}{r|c}
\textit{Moment} & \textit{Explained Distance}\\ \hline
$k = 1$ & 0.5598  \\
$k = 2$ & 0.0413  \\
$k = 3$ & 0.1118 \\
$k \geq 4$ & 0.2871  \\
\end{tabular}
\caption{Explained distributional variation for the incumbency advantage}
\label{comparison_table_lee08}
\end{table}

we can see that most of the effect is explained by the variation in the $L$-location, with some notable changes as well in $L$-skewness and higher-order decompositions. Note that as shown in \cref{rk_equation}, because $\lambda_1$ is the mean, it follows that
\begin{align*}
    R^2_1 = \frac{\tau^2}{\Psi^2} = 1 - \gamma,
\end{align*}
where $\gamma$ is the heterogeneity index discussed in \cref{ate_comparison_sec}. Thus, if we were to plug in each estimate, we'd find $7.099^2 / 7.544^2 \approx 0.886$ as the explained distributional distance coming from the first moment. The gap between 0.5598 and 0.886 is likely due to finite-sample differences in mean vs quantile effect estimation. Our findings are further validated by considering the estimated Wasserstein dominance. Here, we find $\widehat{\rho} = 0.5777$, suggesting that winning a close election was pretty uniformly beneficial, with little quantile crossing. By combining the traditional mean effects analysis with our distributional analysis, we were able to obtain a much more complete understanding of the causal effect of the incumbency advantage.

\subsection{Distributional Kink Design}

In this section we re-analyze an existing regression kink design analysis in order to compare the Wasserstein derivative to the mean-effect at the kink. Specifically, we consider the work of \cite{Lundqvist_2014} who study whether general intergovernmental grants increase local public employment using known kinks in the Swedish grant system; the data used is publicly available via the \texttt{R} package \texttt{causalweight}. The running variable is the net out-migration rate in a given municipality, $m_{it}=100(1-n_{i,t-2}/n_{i,t-12})$ where $n_{i, t}$ is the population in the $i$th municipality at time $t$. That is to say, the percentage population decrease over a ten-year window with a two-year lag. The policy rule for out-migration grants is given by
\begin{align*}
   g^{m}_{it}=
\begin{cases}
a \, (m_{it}-2), & m_{it}>2,\\
0, & m_{it}\le 2,
\end{cases} 
\end{align*}
where the kink is at $2\%$ and $a$ is a constant (100 Swedish krona per capita per additional percentage point above $2\%$). In their analysis \cite{Lundqvist_2014} find no statistically significant effect of grants on total local public employment, making their study a good point of comparison to distributional effects that consider more than just the mean. 

In our re-analysis, we consider $h \in \{5, 10, 15\}$ and a uniform kernel to match the analysis of \cite{Lundqvist_2014}; we report the $h = 10$ results, although they are all qualitatively similar. Furthermore, we demean the outcome and benefit by year and cluster our standard errors at the municipality. Using the \texttt{rdrobust} package we estimate the local average slope $\widehat{\tau}^\prime_{\mathcal{C}}$ to be -0.050, with a 95\% confidence interval of $[-0.378 , 0.277]$, matching the null effect found in \cite{Lundqvist_2014}. Our estimated value for the Wasserstein derivative $\widehat{\Psi}^\prime_{\mathcal{C}}$ is 0.6713, with a 95\% confidence interval of $[0.000, 1.432]$, indicating a null distributional effect. However, we do find an interesting characterization of the effect in the $L$-moment decomposition, as shown in \cref{comparison_table_lundqvist}. It appears that most of the distance explained in $\widehat{\Psi}^\prime_{\mathcal{C}}$ comes from higher-order moments; notably, almost none comes from the mean effect. This may suggest there is more to the story worth looking into: perhaps there were a few outlier municipalities that used their grants extremely well (or poorly). Practitioners may then consider looking into targeted hypothesis tests on specific $L$-moments to further explore whether an effect exists at these levels.

\begin{table}[h]
\centering
\begin{tabular}{r|c}
\textit{Moment} & \textit{Explained Distance}\\ \hline
$k = 1$ & 0.0007  \\
$k = 2$ & 0.1750  \\
$k = 3$ & 0.1362 \\
$k \geq 4$ & 0.6881  \\
\end{tabular}
\caption{Explained distributional variation for the grant effect.}
\label{comparison_table_lundqvist}
\end{table}

\section{Discussion and Conclusion} \label{discussion_section}

In this paper we introduced distributional discontinuity designs and distributional kink designs, a framework for studying distributional causal effects for a scalar outcome at the boundary of a discontinuity or kink in treatment assignment. A key practical motivation for this approach is that many applied regression discontinuity and kink analyses remain centered on mean effects, despite the fact that distributional changes are often of substantive interest. However, it is not our intention to replace these classical tools; rather, we show that distributional causal effects play a complementary role. The Wasserstein effect establishes a natural reference point for both mean and quantiles effects. Since we show that $\Psi$ weakly upper bounds the magnitude of the average treatment effect, practitioners now have an interpretable index of treatment effect heterogeneity whenever $\Psi$ is meaningfully larger than $|\tau|$. Furthermore, we show that the Wasserstein distance admits an orthogonal decomposition into squared differences in $L$-moments. In practice, this decomposition provides a principled way to answer questions like ``is the effect mostly a shift in the distribution means, or is it driven by changes in dispersion, asymmetry, or tail behavior?'' 

Although this work primarily focuses on regression discontinuity and kink designs, the principles outlined here extend beyond these specific applications. One could easily estimate the Wasserstein effect in a randomized controlled trial, for example, or any setting where the exchangeability assumption holds. The exact same interpretations and decompositions described in \cref{interpretation_section} will still hold, all that would change is the way quantile effects are estimated. Thus, we see this work as opening the door toward considering distributional distances as interesting causal effects in their own right. It would be interesting to extend this distributional analysis more broadly to other quasi-experimental designs, such the difference-in-differences framework.

Finally, we note that although this work establishes a novel framework for distributional causal effects at a treatment discontinuity, we only consider univariate outcomes. Future work could establish methods for estimating the Wasserstein distance between multivariate outcome distributions, where identification and estimation no longer reduces to a distance between quantile functions. Furthermore, it would be interesting to consider distributional distances beyond the Wasserstein distance; perhaps other distributional measures capture other underlying phenomena in the data, and admit their own useful decompositions.

\section*{Acknowledgments}

This paper is a product of the Iowa Agriculture and Home Economics Experiment Station, Ames, Iowa. Project No. IOW03717 is supported by USDA/NIFA and State of Iowa funds. Any opinions, findings, conclusions, or recommendations expressed in this publication are those of the authors and do not necessarily reflect the views of the U.S. Department of Agriculture. The authors would like to sincerely thank Emileigh Harrison and Terence Chau for many helpful discussions and Zhaohua Zeng for writing \texttt{R} code to estimate quantile treatment effects.

\section*{References}
\vspace{-1cm}
\bibliographystyle{abbrvnat}
\bibliography{references}

\begin{thebibliography}{56}
\providecommand{\natexlab}[1]{#1}
\providecommand{\url}[1]{\texttt{#1}}
\expandafter\ifx\csname urlstyle\endcsname\relax
  \providecommand{\doi}[1]{doi: #1}\else
  \providecommand{\doi}{doi: \begingroup \urlstyle{rm}\Url}\fi

\bibitem[Ambrosio et~al.(2005)Ambrosio, Gigli, and {Savar\'e}]{AmbrosioGigliSavare2005GradientFlows}
L.~Ambrosio, N.~Gigli, and G.~{Savar\'e}.
\newblock \emph{Gradient Flows}.
\newblock Lectures in Mathematics. ETH Z{\"u}rich. Birkh{\"a}user Basel, 1 edition, 2005.
\newblock ISBN 978-3-7643-7309-2.
\newblock \doi{10.1007/b137080}.

\bibitem[Ando(2017)]{ando2017trust}
M.~Ando.
\newblock How much should we trust regression-kink-design estimates?
\newblock \emph{Empirical Economics}, 53\penalty0 (3):\penalty0 1287--1322, November 2017.
\newblock \doi{10.1007/s00181-016-1155-8}.
\newblock URL \url{https://ideas.repec.org/a/spr/empeco/v53y2017i3d10.1007_s00181-016-1155-8.html}.

\bibitem[Angrist et~al.(1996)Angrist, Imbens, and Rubin]{Angrist01061996}
J.~D. Angrist, G.~W. Imbens, and D.~B. Rubin.
\newblock Identification of causal effects using instrumental variables.
\newblock \emph{Journal of the American Statistical Association}, 91\penalty0 (434):\penalty0 444--455, 1996.
\newblock \doi{10.1080/01621459.1996.10476902}.
\newblock URL \url{https://www.tandfonline.com/doi/abs/10.1080/01621459.1996.10476902}.

\bibitem[Calonico et~al.(2014)Calonico, Cattaneo, and Titiunik]{calonico_2014}
S.~Calonico, M.~D. Cattaneo, and R.~Titiunik.
\newblock Robust nonparametric confidence intervals for regression-discontinuity designs.
\newblock \emph{Econometrica}, 82\penalty0 (6):\penalty0 2295--2326, 2014.
\newblock \doi{https://doi.org/10.3982/ECTA11757}.
\newblock URL \url{https://onlinelibrary.wiley.com/doi/abs/10.3982/ECTA11757}.

\bibitem[Calonico et~al.(2019{\natexlab{a}})Calonico, Cattaneo, and Farrell]{calonico_2019}
S.~Calonico, M.~D. Cattaneo, and M.~H. Farrell.
\newblock Optimal bandwidth choice for robust bias-corrected inference in regression discontinuity designs.
\newblock \emph{The Econometrics Journal}, 23\penalty0 (2):\penalty0 192--210, 11 2019{\natexlab{a}}.
\newblock ISSN 1368-4221.
\newblock \doi{10.1093/ectj/utz022}.
\newblock URL \url{https://doi.org/10.1093/ectj/utz022}.

\bibitem[Calonico et~al.(2019{\natexlab{b}})Calonico, Cattaneo, Farrell, and Titiunik]{calonico_covariates_2019}
S.~Calonico, M.~D. Cattaneo, M.~H. Farrell, and R.~Titiunik.
\newblock Regression discontinuity designs using covariates.
\newblock \emph{The Review of Economics and Statistics}, 101\penalty0 (3):\penalty0 442--451, July 2019{\natexlab{b}}.
\newblock \doi{None}.
\newblock URL \url{https://ideas.repec.org/a/tpr/restat/v101y2019i3p442-451.html}.

\bibitem[Card et~al.(2015)Card, Lee, Pei, and Weber]{card_rkd_2015}
D.~Card, D.~S. Lee, Z.~Pei, and A.~Weber.
\newblock Inference on causal effects in a generalized regression kink design.
\newblock \emph{Econometrica}, 83\penalty0 (6):\penalty0 2453--2483, 2015.
\newblock \doi{https://doi.org/10.3982/ECTA11224}.
\newblock URL \url{https://onlinelibrary.wiley.com/doi/abs/10.3982/ECTA11224}.

\bibitem[Card et~al.(2016)Card, Lee, Pei, and Weber]{card_theory_practice_2016}
D.~Card, D.~S. Lee, Z.~Pei, and A.~Weber.
\newblock Regression kink design: Theory and practice.
\newblock Working Paper 22781, National Bureau of Economic Research, October 2016.
\newblock URL \url{http://www.nber.org/papers/w22781}.

\bibitem[Cattaneo and Titiunik(2022)]{Cattaneo2022}
M.~D. Cattaneo and R.~Titiunik.
\newblock Regression discontinuity designs.
\newblock \emph{Annual Review of Economics}, 14\penalty0 (Volume 14, 2022):\penalty0 821--851, 2022.
\newblock ISSN 1941-1391.
\newblock \doi{https://doi.org/10.1146/annurev-economics-051520-021409}.
\newblock URL \url{https://www.annualreviews.org/content/journals/10.1146/annurev-economics-051520-021409}.

\bibitem[Cattaneo et~al.(2020)Cattaneo, Jansson, and Ma]{Cattaneo02072020}
M.~D. Cattaneo, M.~Jansson, and X.~Ma.
\newblock Simple local polynomial density estimators.
\newblock \emph{Journal of the American Statistical Association}, 115\penalty0 (531):\penalty0 1449--1455, 2020.
\newblock \doi{10.1080/01621459.2019.1635480}.
\newblock URL \url{https://doi.org/10.1080/01621459.2019.1635480}.

\bibitem[Chen et~al.(2020)Chen, Chiang, and Sasaki]{Chen_Chiang_Sasaki_2020}
H.~Chen, H.~D. Chiang, and Y.~Sasaki.
\newblock Quantile treatment effects in regression kink designs.
\newblock \emph{Econometric Theory}, 36\penalty0 (6):\penalty0 1167–1191, 2020.
\newblock \doi{10.1017/S0266466619000409}.

\bibitem[Chiang and Sasaki(2019)]{CHIANG2019405}
H.~D. Chiang and Y.~Sasaki.
\newblock Causal inference by quantile regression kink designs.
\newblock \emph{Journal of Econometrics}, 210\penalty0 (2):\penalty0 405--433, 2019.
\newblock ISSN 0304-4076.
\newblock \doi{https://doi.org/10.1016/j.jeconom.2019.02.005}.
\newblock URL \url{https://www.sciencedirect.com/science/article/pii/S0304407619300387}.

\bibitem[Chiang et~al.(2019)Chiang, Hsu, and Sasaki]{CHIANG2019589}
H.~D. Chiang, Y.-C. Hsu, and Y.~Sasaki.
\newblock Robust uniform inference for quantile treatment effects in regression discontinuity designs.
\newblock \emph{Journal of Econometrics}, 211\penalty0 (2):\penalty0 589--618, 2019.
\newblock ISSN 0304-4076.
\newblock \doi{https://doi.org/10.1016/j.jeconom.2019.03.006}.
\newblock URL \url{https://www.sciencedirect.com/science/article/pii/S0304407619300569}.

\bibitem[Conway(1990)]{Conway1990CourseFunctionalAnalysis}
J.~B. Conway.
\newblock \emph{A Course in Functional Analysis}, volume~96 of \emph{Graduate Texts in Mathematics}.
\newblock Springer, New York, NY, 2 edition, 1990.
\newblock ISBN 978-0-387-97245-9.
\newblock \doi{10.1007/978-1-4757-4383-8}.

\bibitem[Cook(2008)]{COOK2008636}
T.~D. Cook.
\newblock “waiting for life to arrive”: A history of the regression-discontinuity design in psychology, statistics and economics.
\newblock \emph{Journal of Econometrics}, 142\penalty0 (2):\penalty0 636--654, 2008.
\newblock ISSN 0304-4076.
\newblock \doi{https://doi.org/10.1016/j.jeconom.2007.05.002}.
\newblock URL \url{https://www.sciencedirect.com/science/article/pii/S0304407607001108}.
\newblock The regression discontinuity design: Theory and applications.

\bibitem[Dahlberg et~al.(2008)Dahlberg, Mörk, Rattsø, and Ågren]{DAHLBERG20082320}
M.~Dahlberg, E.~Mörk, J.~Rattsø, and H.~Ågren.
\newblock Using a discontinuous grant rule to identify the effect of grants on local taxes and spending.
\newblock \emph{Journal of Public Economics}, 92\penalty0 (12):\penalty0 2320--2335, 2008.
\newblock ISSN 0047-2727.
\newblock \doi{https://doi.org/10.1016/j.jpubeco.2007.05.004}.
\newblock URL \url{https://www.sciencedirect.com/science/article/pii/S0047272707000886}.
\newblock New Directions in Fiscal Federalism.

\bibitem[Dijcke(2025)]{vandijcke2025regressiondiscontinuitydesigndistributionvalued}
D.~V. Dijcke.
\newblock Regression discontinuity design with distribution-valued outcomes, 2025.
\newblock URL \url{https://arxiv.org/abs/2504.03992}.

\bibitem[Frandsen et~al.(2012)Frandsen, Frölich, and Melly]{FRANDSEN2012382}
B.~R. Frandsen, M.~Frölich, and B.~Melly.
\newblock Quantile treatment effects in the regression discontinuity design.
\newblock \emph{Journal of Econometrics}, 168\penalty0 (2):\penalty0 382--395, 2012.
\newblock ISSN 0304-4076.
\newblock \doi{https://doi.org/10.1016/j.jeconom.2012.02.004}.
\newblock URL \url{https://www.sciencedirect.com/science/article/pii/S0304407612000607}.

\bibitem[Frölich and Huber(2019)]{Frolich02102019}
M.~Frölich and M.~Huber.
\newblock Including covariates in the regression discontinuity design.
\newblock \emph{Journal of Business \& Economic Statistics}, 37\penalty0 (4):\penalty0 736--748, 2019.
\newblock \doi{10.1080/07350015.2017.1421544}.
\newblock URL \url{https://doi.org/10.1080/07350015.2017.1421544}.

\bibitem[Ganong and Jäger(2018)]{Ganong03042018}
P.~Ganong and S.~Jäger.
\newblock A permutation test for the regression kink design.
\newblock \emph{Journal of the American Statistical Association}, 113\penalty0 (522):\penalty0 494--504, 2018.
\newblock \doi{10.1080/01621459.2017.1328356}.
\newblock URL \url{https://doi.org/10.1080/01621459.2017.1328356}.

\bibitem[Gretton et~al.(2012)Gretton, Borgwardt, Rasch, Sch{{\"o}}lkopf, and Smola]{gretton12a}
A.~Gretton, K.~M. Borgwardt, M.~J. Rasch, B.~Sch{{\"o}}lkopf, and A.~Smola.
\newblock A kernel two-sample test.
\newblock \emph{Journal of Machine Learning Research}, 13\penalty0 (25):\penalty0 723--773, 2012.
\newblock URL \url{http://jmlr.org/papers/v13/gretton12a.html}.

\bibitem[Gunsilius(2023)]{gunsilius_synthetic_controls2023}
F.~F. Gunsilius.
\newblock Distributional synthetic controls.
\newblock \emph{Econometrica}, 91\penalty0 (3):\penalty0 1105--1117, 2023.
\newblock \doi{https://doi.org/10.3982/ECTA18260}.
\newblock URL \url{https://onlinelibrary.wiley.com/doi/abs/10.3982/ECTA18260}.

\bibitem[Gunsilius(2025)]{gunsilius2025primeroptimaltransportcausal}
F.~F. Gunsilius.
\newblock A primer on optimal transport for causal inference with observational data, 2025.
\newblock URL \url{https://arxiv.org/abs/2503.07811}.

\bibitem[Guryan(2001)]{NBERw8269}
J.~Guryan.
\newblock Does money matter? regression-discontinuity estimates from education finance reform in massachusetts.
\newblock Working Paper 8269, National Bureau of Economic Research, May 2001.
\newblock URL \url{http://www.nber.org/papers/w8269}.

\bibitem[Hahn et~al.(2001)Hahn, Todd, and der Klaauw]{hahn2001}
J.~Hahn, P.~Todd, and W.~V. der Klaauw.
\newblock Identification and estimation of treatment effects with a regression-discontinuity design.
\newblock \emph{Econometrica}, 69\penalty0 (1):\penalty0 201--209, 2001.
\newblock ISSN 00129682, 14680262.
\newblock URL \url{http://www.jstor.org/stable/2692190}.

\bibitem[Hoderlein and Mammen(2007)]{hoderlein2007}
S.~Hoderlein and E.~Mammen.
\newblock Identification of marginal effects in nonseparable models without monotonicity.
\newblock \emph{Econometrica}, 75\penalty0 (5):\penalty0 1513--1518, 2007.
\newblock \doi{https://doi.org/10.1111/j.1468-0262.2007.00801.x}.
\newblock URL \url{https://onlinelibrary.wiley.com/doi/abs/10.1111/j.1468-0262.2007.00801.x}.

\bibitem[Hoderlein and Mammen(2009)]{hoderlein2009}
S.~Hoderlein and E.~Mammen.
\newblock Identification and estimation of local average derivatives in non-separable models without monotonicity.
\newblock \emph{The Econometrics Journal}, 12\penalty0 (1):\penalty0 1--25, 2009.
\newblock \doi{https://doi.org/10.1111/j.1368-423X.2008.00273.x}.
\newblock URL \url{https://onlinelibrary.wiley.com/doi/abs/10.1111/j.1368-423X.2008.00273.x}.

\bibitem[Hosking(1990)]{hosking1990}
J.~R.~M. Hosking.
\newblock L-moments: Analysis and estimation of distributions using linear combinations of order statistics.
\newblock \emph{Journal of the Royal Statistical Society. Series B (Methodological)}, 52\penalty0 (1):\penalty0 105--124, 1990.
\newblock ISSN 00359246.
\newblock URL \url{http://www.jstor.org/stable/2345653}.

\bibitem[Imbens and Kalyanaraman(2012)]{imbens_kalyanaraman_2012}
G.~Imbens and K.~Kalyanaraman.
\newblock Optimal bandwidth choice for the regression discontinuity estimator.
\newblock \emph{The Review of Economic Studies}, 79\penalty0 (3):\penalty0 933--959, 2012.
\newblock ISSN 00346527, 1467937X.
\newblock URL \url{http://www.jstor.org/stable/23261375}.

\bibitem[Imbens and Lemieux(2008)]{IMBENS2008615}
G.~W. Imbens and T.~Lemieux.
\newblock Regression discontinuity designs: A guide to practice.
\newblock \emph{Journal of Econometrics}, 142\penalty0 (2):\penalty0 615--635, 2008.
\newblock ISSN 0304-4076.
\newblock \doi{https://doi.org/10.1016/j.jeconom.2007.05.001}.
\newblock URL \url{https://www.sciencedirect.com/science/article/pii/S0304407607001091}.
\newblock The regression discontinuity design: Theory and applications.

\bibitem[Janson(1997)]{Janson_1997}
S.~Janson.
\newblock \emph{Wiener chaos}, page 17–22.
\newblock Cambridge Tracts in Mathematics. Cambridge University Press, 1997.

\bibitem[Jin et~al.(2025)Jin, Zhang, Zhang, and Zhou]{Jin_Zhang_Zhang_Zhou_2025}
Z.~Jin, Y.~Zhang, Z.~Zhang, and Y.~Zhou.
\newblock Identification and inference in a quantile regression discontinuity design under rank similarity with covariates.
\newblock \emph{Econometric Theory}, 41\penalty0 (1):\penalty0 172–217, 2025.
\newblock \doi{10.1017/S026646662300021X}.

\bibitem[Karhunen(1946)]{Karhunen1946ZurSS}
K.~Karhunen.
\newblock Zur spektraltheorie stochastischer prozesse.
\newblock 1946.
\newblock URL \url{https://api.semanticscholar.org/CorpusID:118738283}.

\bibitem[Kim et~al.(2024)Kim, Kim, and Kennedy]{kim2024causaleffectsbaseddistributional}
K.~Kim, J.~Kim, and E.~H. Kennedy.
\newblock Causal effects based on distributional distances, 2024.
\newblock URL \url{https://arxiv.org/abs/1806.02935}.

\bibitem[Kurisu et~al.(2025)Kurisu, Zhou, Otsu, and Müller]{kurisu2025geodesiccausalinference}
D.~Kurisu, Y.~Zhou, T.~Otsu, and H.-G. Müller.
\newblock Geodesic causal inference, 2025.
\newblock URL \url{https://arxiv.org/abs/2406.19604}.

\bibitem[Lee(2008)]{lee2008}
D.~S. Lee.
\newblock Randomized experiments from non-random selection in u.s. house elections.
\newblock \emph{Journal of Econometrics}, 142\penalty0 (2):\penalty0 675--697, February 2008.
\newblock \doi{None}.
\newblock URL \url{https://ideas.repec.org/a/eee/econom/v142y2008i2p675-697.html}.

\bibitem[Lee and Lemieux(2010)]{leelemieux2010}
D.~S. Lee and T.~Lemieux.
\newblock Regression discontinuity designs in economics.
\newblock \emph{Journal of Economic Literature}, 48\penalty0 (2):\penalty0 281–355, June 2010.
\newblock \doi{10.1257/jel.48.2.281}.
\newblock URL \url{https://www.aeaweb.org/articles?id=10.1257/jel.48.2.281}.

\bibitem[Lo\`eve(1977)]{loeve1977probability}
M.~Lo\`eve.
\newblock \emph{Probability Theory I}, volume~45 of \emph{Graduate Texts in Mathematics}.
\newblock Springer, New York, NY, 4 edition, 1977.
\newblock ISBN 978-1-4684-9464-8.
\newblock \doi{10.1007/978-1-4684-9464-8}.

\bibitem[Luedtke et~al.(2018)Luedtke, Carone, and van~der Laan]{luedtke_omnibus_2018}
A.~Luedtke, M.~Carone, and M.~J. van~der Laan.
\newblock An omnibus non-parametric test of equality in distribution for unknown functions.
\newblock \emph{Journal of the Royal Statistical Society Series B: Statistical Methodology}, 81\penalty0 (1):\penalty0 75--99, 11 2018.
\newblock ISSN 1369-7412.
\newblock \doi{10.1111/rssb.12299}.
\newblock URL \url{https://doi.org/10.1111/rssb.12299}.

\bibitem[Lundqvist et~al.(2014)Lundqvist, Dahlberg, and Mörk]{Lundqvist_2014}
H.~Lundqvist, M.~Dahlberg, and E.~Mörk.
\newblock Stimulating local public employment: Do general grants work?
\newblock \emph{American Economic Journal: Economic Policy}, 6\penalty0 (1):\penalty0 167--192, 2014.
\newblock ISSN 19457731, 1945774X.
\newblock URL \url{http://www.jstor.org/stable/43189370}.

\bibitem[McCrary(2008)]{MCCRARY2008698}
J.~McCrary.
\newblock Manipulation of the running variable in the regression discontinuity design: A density test.
\newblock \emph{Journal of Econometrics}, 142\penalty0 (2):\penalty0 698--714, 2008.
\newblock ISSN 0304-4076.
\newblock \doi{https://doi.org/10.1016/j.jeconom.2007.05.005}.
\newblock URL \url{https://www.sciencedirect.com/science/article/pii/S0304407607001133}.
\newblock The regression discontinuity design: Theory and applications.

\bibitem[Nielsen et~al.(2010)Nielsen, S{\o}rensen, and Taber]{nielsen2010estimating}
H.~S. Nielsen, T.~S{\o}rensen, and C.~Taber.
\newblock Estimating the effect of student aid on college enrollment: Evidence from a government grant policy reform.
\newblock \emph{American Economic Journal: Economic Policy}, 2\penalty0 (2):\penalty0 185--215, 2010.

\bibitem[Qu and Yoon(2015)]{QU20151}
Z.~Qu and J.~Yoon.
\newblock Nonparametric estimation and inference on conditional quantile processes.
\newblock \emph{Journal of Econometrics}, 185\penalty0 (1):\penalty0 1--19, 2015.
\newblock ISSN 0304-4076.
\newblock \doi{https://doi.org/10.1016/j.jeconom.2014.10.008}.
\newblock URL \url{https://www.sciencedirect.com/science/article/pii/S0304407614002462}.

\bibitem[Qu and Yoon(2019)]{Qu02102019}
Z.~Qu and J.~Yoon.
\newblock Uniform inference on quantile effects under sharp regression discontinuity designs.
\newblock \emph{Journal of Business \& Economic Statistics}, 37\penalty0 (4):\penalty0 625--647, 2019.
\newblock \doi{10.1080/07350015.2017.1407323}.
\newblock URL \url{https://doi.org/10.1080/07350015.2017.1407323}.

\bibitem[Schindl and Wasserman(2025)]{schindl2025causalgeodesycounterfactualestimation}
K.~Schindl and L.~Wasserman.
\newblock Causal geodesy: Counterfactual estimation along the path between correlation and causation, 2025.
\newblock URL \url{https://arxiv.org/abs/2508.08499}.

\bibitem[Sejdinovic et~al.(2013)Sejdinovic, Sriperumbudur, Gretton, and Fukumizu]{Sejdinovic_RKHS_test_2013}
D.~Sejdinovic, B.~Sriperumbudur, A.~Gretton, and K.~Fukumizu.
\newblock {Equivalence of distance-based and RKHS-based statistics in hypothesis testing}.
\newblock \emph{The Annals of Statistics}, 41\penalty0 (5):\penalty0 2263 -- 2291, 2013.
\newblock \doi{10.1214/13-AOS1140}.
\newblock URL \url{https://doi.org/10.1214/13-AOS1140}.

\bibitem[Sillitto(1969)]{sillitto69}
G.~P. Sillitto.
\newblock Derivation of approximants to the inverse distribution function of a continuous univariate population from the order statistics of a sample.
\newblock \emph{Biometrika}, 56\penalty0 (3):\penalty0 641--650, 12 1969.
\newblock ISSN 0006-3444.
\newblock \doi{10.1093/biomet/56.3.641}.
\newblock URL \url{https://doi.org/10.1093/biomet/56.3.641}.

\bibitem[Thistlethwaite and Campbell(1960)]{thistlethwaite1960regression}
D.~L. Thistlethwaite and D.~T. Campbell.
\newblock Regression-discontinuity analysis: An alternative to the ex post facto experiment.
\newblock \emph{Journal of Educational psychology}, 51\penalty0 (6):\penalty0 309, 1960.

\bibitem[Torous et~al.(2024)Torous, Gunsilius, and Rigollet]{TorousGunsiliusRigollet+2024}
W.~Torous, F.~Gunsilius, and P.~Rigollet.
\newblock An optimal transport approach to estimating causal effects via nonlinear difference-in-differences.
\newblock \emph{Journal of Causal Inference}, 12\penalty0 (1):\penalty0 20230004, 2024.
\newblock \doi{doi:10.1515/jci-2023-0004}.
\newblock URL \url{https://doi.org/10.1515/jci-2023-0004}.

\bibitem[Vaart(1998)]{Vaart_1998}
A.~W. v.~d. Vaart.
\newblock \emph{Asymptotic Statistics}.
\newblock Cambridge Series in Statistical and Probabilistic Mathematics. Cambridge University Press, 1998.

\bibitem[Vallender(1974)]{vallender1974}
S.~S. Vallender.
\newblock Calculation of the wasserstein distance between probability distributions on the line.
\newblock \emph{Theory of Probability \& Its Applications}, 18\penalty0 (4):\penalty0 784--786, 1974.
\newblock \doi{10.1137/1118101}.
\newblock URL \url{https://doi.org/10.1137/1118101}.

\bibitem[Verdinelli and Wasserman(2024)]{verdinelli_wasserman_dvi_2024}
I.~Verdinelli and L.~Wasserman.
\newblock Decorrelated variable importance.
\newblock \emph{Journal of Machine Learning Research}, 25\penalty0 (7):\penalty0 1--27, 2024.
\newblock URL \url{http://jmlr.org/papers/v25/22-0801.html}.

\bibitem[Villani et~al.(2009)]{villani2009optimal}
C.~Villani et~al.
\newblock \emph{Optimal transport: old and new}, volume 338.
\newblock Springer, 2009.

\bibitem[Wang and Zhang(2025)]{wang2025unifiedframeworkidentificationinference}
Z.~Wang and Z.~Zhang.
\newblock A unified framework for identification and inference of local treatment effects in sharp regression kink designs, 2025.
\newblock URL \url{https://arxiv.org/abs/2506.11663}.

\bibitem[Yu and Jones(1998)]{Yu01031998}
K.~Yu and M.~C. Jones.
\newblock Local linear quantile regression.
\newblock \emph{Journal of the American Statistical Association}, 93\penalty0 (441):\penalty0 228--237, 1998.
\newblock \doi{10.1080/01621459.1998.10474104}.
\newblock URL \url{https://www.tandfonline.com/doi/abs/10.1080/01621459.1998.10474104}.

\bibitem[Zhou et~al.(2025)Zhou, Kurisu, Otsu, and Müller]{zhou2025geodesicdifferenceindifferences}
Y.~Zhou, D.~Kurisu, T.~Otsu, and H.-G. Müller.
\newblock Geodesic difference-in-differences, 2025.
\newblock URL \url{https://arxiv.org/abs/2501.17436}.

\end{thebibliography}

\setlength{\parindent}{0cm}
\appendix

\begin{center}
{\large\bf SUPPLEMENTARY MATERIAL}
\end{center}

\begin{description}

\item \cref{sec:proofs}: Contains all proofs from the main text and supplementary material, including:
\begin{description}
    \item \cref{effect_inequality_proof}: Proof of \cref{effect_inequality}.
    \item \cref{series_representation_proof}: Proof of \cref{series_representation}.
    \item \cref{eigenvalue_test_proof}: Proof of \cref{eigenvalue_test}.
    \item \cref{eigenvalue_decay_proof}: Proof of \cref{eigenvalue_decay}.
    \item \cref{conservative_test_proof}: Proof of \cref{conservative_test}.
    \item \cref{coverage_lemma_proof}: Proof of \cref{coverage_lemma}.
    \item \cref{fuzzy_WZ_lemma1_proof}: Proof of \cref{fuzzy_WZ_lemma1}.
    \item \cref{fuzzy_identification_proof}: Proof of \cref{fuzzy_identification}.
    \item \cref{wass_deriv_decomp_proof}: Proof of \cref{wass_deriv_decomp}.
\end{description}
\end{description}

\newpage 

\section{Proofs}
\label{sec:proofs}

\subsection{Proof of \cref{effect_inequality}} \label{effect_inequality_proof} 
\begin{proof}[\textbf{Proof:}] First, observe that we may write the average treatment effect at the cutoff, $\tau$, in terms of quantile functions, i.e.,
\begin{align*}
    \tau = \int^1_0 \left(Q_1(u) - Q_0(u) \right) du
\end{align*}
where $Q_{1}(u) = \text{inf} \{ y \! : \! \text{lim}_{x \downarrow x_0} F_{Y \mid X} (y \! \mid \! x) \geq u\}$ and $Q_{0}(u) = \text{inf} \{ y \! : \! \text{lim}_{x \uparrow x_0} F_{Y \mid X} (y \! \mid \! x) \geq u\}$. Then, we immediately obtain our desired inequality by applying the Cauchy-Schwarz inequality, 
\begin{align*}
    |\tau| = \left|\int^1_0 \left(Q_1(u) - Q_0(u) \right) du \right| \leq \left(\int^1_0 \left(Q_1(u) - Q_0(u) \right)^2 du \right)^{1/2} \left( \int^1_0 1^2 du\right)^{1/2} = \Psi.
\end{align*}
Next, we show that $|\tau| = \Psi$ under an additive treatment effect $Q_1(u) = Q_0(u) + \delta$, where the quantiles of the limiting counterfactual distributions above and below the cutoff only differ by a translation in $\delta$. Immediately, this yields
\begin{align*}
    \tau = \int^1_0 \left(Q_1(u) - Q_0(u) \right) du = \int^1_0 \delta \, du = \delta 
\end{align*}
and furthermore that
\begin{align*}
    \Psi = \left(\int^1_0 \left(Q_1(u) - Q_0(u) \right)^2 du \right)^{1/2} = \left(\int^1_0 \delta^2 du \right)^{1/2} = |\delta|.
\end{align*}
This proves one direction, i.e. that under an additive treatment effect then $|\tau| = \Psi$. One easy way to prove the other direction is to consider the additive decomposition of the Wasserstein distance, $\Psi^2 = \tau^2 + \mathbb{V}\! \left(\Delta Q(U) \right)$. If $\Psi = |\tau|$, this implies that $\mathbb{V}\! \left(\Delta Q(U) \right) = 0$, and therefore that the quantile treatment effect $\Delta Q(U)$ is constant.
    
\end{proof}

\subsection{Proof of \cref{series_representation}} \label{series_representation_proof} 

\begin{proof}[\textbf{Proof:}] First, recall by \cref{identification_theorem} that the Wasserstein effect is identified as
\begin{align*}
     \Psi^2 = \int^1_0 (Q_1(u) - Q_0(u))^2 du
\end{align*}
where $Q_{1}(u) = \text{inf} \{ y \! : \! \text{lim}_{x \downarrow x_0} F_{Y \mid X} (y \! \mid \! x) \geq u\}$ and $Q_{0}(u) = \text{inf} \{ y \! : \! \text{lim}_{x \uparrow x_0} F_{Y \mid X} (y \! \mid \! x) \geq u\}$. Next, let $\{P^*_k\}^\infty_{k=0}$ be the orthogonal basis of $L^2(0, 1)$ defined by the shifted Legendre polynomials such that the $k$th shifted Legendre polynomial is defined as
\begin{align*}
    P^*_k(x) = (-1)^k \sum^k_{j=0} \binom{k}{j} \binom{k + j}{j} (-x)^j.
\end{align*}
For $a \in \{0, 1\}$ we define the $L$-moments under the limiting quantiles $Q_1(u)$ and $Q_0(u)$ to be
\begin{align*}
    \lambda^{(a)}_k = \int^1_0 Q_a(u) P^*_{k-1}(u) du.
\end{align*}
From here, under the assumption that $P_{a \mid x} \in \mathcal{P}_2(\mathbb{R})$ for $a \in \{0, 1\}$, by \cite{hosking1990} and \cite{sillitto69} it follows that
\begin{align*}
     Q_a(u) = \sum^\infty_{k=1} (2k - 1) \lambda^{(a)}_k P^*_{k-1}(u).
\end{align*}
and consequently,
\begin{align*}
    f(u) = Q_1(u) - Q_0(u) = \sum^\infty_{k=1} (2k - 1)\left(\lambda^{(1)}_k - \lambda^{(0)}_k\right) P^*_{k-1}(u).
\end{align*}
Now, let $S_K(u) = \sum^K_{k=1} (2k-1)(\lambda^{(1)}_k - \lambda^{(0)}_k) P^*_{k-1}(u)$ be a partial summation of $f$, and note that under the mean square convergence established by \cite{sillitto69},
\begin{align*}
    ||f||^2_2 = \underset{K \to \infty}{\text{lim}} ||S_K(u)||^2_2.
\end{align*}
By the orthogonality of the shifted Legendre polynomials it follows that
\begin{align*}
    ||S_K(u)||^2_2 &= \sum^K_{k=1} (2k - 1)^2 \left(\lambda^{(1)}_k - \lambda^{(0)}_k \right)^2 || P^*_{k-1} ||^2_2 \\
    &= \sum^K_{k=1} (2k - 1) \left(\lambda^{(1)}_k - \lambda^{(0)}_k \right)^2
\end{align*}
since by \cite{hosking1990} we know that $ ||P^*_k||^2_2 = (2k + 1)^{-1}$. Therefore, taking the limit as $K \to \infty$ we can see that
\begin{align*}
    \Psi^2 = \sum^\infty_{k=1} (2k - 1) \left(\lambda^{(1)}_k - \lambda^{(0)}_k \right)^2
\end{align*}
thereby completing the proof.

\end{proof}

\subsection{Proof of \cref{eigenvalue_test}} \label{eigenvalue_test_proof}

\begin{proof}[\textbf{Proof:}] 
To begin, let
\begin{align*}
    T_n =  nh \int^1_0 [\Delta \widehat{Q}(u)]^2 du
\end{align*}
be our test statistic where $\Delta \widehat{Q}(u) = \widehat{Q}_{1}(u) - \widehat{Q}_{0}(u)$ are the local polynomial estimators of the conditional quantile functions described in \cref{estimation_inference_section} and \cite{CHIANG2019589}. Furthermore, assume the regularity conditions discussed in \cref{estimation_inference_section} and \cite{CHIANG2019589} hold. Then, under the null hypothesis,  $\sqrt{nh} \Delta \widehat{Q}(u) \rightsquigarrow \mathbb{G}(u)$ in $L^2([0,1])$ where $\mathbb{G}$ is a mean-zero Gaussian process with covariance kernel
\begin{align*}
    \kappa(u,v) = \operatorname{Cov}\bigl(\mathbb{G}(u),\mathbb{G}(v)\bigr).
\end{align*}
Let $\{\lambda_k,\phi_k\}^{\infty}_{k = 1}$ denote the eigenvalue-eigenfunction pairs of the covariance operator induced by $\kappa$, such that $\{\phi_k\}^{\infty}_{k = 1}$ forms an orthonormal basis of $L^2([0,1])$. Suppose that $\sum_{k=1}^\infty \lambda_k < \infty$ and $\lambda_1 > 0$. Then we may apply the Karhunen-Lo\`eve theorem \citep{Karhunen1946ZurSS, loeve1977probability} to expand $\mathbb{G}(u)$ as
\begin{align*}
    \mathbb{G}(u) = \sum^\infty_{k=1} \sqrt{\lambda_k} Z_k \phi_k(u)
\end{align*}
where $Z_k \overset{iid}{\sim} N(0, 1)$ for all $k$. Next, by the continuous mapping theorem, it follows that
\begin{align*}
    T_n \rightsquigarrow T := \int^1_0 [\mathbb{G}(u)]^2 du =  \sum^\infty_{k=1} \lambda_k Z^2_k
\end{align*}
where the final equality holds by an application of Parseval's identity to $\{\phi_k\}^\infty_{k=1}$. From here, let $F(t)=\mathbb{P}(T\le t)$ and $c_\alpha = \inf\{t\in\mathbb{R}:F(t)\ge 1-\alpha\}$ denote the $(1-\alpha)$ quantile of $T$. Note that $\mathbb{P}(T=c_\alpha)=0$. With this machinery established, we first consider how to deal with truncation of the series $\sum^\infty_{k=1} \lambda_k Z^2_k$. Let
\begin{align*}
    T_{K} = \sum^{K}_{k=1} \lambda_k Z^2_k.
\end{align*}
Importantly, we cannot achieve a level-$\alpha$ test for a fixed value of $K$ unless it is the case that $\lambda_k = 0$ for all $k > K$. To see this, let $c_{K, \alpha}$ be the $(1 - \alpha)$ quantile under $T_K$. Then, because $T_K \leq T$ almost surely, it follows that $c_{K, \alpha} \leq c_\alpha$ and therefore,
\begin{align*}
  \mathbb{P} \! \left(T > c_{K,\alpha}\right) \geq \mathbb{P} \!\left(T > c_{\alpha}\right).
\end{align*}
Consequently, for a fixed $K$ our tests will be anti-conservative. Thus, we must let $K \to \infty$; consequently, we give $K$ an index in $n$ through the remainder of the proof. Let
\begin{align*}
    F_{K_n}(t) = \mathbb{P}(T_{K_n} \leq t)
\end{align*}
and note that since $T_{K_n} \to T$ almost surely, it follows that for each fixed $t$, $\mathbb{I}(T_{K_n} \leq t) \to \mathbb{I}(T \leq t)$. Thus, by the dominated convergence theorem, for all $t$, $F_{K_n}(t) \to F(t)$. Clearly, following standard quantile convergence arguments it then follows that $c_{K_n,\alpha}\to c_\alpha$ as $K_n \to \infty$ \citep{Vaart_1998}. \\

We now control the effect of estimating the eigenvalues. Let $\widehat{\kappa}_n$ denote an estimator of $\kappa$ such that $\widehat{\lambda}_{k,n}$ denotes the corresponding estimated eigenvalues. Notably, it is important to relate how well $\kappa$ is estimated to the number of terms we include in our truncation $T_{K_n}$. To that end, suppose that
\begin{align*}
    \left| \left| \widehat{\kappa}_n - \kappa \right| \right|_{2} = o_{\mathbb{P}}(K^{-1/2}_n )
\end{align*}
where $\left| \left| \kappa \right| \right|^2_{2} = \int^1_0 \int^1_0 \kappa(u, v)^2 du \, dv$ is the Hilbert-Schmidt norm. Then, it follows that
\begin{align} \label{eigenvalue_inequality}
    \sum^{K_n}_{k=1} |\widehat{\lambda}_{k, n} - \lambda_k| \leq \sqrt{K_n} \left( \sum^{K_n}_{k=1} (\widehat{\lambda}_{k, n} - \lambda_k)^2 \right)^{1/2} \leq \sqrt{K_n} \left| \left| \widehat{\kappa}_n - \kappa \right| \right|_{2} = o_{\mathbb{P}}(1)
\end{align}
where the first inequality follows by applying Cauchy-Schwarz and the second inequality follows by applying the Hoffman-Wielandt inequality for operators. Now, define $\widehat{T}_{K_n} = \sum^{K_n}_{k=1} \widehat{\lambda}_{k, n} Z^2_k$. Our goal here is to show that $|\widehat{T}_{K_n} - T_{K_n}|=o_{\mathbb{P}}(1)$. To do so, let $\mathcal{Z}_n = \{(X_i, A_i, Y_i) \}^n_{i=1}$ and define the event
\begin{align*}
    E_n = \left\{ \mathbb{E} \!\left[ |\widehat{T}_{K_n} - T_{K_n} | \mid \mathcal{Z}_n \right] > \delta  \right\}
\end{align*}
for some $\delta > 0$. Then, note that for any $ \varepsilon > 0$ it follows that
\begin{align*}
    \mathbb{P} \!\left( | \widehat{T}_{K_n} - T_{K_n} | > \varepsilon \right) &= \mathbb{P} \! \left( | \widehat{T}_{K_n} - T_{K_n} | > \varepsilon, E_n \right) + \mathbb{P} \!\left( | \widehat{T}_{K_n} - T_{K_n} | > \varepsilon, E^c_n \right) \\
    &\leq \mathbb{P} \!\left( \mathbb{E} \!\left[ |\widehat{T}_{K_n} - T_{K_n} | \mid \mathcal{Z}_n \right] > \delta \right) + \mathbb{P} \!\left( | \widehat{T}_{K_n} - T_{K_n} | > \varepsilon, E^c_n \right) \\
    &\overset{(i)}{\leq}  \mathbb{P} \!\left( \mathbb{E} \!\left[ |\widehat{T}_{K_n} - T_{K_n} | \mid \mathcal{Z}_n \right] > \delta \right) + \frac{\delta}{\varepsilon}
\end{align*}
where $(i)$ follows by applying a conditional Markov inequality. Then, observe by \cref{eigenvalue_inequality},
\begin{align*}
    \mathbb{E} \!\left[ |\widehat{T}_{K_n} - T_{K_n}| \mid \mathcal{Z}_n \right]
    \leq \sum_{k=1}^{K_n}|\widehat{\lambda}_{k,n}-\lambda_k| \mathbb{E}[Z_k^2]
    = o_{\mathbb{P}}(1).
\end{align*}
Thus, for each fixed $\delta$ it follows that $\mathbb{P}( \mathbb{E}[ |\widehat{T}_{K_n} - T_{K_n} | \mid \mathcal{Z}_n ] > \delta) \to 0$ and consequently,
\begin{align*}
    \underset{n}{\text{lim} \, \text{sup}} \, \mathbb{P} \!\left( | \widehat{T}_{K_n} - T_{K_n} | > \varepsilon \right) \leq \frac{\delta}{\varepsilon}.
\end{align*}
Then, by taking $\delta \to 0$ we can see that $\mathbb{P}( | \widehat{T}_{K_n} - T_{K_n} | > \varepsilon ) \to 0$. From here, define the conditional distribution function
\begin{align*}
    \widehat{F}_n(t) =\mathbb{P} \!\left(\sum_{k=1}^{K_n}\widehat{\lambda}_{k,n} Z_k^2 \leq t \mid \mathcal{Z}_n \right)
\end{align*}
let $\widehat{c}_{n,\alpha}$ be the corresponding conditional $(1-\alpha)$ quantile, and define
\begin{align*}
    p_{n, \varepsilon} = \mathbb{P} \!\left( | \widehat{T}_{K_n} - T_{K_n} | > \varepsilon \mid \mathcal{Z}_n \right).
\end{align*}
Then, on the event $\{|\widehat{T}_{K_n} - T_{K_n}| \leq \varepsilon \}$, for any $t \in \mathbb{R}$ and $\varepsilon > 0$ it follows that 
\begin{align*}
    F_{K_n}(t - \varepsilon) -  p_{n, \varepsilon} \leq \widehat{F}_{n}(t) \leq  F_{K_n}(t + \varepsilon) + p_{n, \varepsilon}
\end{align*}
and consequently,
\begin{align*}
      \underset{t \in \mathbb{R}}{\text{sup}} \, | \widehat{F}_n(t) - F_{K_n}(t)| \leq  p_{n, \varepsilon} + \underset{t \in \mathbb{R}}{\text{sup}} \big[ F_{K_n}(t + \varepsilon) - F_{K_n}(t - \varepsilon) \big].
\end{align*}
Thus, since we have already shown $F_{K_n}(t) \to F(t)$ for each $t$ (so that $\text{sup}_t |F_{K_n}(t) - F(t)| \to 0$ by P\'olya's theorem)  and since $p_{n, \varepsilon} = o_\mathbb{P}(1)$ (which follows by applying Markov's inequality) it follows that after taking $\varepsilon \to 0$,
\begin{align*}
    \underset{t \in \mathbb{R}}{\text{sup}} \, | \widehat{F}_n(t) - F_{K_n}(t)| = o_\mathbb{P}(1).
\end{align*}

Then, since $F_{K_n}$ is continuous and strictly increasing at $c_{K_n,\alpha}$ it again follows that $\widehat{c}_{n,\alpha} - c_{K_n,\alpha} = o_{\mathbb{P}}(1)$ from standard arguments for convergence of quantiles. \\

Finally, we consider the effect of approximating the critical value via Monte-Carlo simulation. The argument here is standard. Define the Monte-Carlo draws
\begin{align*}
    \widehat{T}^*_{K_n,b} = \sum_{k=1}^{K_n} \widehat{\lambda}_{k,n} Z_{k,b}^2
\end{align*}
for $b = 1, \ldots, B_n$ and let $\widehat{c}^*_{n,\alpha}$ denote the empirical $(1-\alpha)$ quantile computed from 
$\{\widehat{T}^*_{K_n,b}\}_{b=1}^{B_n}$.  Let the Monte-Carlo empirical distribution function be
\begin{align*}
    \widehat F_{n,B_n}(t) = \frac{1}{B_n}\sum_{b=1}^{B_n}\mathbb{I}(\widehat T^{*}_{K_n,b}\leq t)
\end{align*}
On the event $A_n = \{ \widehat{\lambda}_{1, n} > 0 \}$, the conditional distribution $\widehat{F}_n$ is continuous, and thus continuous at its quantiles $\widehat{c}_{n, \alpha}$. Therefore, applying the Glivenko-Cantelli theorem conditionally on $(X_1, \ldots, X_n)$, it follows that as $B_n \to \infty$,
\begin{align*}
    \underset{t \in \mathbb{R}}{\text{sup}} \, | \widehat F_{n,B_n}(t) - \widehat{F}_n(t) | \to 0
\end{align*}
and therefore, $\widehat{c}_{n,\alpha}^* - \widehat{c}_{n, \alpha} = o_{\mathbb{P}}(1)$. Finally, since $\widehat{\lambda}_{1, n} \to \lambda_1 > 0$, it follows that $\mathbb{P}(A_n) \to 1$, so this result holds unconditionally. Putting everything together, it follows that under $H_0$,
\begin{align*}
    \lim_{n\to\infty}\mathbb{P}\bigl(T_n > \widehat{c}^*_{n,\alpha}\bigr)
    =\mathbb{P}(T>c_\alpha)
    =\alpha.
\end{align*}
\end{proof}

\subsection{Proof of \cref{eigenvalue_decay}} \label{eigenvalue_decay_proof}

\begin{proof}[\textbf{Proof:}]

First, recall that we assume $K_n \asymp r^{-2/(2\beta - 1)}_n$. Thus, we assume there exist constants $0 < c_1 \leq c_2 < \infty$ and $n_0$ such that for all $n \geq n_0$,
\begin{align*}
     c_1 r^{-2/(2\beta - 1)}_n \leq K_n \leq c_2r^{-2/(2\beta - 1)}_n.
\end{align*}
Next, recall that we assume a polynomial eigenvalue decay. That is, there exist constants $C_\lambda > 0$ and $\beta > 1$ such that for all $k$, $\lambda_k \leq C_\lambda k^{-\beta}$. With this in mind, we proceed with the truncation bias. Observe that for any fixed $K_n$,
\begin{align*}
    \sum_{k > K_n} \lambda_k \leq C_\lambda \sum_{k > K_n} k^{-\beta} \overset{(i)}{\leq} C_\lambda \int^{\infty}_{K_n} x^{- \beta} dx =  \left( \frac{C_\lambda}{\beta - 1} \right) K^{1 - \beta}_n
\end{align*}
where $(i)$ follows since $f(x) = x^{-\beta}$ is a decreasing function. From here, it follows that for all $n \geq n_0$
\begin{align*}
     \left( \frac{C_\lambda}{\beta - 1} \right) K^{1 - \beta}_n \leq   \frac{C_\lambda}{\beta - 1}  \left(c_1 r^{-2/(2\beta - 1)}_n \right)^{1 - \beta} =  \frac{C_\lambda c^{1 - \beta}_1}{\beta - 1}  \left( r^{2(\beta - 1)/(2\beta - 1)}_n \right)
\end{align*}
and therefore,
\begin{align*}
    \sum_{k > K_n} \lambda_k = O \! \left( r^{\frac{2(\beta - 1)}{(2\beta - 1)}}_n\right).
\end{align*}
Next, we consider the estimation error. Recall that we assumed $|| \widehat{\kappa}_n - \kappa||_{2} = O_p(r_n)$ for some $r_n \to 0$. Thus, it is clear that $\sqrt{K_n}|| \widehat{\kappa}_n - \kappa||_{2} = O_p( \sqrt{K_n}r_n)$. Then, it follows that
\begin{align*}
    \sqrt{K_n}r_n \leq \left( c_2r^{-2/(2\beta - 1)}_n \right)^{1/2} r_n = \sqrt{c_2}  r^{-\frac{1}{2\beta - 1}}_n r_n = \sqrt{c_2 } r^{\frac{2(\beta - 1)}{(2\beta - 1)}}_n
\end{align*}
and consequently,
\begin{align*}
    \sqrt{K_n}|| \widehat{\kappa}_n - \kappa||_{2} = O_p \!\left( r^{\frac{2(\beta - 1)}{(2\beta - 1)}}_n\right)
\end{align*}
thereby completing the proof. Note: the choice $K_n \asymp r^{-2/(2\beta - 1)}_n$ can easily be seen by noting that the truncation bias scales like $K^{1-\beta}_n$. Thus, if we set
\begin{align*}
    K^{1-\beta}_n \asymp \sqrt{K_n} r_n
\end{align*}
and solve, we obtain the aforementioned rate.

\end{proof}

\subsection{Proof of \cref{conservative_test}} \label{conservative_test_proof}

\begin{proof}[\textbf{Proof:}] To begin, let $ T_n = nh \int_0^1 [\Delta \widehat{Q}(u)]^2 du$ be our test statistic. Then, recall that under the conditions described in \cref{estimation_inference_section} and \cref{eigenvalue_test_proof} it follows that
\begin{align*}
    T_n \rightsquigarrow T = \sum^{\infty}_{k=1} \lambda_k Z^2_k
\end{align*}
where $Z_k \sim N(0, 1)$ for all $k$ and $\lambda_k$ are the eigenvalues of the covariance kernel $\kappa(u, v)$. From here, observe that
\begin{align*}
    \mathbb{E}[T] = \mathbb{E} \!\left[\sum^{\infty}_{k=1} \lambda_k Z^2_k \right] = \sum^{\infty}_{k=1} \lambda_k \mathbb{E} \!\left[Z^2_k \right] = \sum^{\infty}_{k=1} \lambda_k =  \int^1_0 \kappa(u, u) du
\end{align*}
and
\begin{align*}
    \mathbb{V} \! \left(T \right) &=  \mathbb{V} \! \left(\sum^{\infty}_{k=1} \lambda_k Z^2_k \right) = \sum^{\infty}_{k=1} \lambda^2_k\mathbb{V}\! \left( Z^2_k \right) 
    = 2 \sum^\infty_{k=1} \lambda^2_k 
    = 2 \int^1_0 \!  \! \int^1_0 \kappa(u, v)^2 du \, dv.
\end{align*}
Thus, it follows that $T$ has mean $\mu := \int^1_0 \kappa(u, u) du$ and variance $\sigma^2 := 2 \int^1_0 \!  \int^1_0 \kappa(u, v)^2 du \, dv$. Consequently, it follows that $(T - \mu) / \sigma$ has mean zero and unit variance. Furthermore, by Slutsky's Theorem it follows that as $n \to \infty$
\begin{align*}
    \frac{T_n - \widehat{\mu}}{\widehat{\sigma}} \longrightarrow \frac{T - \mu}{\sigma}
\end{align*}
under the assumption that $\widehat{\mu} \overset{p}{\rightarrow} \mu$ and $\widehat{\sigma} \overset{p}{\rightarrow} \sigma$. From here, following the one-sided Chebyshev inequality (i.e. Cantelli's inequalty) discussed in \cite{luedtke_omnibus_2018}, we note that for any mean zero, unit variance random variable $X$ and $t > 0$, it follows that
\begin{align*}
    \mathbb{P}(X \geq t) \leq \frac{1}{1 + t^2}.
\end{align*}
Then, it is easy to see by the Portmanteau Theorem,
\begin{align*}
     \underset{n \to \infty}{\text{lim} \, \text{sup}} \ \mathbb{P}_{H_0}(T_n > \widehat{c}^{\text{ub}}_{n, 1- \alpha}) &\leq \mathbb{P}_{H_0} \!\left( \frac{T - \mu}{\sigma} >  \sqrt{\frac{1 - \alpha}{\alpha}}\right) \leq \frac{1}{1 + \frac{1 - \alpha}{\alpha}} = \alpha.
\end{align*}
\end{proof}

\subsection{Proof of \cref{coverage_lemma}} \label{coverage_lemma_proof}

\begin{proof}[\textbf{Proof:}]

To begin, let $\widehat{\Psi}^2_n = \int^1_0 [\Delta \widehat{Q}(u)]^2 du$. Then, recall that we define our interval as 
\begin{align*} 
    C^\prime_n = \left[\widehat{\Psi}^2_n \pm z_{1-\alpha/2}\sqrt{\widehat{s}^2_n + \frac{c^2}{nh}} \, \right]
\end{align*}
where $\widehat{s}_n$ is the estimated standard deviation of $\Psi^2$, $z_{1-\alpha/2}$ is the $1 - \alpha / 2$ quantile of a standard Normal distribution, and $c$ is some constant. Then, observe that
\begin{align*}
    \mathbb{P} \!\left( \Psi^2 \not \in C^\prime_n \right) &= \mathbb{P} \!\left( \left| \widehat{\Psi}^2_n - \Psi^2 \right| > z_{1-\alpha/2}\sqrt{\widehat{s}^2_n + \frac{c^2}{nh}} \right) \\
    &\leq \mathbb{P} \!\left( \left| \widehat{\Psi}^2_n - \Psi^2 \right| > z_{1-\alpha/2}\sqrt{ \frac{c^2}{nh}} \right) \\
    &= \mathbb{P} \!\left( \left( \widehat{\Psi}^2_n - \Psi^2 \right)^2 >  \frac{ z^2_{1-\alpha/2} c^2}{nh} \right).
\end{align*}
From here, we apply Markov's inequality to see that
\begin{align*}
    \mathbb{P} \! \left( \left( \widehat{\Psi}^2_n - \Psi^2 \right)^2 >  \frac{ z^2_{1-\alpha/2} c^2}{nh} \right) &\leq \frac{nh}{z^2_{1-\alpha/2} c^2} \mathbb{E} \!\left[( \widehat{\Psi}^2_n - \Psi^2)^2 \right] \\
    &= \frac{nh}{z^2_{1-\alpha/2} c^2} \left\{\mathbb{E} \!\left[\widehat{\Psi}^2_n - \Psi^2 \right]^2 + \mathbb{V}(\widehat{\Psi}^2_n) \right\} \\
    &= o(1)
\end{align*}
where the last equality holds under the assumption that $\mathbb{E}[\widehat{\Psi}^2_n - \Psi^2] = o((nh)^{-1/2})$ and $\mathbb{V}(\widehat{\Psi}^2_n) = o((nh)^{-1})$.
Therefore, as $n \to \infty$ it follows that $ \mathbb{P} \!\left( \Psi^2 \not \in C^\prime_n \right) \to 0$.
\end{proof}

\subsection{Proof of \cref{fuzzy_WZ_lemma1}} \label{fuzzy_WZ_lemma1_proof}

\begin{proof}[\textbf{Proof:}] The proof of \cref{fuzzy_WZ_lemma1} follows analogously to Lemma 1 of \cite{wang2025unifiedframeworkidentificationinference} with two modifications: one, the baseline treatment at the kink is random ($T_0=b(x_0,\eta)$) rather than a constant $t_0 = b(x_0)$; and two, the perturbation direction is $\delta (\omega(\eta)/\Delta_B) $ rather than the constant shift $\delta$. With that in mind, let $F_\delta(\cdot)=F_{Y_\delta\mid X=x_0}(\cdot)$ and $F_0(\cdot)=F_{Y_0\mid X=x_0}(\cdot)$. Furthermore, define $h_\delta(\cdot)= (F_\delta-F_0)/\delta$. Then, under \cref{hadamard_differentiability}, we have that as $\delta \to 0$,
\begin{align*}
    \frac{\phi(F_\delta)-\phi(F_0)}{\delta} = \phi^{\prime}_{F_0}(\Delta^F_{Id})+o(1).
\end{align*}
where $\Delta^F_{Id} = \text{lim}_{\delta \to 0} \{ h_\delta \}$. From here, let $Y_\delta = g(T_{0} + \delta (\omega(\eta)/\Delta_B),x_0,\varepsilon)$, $Y_0 =g(T_0,x_0,\varepsilon)$, and define
\begin{align*}
    Z = \left(\frac{\omega(\eta)}{\Delta_B} \right) g_1(T_0,x_0,\varepsilon).
\end{align*}
Then, we may define the remainder term $R_\delta = Y_\delta - Y_0 - \delta Z$ such that $Y_\delta = Y_0 + \delta Z + R_\delta$. 
We can now see that
\begin{align*}
    h_\delta(y) = \frac{F_\delta(y)-F_0(y)}{\delta} =
\frac{1}{\delta} \Big( \mathbb{E}\big[\mathbb{I}(Y_0+\delta Z+R_\delta\leq y) - \mathbb{I}(Y_0\leq y) \mid  X=x_0\big] \Big).
\end{align*}
Our first goal is to show that the remainder term $R_\delta$ drops out. To that end, we define
\begin{align*}
    \widetilde{h}_\delta(y) =
\frac{1}{\delta}\Big( \mathbb{E}\left[\mathbb{I}(Y_0+\delta Z\leq y) - \mathbb{I}(Y_0\leq y) \mid  X=x_0\right]\Big).
\end{align*}
Then, observe that
\begin{align*}
    \left|h_\delta(y) - \widetilde{h}_\delta(y)\right| &\leq \frac{1}{|\delta|} \Big( \mathbb{E}\big[ \left|\mathbb{I}(Y_0+\delta Z+R_\delta\leq y) - \mathbb{I}(Y_0+\delta Z\leq y) \right| \mid  X=x_0\big] \Big).
\end{align*}
From here, define the events $U = \{Y_0+\delta Z+R_\delta\leq y \} $ and $V = \{Y_0+\delta Z\leq y \}$ and note that
\begin{align*}
    |\mathbb{I}(U) - \mathbb{I}(V)| = \mathbb{I}(U \triangle V)
\end{align*}
where $U \triangle V$ denotes the symmetric difference. Our goal now is to show the set inclusion $U \triangle V \subseteq \{ | y - (Y_0 + \delta Z)| \leq |R_\delta| \}$ holds. First, consider the case where $R_\delta \geq 0$. Then, it is clear that $U \subseteq V$. Furthermore, the set difference $V \setminus U$ occurs when
\begin{align*}
    \{ Y_0 + \delta Z \leq y \} \cap \{Y_0 + \delta Z + R_\delta > y \} \quad \iff \quad \{y - R_\delta < Y_0 + \delta Z \leq y \},
\end{align*}
and so it follows that $|y - (Y_0 + \delta Z)| \leq R_\delta = |R_\delta|$. Next, suppose that $R_\delta < 0$. Now we have that $V \subseteq U$ and the set difference $U \setminus V$ occurs when
\begin{align*}
    \{Y_0 + \delta Z > y \} \cap \{ Y_0 + \delta Z + R_\delta \leq y \} \quad \iff \quad \{y < Y_0 + \delta Z \leq y - R_\delta\}.
\end{align*}
Thus, $|y - (Y_0 + \delta Z)| \leq - R_\delta = |R_\delta|$. Putting both cases together, it follows that
\begin{align*}
    \left|h_\delta(y) - \widetilde{h}_\delta(y)\right| &\leq \frac{1}{|\delta|} \Big( \mathbb{E} \! \left[ \mathbb{I} \left( \big|y-(Y_0+\delta Z)\big|\leq |R_\delta| \right) \mid  X=x_0\right] \Big) \\
    &\leq \frac{1}{|\delta|} \Big( \mathbb{P}(|R_\delta|\geq c|\delta| \mid X=x_0) + \mathbb{P}( |Y_0+\delta Z-y |\leq c|\delta| \mid X=x_0) \Big).
\end{align*}
where the second inequality follows after fixing some $c > 0$ and splitting on the event that $|R_\delta| \geq c |\delta|$ or $|R_\delta| < c |\delta|$. From here, it follows by \cref{regularity_conditions}$(i)$ that as $\delta \to 0$, then
\begin{align*}
    \frac{1}{|\delta|} \mathbb{P}(|R_\delta|\geq c|\delta| \mid X=x_0) = o(1).
\end{align*}
In the case of the second term, we now leverage \cref{regularity_conditions}$(ii)$ to see that
\begin{align*}
    \mathbb{P} \! \left(\big|Y_0+\delta Z-y\big|\le c|\delta| \mid X=x_0\right) = \int \! \! \int_{y-\delta z-c|\delta|}^{y-\delta z+c|\delta|} f_{Y_0,Z\mid X}( a,z\mid x_0) da\,dz \leq 2c|\delta|\int |\varpi(z)| dz,
\end{align*}
and so, consequently, it follows that 
\begin{align*}
    \left|h_\delta(y) - \widetilde{h}_\delta(y)\right| = o(1) + O(c).
\end{align*}
Then, since the choice of $c > 0$ was arbitrary, we can see that $h_\delta(y) - \widetilde{h}_\delta(y) \to 0$. Next, we evaluate the limit of $\widetilde{h}_\delta(y)$ as $\delta \to 0$ using the joint density of $(Y_0,Z)$. Recall that
\begin{align*}
    \widetilde h_\delta(y) = \frac{1}{\delta} \Big( \mathbb{P}(Y_0+\delta Z\leq y\mid X=x_0)- \mathbb{P}(Y_0\leq y\mid X=x_0) \Big).
\end{align*}
Thus, using the identity
\begin{align*}
    \mathbb{I}(U\le v)-\mathbb{I}(U\le w) = \mathbb{I}(w< U\leq v)-\mathbb{I}(v < U\leq w)
\end{align*}
with $U=Y_0$, $v=y-\delta Z$, and $w=y$, we obtain the more convenient expression
\begin{align} \label{h_delta_eq}
    \widetilde h_\delta(y) = \frac{1}{\delta}\Big( \mathbb{P}(y<Y_0\leq y-\delta Z\mid X=x_0) - \mathbb{P}(y-\delta Z<Y_0\leq y\mid X=x_0) \Big).
\end{align}
Suppose $\delta>0$ and let $f(a,z) =f_{Y_0,Z\mid X}(a,z\mid x_0)$. Since $y-\delta Z>y$ requires $Z<0$ and
$y-\delta Z<y$ requires $Z>0$, we can write
\begin{align*}
\mathbb{P} \! \left( y< Y_0\leq y-\delta Z\mid X=x_0 \right)
&=\int_{-\infty}^{0}\int_{y}^{y-\delta z} f(a,z)da\,dz,\\
\mathbb{P} \! \left( y-\delta Z<Y_0\leq y\mid X=x_0 \right)
&=\int_{0}^{\infty}\int_{y-\delta z}^{y} f(a,z)da\,dz.
\end{align*}
Then, applying the change of variables $u=(y-a)/\delta$ (such that $a=y-\delta u$ and $da=-\delta\,du$) it follows that
\begin{align*}
    \frac{1}{\delta}\mathbb{P} \! \left( y<Y_0\leq y-\delta Z\mid X=x_0 \right) &= \int_{-\infty}^{0}\int_{z}^{0} f(y-\delta u,z) du\,dz, \\
\frac{1}{\delta}\mathbb{P} \! \left( y-\delta Z<Y_0\leq y\mid X=x_0 \right) &= \int_{0}^{\infty}\int_{0}^{z} f(y-\delta u,z) du\,dz.
\end{align*}
Substituting both into \cref{h_delta_eq} yields
\begin{align*}
    \widetilde h_\delta(y) = \int_{-\infty}^{0}\int_{z}^{0} f(y-\delta u,z)du\,dz - \int_{0}^{\infty}\int_{0}^{z} f(y-\delta u,z)du\,dz.
\end{align*}
From here, continuity of $f(\cdot,z)$ in its first argument and the domination condition specified in \cref{regularity_conditions}$(ii)$ implies that
\begin{align*}
    \lim_{\delta \to 0} \left\{ \widetilde h_\delta(y) \right\} = \int_{-\infty}^{0}\int_{z}^{0} f(y,z) du\,dz - \int_{0}^{\infty}\int_{0}^{z} f(y,z) du\,dz = \int (-z) f(y,z) dz.
\end{align*}
Repeating analogous calculations in the case where $\delta < 0$ yields the same limit. Therefore, the two-sided derivative exists and we can say that
\begin{align*}
    \Delta^F_{Id}(y) &= \left.\frac{\partial}{\partial \delta}F_\delta(y)\right|_{\delta=0} \\
    &= \int(-z) f_{Y_0,Z\mid X}(y,z\mid x_0)dz \\
    &= \int(-z) f_{Y\mid X}(y\mid x_0)\,f_{Z\mid Y,X}(z\mid y,x_0) dz \\
    &=  -f_{Y\mid X}(y\mid x_0) \mathbb{E}\! \left[ \left(\frac{\omega(\eta)}{\Delta_B} \right) g_1(T_0,x_0,\varepsilon) \mid Y= y, X=x_0\right]
\end{align*}
which is the desired equation, and thus completes the proof.
\end{proof}

\subsection{Proof of \cref{fuzzy_identification}} \label{fuzzy_identification_proof}

\begin{proof}[\textbf{Proof:}] First, recall by \cref{fuzzy_WZ_lemma1} that the fuzzy local treatment effect at the kink admits the representation $\Delta_\phi^F = \phi'_{F_{Y\mid X=x_0}}(\Delta^F_{Id}(\cdot))$
where
\begin{align*}
    \Delta^F_{Id}(y) = - f_{Y\mid X}(y\mid x_0) \mathbb{E} \! \left[\frac{\omega(\eta)}{\Delta_B}\,g_1(T_0,x_0,\varepsilon) \mid Y=y,X=x_0\right]
\end{align*}
such that $T_0 = b(x_0,\eta)$, $\omega(\eta) =b^\prime(x_0^+,\eta)-b^\prime(x_0^-,\eta)$, and $\Delta_B = \mu^\prime_B(x^+_0) - \mu^\prime_B(x^-_0) \neq 0$. From here, our goal is to show that for all $y$, $\Delta^F_{Id}(y)=\mathrm{FDRKD}(y)$ where 
\begin{align*}
    \mathrm{FDRKD}(y) = \frac{\frac{\partial}{\partial x} F_{Y \mid X}(y \! \mid \! x_0^+) - \frac{\partial}{\partial x} F_{Y \mid X}(y \! \mid \! x_0^-)}
{\mu^\prime_B(x_0^+) - \mu^\prime_B(x_0^-)}
\end{align*}
and $\mu_B(x) = \mathbb{E}[b(x,\eta)\mid X=x]$. To that end, following \cite{wang2025unifiedframeworkidentificationinference} define $h(x,e,u) =g(b(x,u),x,e)$ so that  $Y=h(X,\varepsilon,\eta)$. Then, by \cref{smoothness_unobservables} it follows that we may write the conditional cumulative distribution function as
\begin{align*}
    F_{Y\mid X}(y \! \mid \! x) = \iint \mathbb{I}(h(x,e,u)\le y) f_{\varepsilon,\eta\mid X}(e,u \! \mid \!  x)de\,du.
\end{align*}
Next, let $y$ be fixed and consider the decomposition
\begin{align*}
    \frac{F_{Y\mid X}(y \! \mid \!  x_0+t)-F_{Y\mid X}(y \! \mid \!  x_0)}{t} = A_{1,t}(y)+A_{2,t}(y),
\end{align*}
where
\begin{align*}
    A_{1,t}(y) &= \frac{1}{t}\iint \Big(\mathbb{I}(h(x_0+t,e,u)\le y)-\mathbb{I}(h(x_0,e,u)\le y)\Big) f_{\varepsilon,\eta\mid X}(e,u \! \mid \!  x_0)de\,du, \\
A_{2,t}(y) &= \frac{1}{t}\iint \mathbb{I}(h(x_0+t,e,u)\le y)\Big(f_{\varepsilon,\eta\mid X}(e,u \! \mid \!  x_0+t)-f_{\varepsilon,\eta\mid X}(e,u \! \mid \!  x_0)\Big)de\,du.
\end{align*}
Note that $A_{1, t}(y)$ is a structural term that holds $f_{\varepsilon,\eta\mid X}(\cdot \! \mid \!  x)$ fixed at $x_0$, and $A_{2, t}(y)$ is a selection  term that captures changes in $f_{\varepsilon,\eta\mid X}$ with $x$. We proceed with the latter term. Note that by  \cref{smoothness_unobservables} it follows that
\begin{align*}
   \underset{t \to 0}{\text{lim}} \  A_{2,t}(y) &= \iint \underset{t \to 0}{\text{lim}} \left\{ \mathbb{I}(h(x_0+t,e,u)\leq y)\left(\frac{f_{\varepsilon,\eta\mid X}(e,u \! \mid \!  x_0+t)-f_{\varepsilon,\eta\mid X}(e,u \! \mid \!  x_0)}{t} \right) \right\} de\,du  \\
   &= \underbrace{\iint \mathbb{I}(h(x_0,e,u)\leq y)\left( \frac{\partial}{\partial x} f_{\varepsilon,\eta\mid X}(e, u \!  \mid \!  x_0) \right)de\,du}_{:= S(y)}.
\end{align*}
Note that this limit is the same when considering both $t \uparrow 0$ and $t \downarrow 0$. Next, we consider $A_{1, t}(y)$. Define the one-sided derivatives
\begin{align*}
    H^+ &= \frac{\partial}{\partial x} h(x_0^+,\varepsilon,\eta) = b^\prime(x_0^+,\eta)g_1(T_0,x_0,\varepsilon)+g_2(T_0,x_0,\varepsilon) \quad \text{and} \\
    H^- &= \frac{\partial}{\partial x} h(x_0^-,\varepsilon,\eta) = b^\prime(x_0^-,\eta)g_1(T_0,x_0,\varepsilon)+g_2(T_0,x_0,\varepsilon) .
\end{align*}
Then we can write $h(x_0+t,e,u)$ in terms of the one-sided limits $h(x_0+t,e,u) = Y_0 + tH^+ + R_t^+$ where $R_t^+ = Y_t - Y_0 - tH^+$, $Y_t = h(x_0 + t, e, u)$, and analogous definitions are given for $h(x_0+t,e,u) = Y_0 + tH^- + R_t^-$. Importantly, note that the limit as $t \to 0$ of $A_{1, t}(y)$ is identical in form to the limit computations done in the proof of \cref{fuzzy_WZ_lemma1}, where $tH^+$ plays the role of $\delta Z$ (similarly, \cref{theorem_regularity_conditions}$(i)$ plays an analogous role to \cref{regularity_conditions}$(i)$ and \cref{theorem_regularity_conditions}$(ii)$ to \cref{regularity_conditions}$(ii)$). This can easily be seen by plugging in our decompositions of $h(x_0+t,e,u)$; observe that
\begin{align*}
    A_{1,t}(y) &= \frac{1}{t} \Big[ \mathbb{P} \! \left(h(x_0+t,\varepsilon,\eta)\leq y\mid X=x_0\right) - \mathbb{P} \! \left(h(x_0,\varepsilon,\eta)\leq y\mid X=x_0\right) \Big] \\
    &= \frac{1}{t} \Big[ \mathbb{P} \! \left(Y_0\leq y-tH^+ - R_t^+\mid X=x_0\right)- \mathbb{P} \! \left(Y_0\leq y\mid X=x_0 \right) \Big] \\
    &= \frac{1}{t} \Big[ \mathbb{P} \! \left(y<Y_0\leq y-tH^+-R_t^+\mid X=x_0\right) - \mathbb{P} \! \left(y-tH^+-R_t^+<Y_0\leq y\mid X=x_0\right) \Big].
\end{align*}

Thus, repeating the same steps as in the proof of \cref{fuzzy_WZ_lemma1}, it follows that
\begin{align*}
    \lim_{t\downarrow 0} \{ A_{1,t}(y) \} &= -f_{Y\mid X}(y \! \mid \!  x_0) \mathbb{E} \! \left[H^+\mid Y=y,X=x_0\right] \quad \text{and} \\
    \lim_{t\uparrow 0} \{ A_{1,t}(y) \} &= -f_{Y\mid X}(y \! \mid \!  x_0) \mathbb{E} \! \left[H^-\mid Y=y,X=x_0\right].
\end{align*}
Combining the limits for $A_{1,t}$ and $A_{2,t}$, we have the one-sided derivative formulas
\begin{align*}
\frac{\partial}{\partial x}F_{Y\mid X}(y \! \mid \!  x_0^+) &= -f_{Y\mid X}(y \! \mid \!  x_0) \mathbb{E}\! \left[H^+\mid Y=y,X=x_0\right] +S(y),\\
\frac{\partial}{\partial x}F_{Y\mid X}(y \! \mid \!  x_0^-) &= -f_{Y\mid X}(y \! \mid \!  x_0) \mathbb{E} \! \left[H^-\mid Y=y,X=x_0\right] +S(y).
\end{align*}
Thus, taking the difference, it follows that
\begin{align*}
\frac{\partial}{\partial x}F_{Y\mid X}(y \! \mid \!  x_0^+) -\frac{\partial}{\partial x}F_{Y\mid X}(y \! \mid \!  x_0^-) &= -f_{Y\mid X}(y \! \mid \!  x_0) \mathbb{E} \! \left[ H^+-H^-\mid Y=y,X=x_0 \right],
\end{align*}
and furthermore, after plugging in the definitions of $H^+$ and $H^-$,
\begin{align*}
    H^+-H^- = \big(b^\prime(x_0^+,\eta)- b^\prime(x_0^-,\eta)\big)g_1(T_0,x_0,\varepsilon) = \omega(\eta) g_1(T_0,x_0,\varepsilon),
\end{align*}
because the $g_2(T_0,x_0,\varepsilon)$ term cancels. Hence, for every $y$,
\begin{align*}
\frac{\partial}{\partial x}F_{Y\mid X}(y \! \mid \!  x_0^+)
-\frac{\partial}{\partial x}F_{Y\mid X}(y \! \mid \!  x_0^-)
=
-f_{Y\mid X}(y \! \mid \!  x_0)\mathbb{E}\! \left[\omega(\eta)g_1(T_0,x_0,\varepsilon)\mid Y=y,X=x_0\right].
\end{align*}
Finally, we must consider the denominator. Here, we can again apply the same decomposition argument made before. Let \begin{align*}
    \mu_B(x) = \int b(x,u) f_{\eta\mid X}(u \! \mid \!  x)du.
\end{align*}
Then, fix $t>0$ such that $x_0+t\in I_{x_0} \setminus \{x_0\}$. Starting from 
\begin{align*}
    \frac{\mu_B(x_0+t)-\mu_B(x_0)}{t}
    &= \frac{1}{t}\int b(x_0+t,u) f_{\eta\mid X}(u \! \mid \!  x_0+t) du
      - \frac{1}{t}\int b(x_0,u) f_{\eta\mid X}(u \! \mid \! x_0)du
\end{align*}
we can add and subtract $\int b(x_0+t,u) f_{\eta\mid X}(u \! \mid \!  x_0)du$ to obtain the decomposition:
\begin{align*}
    \frac{\mu_B(x_0+t)-\mu_B(x_0)}{t}
    &= \Bigg\{ \frac{1}{t}\int \Big(b(x_0+t,u)-b(x_0,u)\Big) f_{\eta\mid X}(u \! \mid \!  x_0)du \ + \\
    &\phantom{{}={\Bigg\{ }}\frac{1}{t}\int b(x_0+t,u)\Big(f_{\eta\mid X}(u \! \mid \!  x_0+t)-f_{\eta\mid X}(u \! \mid \!  x_0)\Big) du \Bigg\},
\end{align*}
which we term $S_{1, t}$ and $S_{2, t}$, respectively. Thus, following the same arguments as before, under \cref{theorem_regularity_conditions}$(iii)$ and \cref{theorem_regularity_conditions}$(iv)$, it can be shown that the right and left derivatives are given by
\begin{align*}
    \mu^\prime_B(x_0^+) &= \mathbb{E}\left[b^\prime(x_0^+,\eta)\mid X=x_0\right] + \int b(x_0,u) \left(\frac{\partial}{\partial x} f_{\eta\mid X}(u \! \mid \!  x_0) \right) du
\end{align*}
and
\begin{align*}
    \mu^\prime_B(x_0^-) &= \mathbb{E}\left[b^\prime(x_0^-,\eta)\mid X=x_0\right] + \int b(x_0,u) \left(\frac{\partial}{\partial x} f_{\eta\mid X}(u \! \mid \!  x_0) \right) du.
\end{align*}
Therefore, taking the difference yields
\begin{align*}
    \mu_B^\prime(x_0^+)-\mu_B^\prime(x_0^-) &=
\mathbb{E}\big[b^\prime(x_0^+,\eta)-b^\prime(x_0^-,\eta)\mid X=x_0\big] \\
&=\mathbb{E}[\omega(\eta)\mid X=x_0] \\
&=\Delta_B.
\end{align*}
Putting everything together, for all $y$,
\begin{align*}
\mathrm{FDRKD}(y) &= \frac{\frac{\partial}{\partial x}F_{Y\mid X}(y \! \mid \!  x_0^+)-\frac{\partial}{\partial x}F_{Y\mid X}(y \! \mid \!  x_0^-)}
{\mu_B^\prime(x_0^+)-\mu_B^\prime(x_0^-)}\\
&=
\frac{-f_{Y\mid X}(y \! \mid \!  x_0)
\mathbb{E}\big[\omega(\eta)g_1(T_0,x_0,\varepsilon)\mid Y=y,X=x_0\big]}{\Delta_B}\\
&=
-f_{Y\mid X}(y\mid x_0) \mathbb{E} \! \left[\frac{\omega(\eta)}{\Delta_B} g_1(T_0,x_0,\varepsilon) \mid Y=y,X=x_0\right]
=\Delta^F_{Id}(y).
\end{align*}
Therefore,
\begin{align*}
   \Delta_\phi^{F}
=\phi^\prime_{F_{Y\mid X=x_0}}\big(\Delta^F_{Id}(\cdot)\big)
=\phi^\prime_{F_{Y\mid X=x_0}}\big(\mathrm{FDRKD}(\cdot)\big), 
\end{align*}
which proves the theorem.
\end{proof}

\subsection{Proof of \cref{wass_deriv_decomp}} \label{wass_deriv_decomp_proof}

\begin{proof}[\textbf{Proof:}]
    To begin, let $\langle f, g \rangle = \int^1_0 f(u) g(u) du$ denote the inner product on $L^2(0, 1)$. Next, recall that a complete orthogonal basis $\{\phi_k\}^{\infty}_{k=1}$ in $L^2(0, 1)$ admits the generalized Fourier expansion
    \begin{align*}
        f(u) = \sum^\infty_{k=1} a_k \phi_k(u)
    \end{align*}
    for every $f \in L^2(0, 1)$, where the coefficients are given by
    \begin{align*}
        a_k = \frac{\langle f, \phi_k \rangle}{|| \phi_k ||^2_2}.
    \end{align*}
    Next, let $\{P^*_k\}^{\infty}_{k=0}$ be the shifted Legendre polynomials, such that $P^*_k(u) = P_k(2u - 1)$ where $P_k$ is the $k$th Legendre polynomial and $P^*_0 = 1$. Note that
    \begin{align*}
        \int_0^1 P_j^*(u)P_k^*(u)du = \begin{cases}
0, & j\neq k,\\
\frac{1}{2k+1}, & j=k.
\end{cases}.
    \end{align*}
    Now, recall that we defined
    \begin{align*}
         \Delta Q^\prime(u) =  \frac{\frac{\partial}{\partial x}Q_{Y\mid X}(u \! \mid \! x_0^+)-\frac{\partial}{\partial x}Q_{Y\mid X}(u \! \mid \! x_0^-)}
{\mu^\prime_B(x_0^+)-\mu^\prime_B(x_0^-)}.
    \end{align*}
    Then, under the assumption that $\int^1_0 [\Delta Q^\prime(u)]^2 du < \infty$ it follows that $\Delta Q^\prime \in L^2(0, 1)$. Therefore, we may apply the generalized Fourier expansion with $f = \Delta Q^\prime$ to find
    \begin{align*}
        \langle \Delta Q^\prime, P^*_{k-1} \rangle &= \int^1_0 \Delta Q^\prime(u) P^*_{k-1}(u) du \\
        &= \frac{1}{\Delta_B }\int^1_0 \left( \frac{\partial}{\partial x}Q_{Y\mid X}(u \! \mid \! x_0^+)-\frac{\partial}{\partial x}Q_{Y\mid X}(u \! \mid \! x_0^-) \right)  P^*_{k-1}(u) du \\
        &= \frac{1}{\Delta_B}\left(\lambda^\prime_k(x_0^+) - \lambda^\prime_k(x_0^-)\right).
    \end{align*}
    Thus, since $||P^*_{k-1}||^2_2 = (2(k-1) + 1)^{-1} = (2k-1)^{-1}$, it follows that
    \begin{align*}
        \Delta Q^\prime(u) = \sum^\infty_{k=1} (2k - 1) \left( \frac{\lambda^\prime_k(x_0^+) - \lambda^\prime_k(x_0^-)}{\Delta_B}\right) P^*_{k-1}(u).
    \end{align*}
    Finally, applying Parseval's identity for complete orthogonal expansions \citep{Conway1990CourseFunctionalAnalysis} yields
    \begin{align*}
        ||\Delta Q^\prime||^2_2 &= \sum^\infty_{k=1} a^2_k ||P^*_{k-1}||^2_2 \\
        &= \sum^\infty_{k=1} \left[(2k-1)^2 \left(\frac{\lambda^\prime_k(x_0^+) - \lambda^\prime_k(x_0^-)}{\Delta_B} \right)^2 \right] \cdot \frac{1}{2k-1} \\
        &= \sum_{k=1}^\infty (2k-1) \left( \frac{\lambda^\prime_k(x_0^+)-\lambda^\prime_k(x_0^-)}{\mu^\prime_B(x_0^+)-\mu^\prime_B(x_0^-)} \right)^2,
    \end{align*}
    which completes the proof, since by definition $||\Delta Q^\prime||^2_2 = \int^1_0 (\Delta Q^\prime(u))^2 du = (\Psi^\prime_{\mathcal{C}})^2$.
\end{proof}

\end{document}